\newtheorem{remark}{Remark}[section]
\newtheorem{example}{Example}[section]
\newtheorem{proposition}{Proposition}[section]
\newcommand{\sgn}{\mathop{\rm sgn}}
\newcommand{\remove}[1]{}
\title{\bf
A Bertalanffy-Richards growth model perturbed by a time-dependent pattern, statistical analysis
and applications\thanks{This paper is dedicated to the cherished memory of our dear colleague and friend Patricia Rom\'an-Rom\'an.}
}
\author{Antonio \textbf{Di Crescenzo}$^{(1)}$  \and
	Paola \textbf{Paraggio}$^{(2)}$ \and
	Francisco \textbf{Torres-Ruiz}$^{(3)}$$^{(4)}$
	\\
	\normalsize (1)  Dipartimento di Matematica,
	\normalsize Universit\`a di Salerno, 84084 Fisciano (SA), Italy \\
	\normalsize Email:  adicrescenzo@unisa.it \quad ORCID:    0000-0003-4751-7341
	\\
	\normalsize (2)  Dipartimento di Matematica,
	\normalsize Universit\`a di Salerno, 84084 Fisciano (SA), Italy \\
	\normalsize Email:  pparaggio.it@unisa.it \quad ORCID:    0000-0002-3308-7937
	\\
	\normalsize (3) Departamento de Estad\'{\i}stica e I.O., Facultad de Ciencias \\
	\normalsize Universidad de Granada, 18071 Granada, Spain \\
	\normalsize (4) 	Instituto de Matem\'aticas de la Universidad de Granada (IMAG) \\
	\normalsize Calle Ventanilla, 11, 18001 Granada, Spain \\
	\normalsize Email: fdeasis@ugr.es \quad ORCID: 0000-0001-6254-2209
}
\date{}
\begin{document}
	\maketitle
	\begin{abstract}\small
	We analyze a modification of the Richards growth model by introducing a time-dependent perturbation in the growth rate. This modification becomes effective at a special switching time, which represents the first-crossing-time of the Richards growth curve
	 through a given constant boundary. The relevant features of the modified growth model are studied  and compared with those of the original one. A sensitivity analysis on the switching time is also performed. Then, we define two different stochastic processes, i.e.\ a non-homogeneous linear birth-death process and a lognormal diffusion process, such that their means identify to the growth curve under investigation. For the diffusion process, we address the problem of parameters estimation  through the maximum likelihood method. The estimates are obtained via meta-heuristic algorithms (namely, Simulated Annealing and Ant Lion Optimizer). A simulation study to validate the estimation procedure is also presented, together with a real application to oil production in France. Special attention is devoted to the approximation of switching time density, viewed as the first-passage-time density for the lognormal process.
	
\bigskip\noindent
{\em Keywords}: Richards growth model, \and  Non-homogeneous birth-death process, \and Lognormal diffusion process, \and First-passage time, \and Maximum likelihood estimation.
		
\medskip\noindent		
		{\em Mathematical Subject Classification:}  60J70, 
		62M05, 
		60J80. 
	\end{abstract}
	\section{Introduction}
	The Richards growth model is a generalization of the well-known logistic model. The main difference between the two models lays in symmetry properties of the resulting curves. Indeed, the logistic function has a symmetrical pattern with respect to the inflection point, in the sense that the carrying capacity (which is the limit value of the growth function) is twice the value of the curve in the inflection point. This  behavior may not be particularly appropriate to describe some real phenomena which show asymmetrical growth patterns. For this reason, Richards in 1959 introduced a new growth curve that was called Richards growth curve (cf.\ Richards (1959) \cite{Richards1959}). Some authors refer to it as Bertalanffy-Richards growth curve, since Richards extended previous works of Bertalanffy regarding plants growth. The flexibility of the afore-mentioned curve is another great advantage of the model. Indeed, for some particular choices of the involved parameters the most known growth functions (such as the Malthusian, the logistic and the Gompertz) can be obtained from the Richards curve. Nevertheless, if one refers to the classical representation of the curve, it is easy to note that the carrying capacity does not depend explicitly on the initial state. This feature may not be so realistic, since it is intuitive to believe that the maximum achievable value of a population size is influenced by the initial size. \textcolor{blue}{For this reason, a reformulation of the model that includes the initial size in the expression of the carrying capacity is useful}.
	
	The flexibility of the Richards deterministic model is also demonstrated by several applications in different fields which range from agricultural studies (such as in Hiroshima (2007) \cite{Hiroshima2007} and in Gerhard and Moltchanova (2022) \cite{Gerhard2022}), to zoological studies (see for example Matis \textit{et al.}\ (2011) \cite{Matisetal2011}, Nahashon (2006) \cite{Nahashonetal2006}, K\"ohn \textit{et al.}\ (2007) \cite{Kohnetal2007}, Lv \textit{et al.}\ (2007) \cite{Lvetal2007} and Russo \textit{et al.}\ (2009) \cite{Russoetal2009}) and to health sciences (cf.\  Mac\^edo \textit{et al.}\  (2021) \cite{Macedoetal2021}, Smirnova \textit{et al.}\ (2022)  \cite{Smirnova} and Wang \textit{et al.}\ (2012) \cite{Wangetal2012}).
	
	In order to include the randomness (which is typical in the phenomenological reality) in the description of the model, it is useful to introduce a stochastic counterpart of the considered growth curve. Typically, researchers define two classes of stochastic processes related to growth curves, that are the birth-death processes (e.g.\ Asadi \textit{et al.}\ (2020) \cite{Asadietal2020}) whose state-space is given by a discrete set and the diffusion processes (e.g.\ Rom\'an-Rom\'an  \textit{et al.}\ (2018) \cite{RomanTorres2018}) whose state-space is  an interval of the real line. Both processes are usually constructed in such a way that the mean \textcolor{blue}{identifies with} the corresponding deterministic curve. \textcolor{blue}{One of the main features} of these ``dynamic'' models lays in their more accurate predictive capability with respect to the ``static'' deterministic models. Clearly, to perform real applications of these models, \textcolor{blue}{a good estimation of the involved parameters is required} (see Dey \textit{et al.}\ (2019) \cite{Dey2019} where the problem of estimation of the Bertalanffy growth model is addressed). When the likelihood function is available in closed form, one can obtain the maximum likelihood estimates directly. However, this reasoning may lead to complex systems of non-linear equations which cannot be solved analytically. For this reason, as done elsewhere (see for example Di Crescenzo \textit{et al.}\ (2022) \cite{DiCrescenzoetal2022}, Hole \textit{et al.}\ (2017) \cite{Holeetal2017}, Rom\'an-Rom\'an \textit{et al.}\ (2015) \cite{Romanetal2015} and Vera \textit{et al.}\ (2008) \cite{Vera2008}) we can adopt meta-heuristic optimization methods. In particular, the Simulated Annealing algorithm allows to obtain satisfactory estimates. In general, it is  convenient making use of gradient free algorithms, especially when the expression of the derivative of the likelihood is intricate. In any case, since the parametric space $\Theta$ (which is the set containing all the possible values of the parameters) is in principle unbounded and continuous, it is better to provide a restriction of $\Theta$ based on the knowledge of the curve. In this way, we avoid unnecessary calculations and a long running time of execution of the algorithms.
	
	The Richards growth model, as other competing models in this area, does not take into account external factors which may modify the growth rate from a certain time instant. For example, we may think about the oil production of a country. When the amount of produced oil decreases and crosses a fixed critical threshold, the government may decide to support new explorations in order to increase the quantity of \textcolor{blue}{ extracted resources}. As a further real example, we can refer to the evolution of some diseases in patients. In this case, when the monitoring parameters reach critical values, the medical team may consider to introduce a new therapy, whose effects is the stabilization of critical parameters and the reduction of the evolution of the disease. Inspired by these motivations, we would like to investigate the possible modifications of the classical Richards growth model in order to take into account the perturbations of the growth rate due to external factors that are effective from a certain moment on. Such modifications may lead to multi-sigmoidal growths, namely growths with multiple inflections. On the same line, a multi-sigmoidal version of the logistic model has been introduced in Di Crescenzo \textit{et al.}\ (2022) \cite{DiCrescenzoetal2022} by increasing the number of the involved parameters. Instead, in the present work, the proposed modified curve and the classical one possess the same number of parameters. Indeed, the modification  affects the time-dependence of one parameter.  A similar study has been conducted recently to estimate the effect of a therapy on tumor dynamics described by Gompertz diffusion processes by Albano \textit{et al.}\ (2015) \cite{Albanoetal2015}.
	The resulting modified model will be  analyzed both from a deterministic and stochastic point of view. The problem of parameters estimation will be also addressed. A strategy to obtain the maximum likelihood estimates of the parameters involved in the definition of the corresponding diffusion process will be proposed.
	
	\textcolor{blue}{
	The study introduces a time-dependent perturbation into the Richards growth model, offering an approach to take into account  external factors affecting growth rates, which is effectively demonstrated through both theoretical analysis and practical applications, in particulare the oil production in France. One of the strengths of this research lies in its comprehensive statistical analysis, employing meta-heuristic algorithms like Simulated Annealing and Ant Lion Optimizer for parameters estimation, thus ensuring robust results validated by simulation studies. The structure of the paper is organized as follows.}
	In Section \ref{Sect1}, the classical deterministic Richards growth model is introduced. Several features of the model are analyzed, such as the limit behavior with respect to the parameters, the inflection point, the approximation of the curve with a straight line near the inflection point and the first-crossing-time problem.  In Section \ref{Sect2}, we consider a modification of the classical Richards growth model, by adding a time dependent function to one of the relevant parameters. Such modification, whose consequences are visible after a certain time, called critical or switching time,  affects the growth rate which becomes, under suitable conditions, greater than the previous one. The resulting deterministic curve is then studied and a sensitivity analysis on the switching time is also considered. In Sections \ref{BDproc} and \ref{DiffProc}, the stochastic counterparts of the proposed models are introduced. In detail, we first consider two different special time-inhomogeneous birth-death processes following the line of Majee \textit{et al.}\ (2022) \cite{Majeeetal2022}. Sufficient and necessary conditions are provided in order to have a mean of the birth-death processes of modified Richards type. Moreover, in order to have a more manageable stochastic counterpart,  we define two lognormal diffusion processes whose means are of Richards and modified Richards type, respectively. Some comparisons between the two diffusion processes are also provided. In particular, one of them turns out to be very useful for the estimation of the modified model. The problem of parameters estimation is addressed in Section \ref{parest}. A procedure to estimate the modified process is described. The relevant parameters are estimated by means of maximum likelihood method since an explicit expression of the log-likelihood function is available. This function is maximized via meta-heuristic algorithms, in particular Simulated Annealing and Ant Lion Optimizer are used in Section \ref{SA}. A simulation study to validate the \textcolor{blue}{described} procedures ends the estimation study in Section \ref{Simulation}. Finally, an application to real data regarding oil production in France is studied in Section \ref{Sect9}.

	\section{The Richards growth curve}\label{Sect1}
	The Bertalanffy-Richards growth curve $x_\theta(t)$, $t\ge t_0$, is given by (cf.\ Rom\'an-Rom\'an \textit{et al.}\ (2015) \cite{Romanetal2015})
	\begin{equation}\label{BR}
		x_\theta(t)=x_0\left(\frac{\eta + k^{t_0}}{\eta+k^t}\right)^q, \qquad t\ge t_0,
	\end{equation}
	where $\theta:=(q,k,\eta)^T$ is the vector of the parameters with $q>0$, $0<k<1$ and $\eta >0$. The function \eqref{BR}  is a generalization of the logistic growth curve which can be obtained for $q=1$.
	The main difference between Eq.\ \eqref{BR} and the logistic model lays in the behavior of the curve at the inflection point. Indeed, the value at inflection point of the logistic model equals a half of the carrying capacity (namely the asymptotic value of a growth curve) whereas the one of Bertalanffy-Richards is equal to a (possibly) different fraction of the carrying capacity.
	In particular, the carrying capacity of  \eqref{BR} depends explicitly on the initial value $x_0$. Indeed, it is given by
	\begin{equation}\label{carcap}
		\mathcal K_\theta:=\lim_{t\to+\infty}x_\theta(t)=x_0\left(1+\frac{k^{t_0}}{\eta}\right)^q.
	\end{equation}
				The function \eqref{BR}, is the solution of a particular inhomogeneous Malthusian equation having a time-dependent growth rate $h_\theta(t)$, i.e.
				\begin{equation}\label{MalEq}
					\frac{\textrm{d}}{\textrm{d}t}x_\theta(t)=h_{\theta}(t) x_\theta(t), \quad t\ge t_0,\qquad x_\theta(t_0)=x_0,
				\end{equation}
				where
				\begin{equation}\label{hclas}
					h_{\theta}(t):=q\frac{k^t|\log k|}{\eta+ k^t}.
				\end{equation}
				
				It is not hard to see that the growth rate $h_\theta(t)$ is a positive and decreasing function for $t\ge t_0$.
				The function $x_\theta(t)$ is also the solution of a particular differential equation in which the time dependence of the right-hand-side is expressed only through $x_\theta(t)$, i.e.
				\begin{equation}\label{Eqdiffrip}
					\frac{\textrm{d}}{\textrm{d}t}x_\theta(t)=q \mid\log k\mid x_\theta(t)\left[1-\frac{\eta}{\eta+k^{t_0}}\left(\frac{x_\theta(t)}{x_0}\right)^{1/q}\right],\qquad t\ge t_0.
				\end{equation}
				Note that  when $\eta\to+\infty$, $q\to+\infty$ and $\left(\frac{\eta}{\eta+k^{t_0}}\right)^q\to\frac{x_0}{\mathcal K_\theta}$, Eq.\ \eqref{Eqdiffrip} becomes
				\begin{equation*}
					\frac{\textrm{d}}{\textrm{d}t}x_\theta(t)=\mid \log k \mid x_\theta(t) \log\frac{\mathcal K_\theta}{x_\theta(t)},\qquad t\ge t_0,
				\end{equation*}
				which is a particular Gompertz equation having a carrying capacity given by $\mathcal K_\theta$.
				\begin{remark}\label{rem1.1}
					Without loss of generality, it is possible to take $t_0=0$. Indeed, by setting $t':=t-t_0$, we get a model of the same type of \eqref{BR}. Precisely,
					$$
					x_\theta(t)=x_0\left(\frac{\eta +k^{t_0}}{\eta +k^t}\right)^q=x_0\left(\frac{\widehat\eta +1}{\widehat\eta +k^{t'}}\right)^q=:y_\theta(t'),\qquad\qquad t'\ge 0,
					$$
					where $\widehat\eta:=\eta/k^{t_0}$. We point out that the parameter $k$ is the same for both the models.
				\end{remark}
				In Table \ref{Tab:Table1}, we provide some limit behaviors of the proposed growth model (in agreement with the results given by Albano \textit{et al.}\ (2022) \cite{Albanoetal2022}). Note that for $\eta\to 0^+$, the Richards curve $x(t)$ converges to a Malthusian growth curve.
				\begin{table}[h]
					\caption{The growth equation, the growth function and the carrying capacity of the model \eqref{MalEq} under the specified limit conditions, where $x_\theta'(t)=\displaystyle\frac{d}{dt}x_\theta(t)$.}
					\label{Tab:Table1}
					\centering
					\begin{tabular}{lll}
						Case no.  &Limit conditions  & Growth equation   		  \\ \hline
						$1$	&$k\to0^+$         & $x_\theta'(t)=0$ \\
						$2$ &$k\to1^-$         & $x_\theta'(t)=0$                  \\
						$3$ &$\eta\to0^+$      & $x_\theta'(t)=q\mid\log k\mid x_\theta(t)$                   \\
						$4$ &$q\to 0^+$        & $x_\theta'(t)=0$                              \\
						$5$ &$q\to 1$		&$x_\theta'(t)=\frac{k^t|\log k|}{\eta+ k^t}x_\theta(t)$	\\
						$6$   &$\eta,q\to+\infty,\; \left(\frac{\eta}{\eta+k^{t_0}}\right)^q\to\frac{x_0}{\mathcal K}$  & $x_\theta'(t)=\mid \log k\mid x_\theta(t)\log\frac{\mathcal K_\theta}{x_\theta(t)}$
					\end{tabular}
					
					\vspace{0.5cm}
					
					\begin{tabular}{lll}
						Case no.    		& Growth function            &Carrying capacity  \\ \hline
						$1$                            & $x_\theta(t)=x_0$	&$x_0$ \\
						$2$                               & $x_\theta(t)=x_0$                   &$x_0$   \\
						$3$                   & $x_\theta(t)=x_0k^{-qt}$             &$+\infty$     \\
						$4$                               & $x_\theta(t)=x_0$     &$x_0$          \\
						$5$		& $x_\theta(t)=x_0\left(\frac{\eta + k^{t_0}}{\eta+k^t}\right)$ &$x_0\left(\frac{\eta+k^{t_0}}{\eta}\right)$\\
						$6$                    &  $x_\theta(t)=\mathcal K_\theta \exp[-\log(\mathcal K_\theta/x_0)e^{\log k (t-t_0)}]$   &$\mathcal K_\theta$
					\end{tabular}
				\end{table}
				
				\subsection{Inflection point and related quantities}
				The inflection point $t_I$ of the Bertalanffy-Richards curve \eqref{BR} has an explicit expression given by
				\begin{equation}\label{infl}
					t_{I}=\frac{\log(\eta/q)}{\log{k}},\qquad x_\theta(t_I)=\mathcal K_\theta \left(\frac{q}{1+q}\right)
					^q.
				\end{equation}
				See Figure \ref{fig:Figure3} for some plots of the Bertalanffy-Richards growth curve.
				\begin{figure}[h]
					\centering
					\subfigure[]{\includegraphics[scale=0.6]{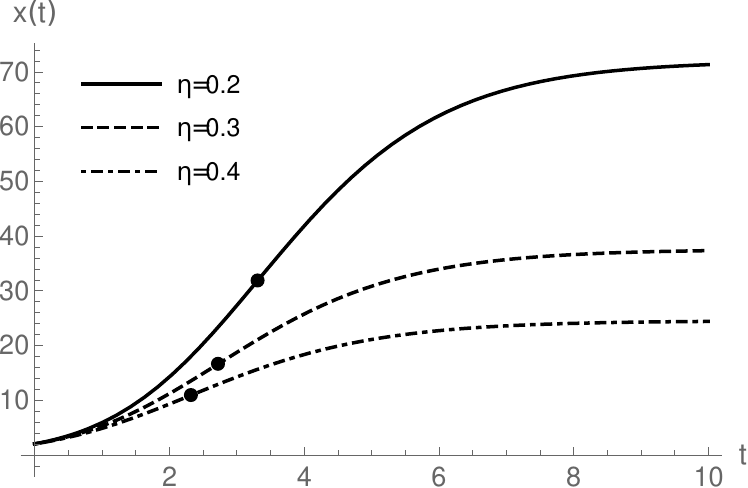}}
					\subfigure[]{\includegraphics[scale=0.6]{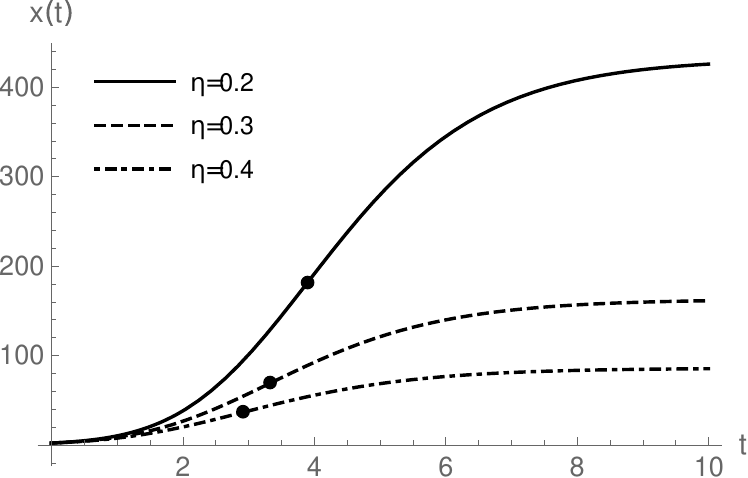}}
					\caption{The function $x_\theta(t)$  and the inflection point $t_I$ (dot) given in Eq.\ \eqref{infl} with $t_0=0$, $x_0=2$, $k=0.5$, $\eta=0.2,0.3,0.4$ and (a) $q=2$ and (b) $q=3$. }
					\label{fig:Figure3}
				\end{figure}
				
				From Eq.\ \eqref{infl}, we note that the \textcolor{blue}{ratio between $x_\theta(t_I)$ and the carrying capacity $\eqref{carcap}$} depends only on $q$, and it  is given by $\left(\frac{q}{1+q}\right)^q$. In the logistic case (obtained for $q=1$), this fraction is equal to $1/2$. Moreover, one has $t_I>t_0$ if and only if  $\eta<qk^{t_0}$.
				Now, we can provide an interpretation of the relevant parameters of the model, i.e. $\theta=(q,k,\eta)^T$. It turns out that
				\begin{enumerate}
					\item[$\bullet$] The parameter $q$ affects the ratio between the carrying capacity $\mathcal K_\theta$ of the model and the inflection point $x_\theta(t_I)$. Indeed, the ratio $\mathcal K_\theta/x_\theta(t_I)=\displaystyle\left(1+1/q\right)^q$ is increasing in $q>0$.
					\item[$\bullet$] The parameter $\eta$ represents a measure of the distance between $x_\theta(t)$ and the exponential function $x_0k^{-qt}$. Indeed, the bigger $\eta$ is, the further the function $x_\theta(t)$ is from being the exponential function $x_0k^{-qt}$ (see also Figure \ref{Fig:Figure24gennaio}). In particular, as shown in Table \ref{tab:Table1}, for $\eta\to 0^+$ the function $x_\theta(t)$ converges to $x_0k^{-qt}$.
					\item[$\bullet$] The parameter $k$ affects the value of the inflection point $t_I$. Indeed, for $\eta<qk^{t_0}$ the value of $t_I$ increases for increasing $k$.
				\end{enumerate}
				
				\begin{figure}[h]
					\centering
					\subfigure[]{\includegraphics[scale=0.6]{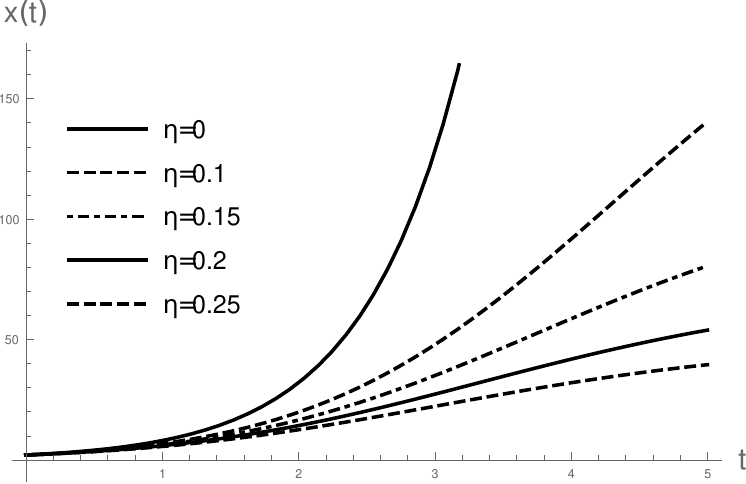}}
					\subfigure[]{\includegraphics[scale=0.6]{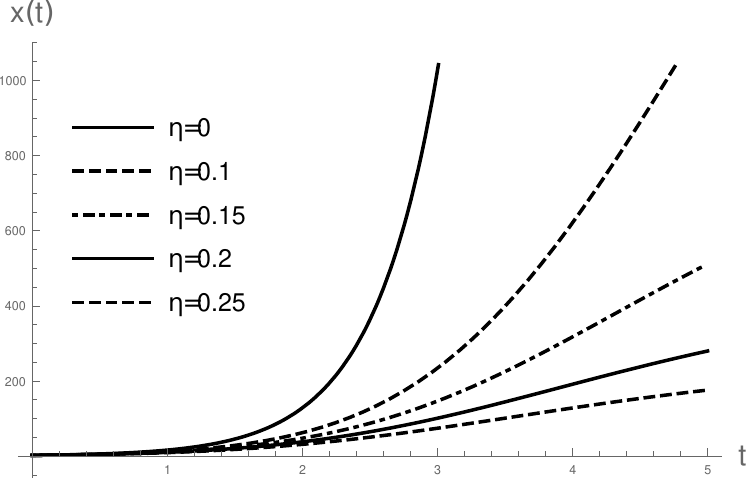}}
					\caption{The function $x_\theta(t)$ with $t_0=0$, $x_0=2$, $k=0.5$, (a) $q=2$ and (b) $q=3$ for $\eta=0,0.1,0.15,0.2,0.25$.}
					\label{Fig:Figure24gennaio}
				\end{figure}
				
				As already done in other similar researches (see, for example, Asadi \textit{et al.}\ (2020) \cite{Asadietal2020} and Di Crescenzo \textit{et al.} (2022) \cite{DiCrescenzoetal2022}), we can analyze the behavior of the growth curve $x_\theta(t)$ around the inflection point $t_I$ by using a linear approximation. With this aim, we consider the maximum specific growth rate, denoted by $\mu$, which is the slope of the line tangent to $x_\theta(t)$ in $t_I$ and the lag time $\lambda$ which is the intersection between the $x$-axis and the tangent. For the Bertalanffy-Richards curve, these quantities are given by
				$$
				\mu=(\eta+k^{t_0})^q \, \frac{x_0 \mid \log{k}\mid}{\eta^q}\left(\frac{q}{q+1}\right)^{q+1}>0,\qquad\qquad \lambda=t_I-\frac{1+1/q}{|\log{k}|}<t_I.
				$$
				Note that if $k\to 1$ or $q\to 0$, then $\mu$ tends to $0$  and thus the tangent line tends to be parallel to the $x$-axis. Indeed, in these limit cases the curve $x_\theta(t)$ degenerates into a horizontal line.
				In Figure \ref{fig:Figure4}, we provide the plot of the line tangent to the Bertalanffy-Richards curve at the inflection time instant $t_I$.
				\begin{figure}[h]
					\centering
					\subfigure[]{\includegraphics[scale=0.6]{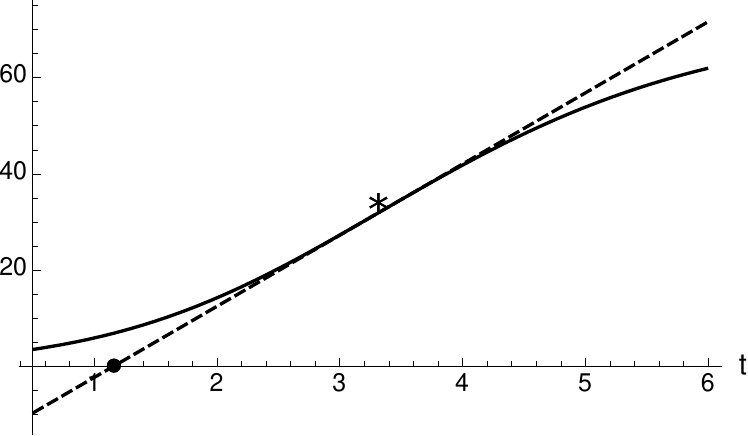}}
					\subfigure[]{\includegraphics[scale=0.6]{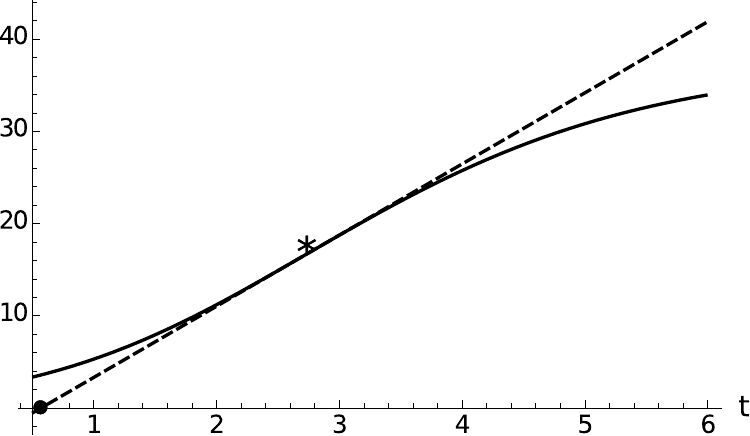}}
					\caption{The function $x_\theta(t)$ (solid line), the line tangent to the curve (dashed line), the inflection point (star point) and the lag time (dot point) in $t_I$ (dot) with $t_0=0$, $x_0=2$, $k=0.5$, $q=2$ and (a) $\eta=0.2$  and (b) $\eta=0.3$. In (a) $\mu\simeq14.79$ and $\lambda\simeq1.16$. In (b) $\mu\simeq7.71$ and $\lambda\simeq0.57$.}
					\label{fig:Figure4}
				\end{figure}
				
				\subsection{Threshold crossing problem}
				In this section, we consider the problem of first crossing time. Let $B(t)$ be a time-dependent boundary given by
				$$
				B(t):=(1+p)x_\theta(t),\qquad t\ge t_0, \qquad p>0.
				$$
				The corresponding first crossing time $\theta_t$ for a fixed time instant $t$ is then defined as follows
				$$
				\theta_t:=\min\{s\ge t_0: x_\theta(s)=B(t)\},
				$$
				and it represents the first instant in which the growth curve $x_\theta(t)$ crosses the threshold $B(t)$. Clearly, if $B(t)$ is greater than the carrying capacity, the set $\{s\ge t_0: x_\theta(s)=B(t)\}$ is empty and then $\theta_t:=+\infty$. Otherwise, $\theta_t$ is finite and it can be explicitly determined. The expression one gets is the following
				$$
				\theta_t=\frac{1}{\log k}\log \left(\frac{\eta+k^t}{(1+p)^{1/q}}-\eta\right)=t_I+\frac{\log\left(\frac{q(\eta+k^t)}{\eta(1+p)^{1/q}}-q\right)}{\log k}.
				$$
				Note that when $t=t_0$, then $B(t_0)=(1+p)x_0>x_0$ and
				$$
				\theta_{t_0}=\frac{\log\left(\frac{\eta+k^{t_0}}{(1+p)^{1/q}}-\eta\right)}{\log k}.
				$$
				Moreover, when $t=t_I$, then the boundary is
				\begin{equation}\label{Sdef}
					S:=B(t_I)=(1+p)x_\theta(t_I)
				\end{equation}
				and the corresponding first crossing time is given by
				\begin{equation}\label{tstar}
					t^*:=\theta_{t_I}=t_I+\frac{\log\left ( \frac{1+q}{(1+p)^{1/q}}-q \right )}{\log k}>t_I.
				\end{equation}
				
				\section{A modified model}\label{Sect2}
				Once the parameters of the model are set, the evolution of the growth curve follows the resulting pattern and no further modifications are considered. On the other hand, there are several real situations in which the growth rate may be modified due to the presence of external factors. Usually, the effects produced by external modifications become significant starting from a critical time. As an example, think about the case of oil production of a country: when the quantity of produced oil is lower than a fixed threshold, the goverment may decide to support new explorations aiming to increase the amount of production. \\
				For this reason we consider a modified version of the classical Bertalanffy-Richards growth curve by introducing a time-varying term into the growth rate  $h_{\theta}(t)$ of the Malthusian Eq.\ \eqref{MalEq}. Specifically, in Eq.\ \eqref{BR} we substitute the parameter $q$ with a time-dependent one defined as
				\begin{equation}\label{qtilde}	
					\widetilde q (t):=\begin{cases}
						q,\qquad &t_0\le t\le t^*\\
						q+C(t),\qquad &t>t^*,
					\end{cases}
				\end{equation}
				where $t^*\ge t_0$  and $C(t)$ is a continuous, bounded and positive function with $\displaystyle \lim_{t\to t^*}C(t)=0$.
				Hence, the growth rate $\widetilde h_{\theta}(t)$ of the new model is given by
				\begin{equation}\label{hgen}
					\widetilde h_{\theta}(t)=\frac{\widetilde q(t)}{q}h_\theta(t)=\widetilde q(t)\frac{k^t \mid \log k \mid}{\eta+k^t},\qquad t\ge t_0.
				\end{equation}
				From now on, we consider $t^*>t_I$ as the time instant defined in Eq.\ \eqref{tstar}.
				Note that the conditions verified by the function $C(t)$ ensure the continuity of the function $\widetilde q(t)$ and consequently of the function $\widetilde h_{\theta}(t)$.
				It can be also noticed that the adoption of the new term given in \eqref{qtilde} increases the growth rate, indeed
				$$
				\widetilde h_{\theta}(t)-h_{\theta}(t)=C(t)\frac{k^t \mid \log{k}\mid }{\eta+k^t}>0, \qquad t\ge t^*.
				$$ 	
				Let us denote by $\widetilde x_\theta(t)$ the corresponding growth curve, which is the solution of Eq.\ \eqref{MalEq} where the function $h_{\theta}(t)$ is replaced by $\widetilde h_{\theta}(t)$. In particular, we have that for any $t\ge t_0$	
				\textcolor{blue}{
				\begin{equation}\label{GenBR}
					\begin{aligned}
						\widetilde x_\theta(t)&=x_\theta(t)\exp\left(\int_{t^*}^{\max\{t,t^*\}} C(s)\frac{k^s\mid \log{k}\mid }{\eta+k^s}\mathrm d s\right)\\
						&=\begin{cases}
							x_\theta(t),&\; t_0\le t<t^*\\
							x_\theta(t)\exp\left(\displaystyle\int_{t^*}^{t} C(s)\frac{k^s|\log k|}{\eta+k^s}\mathrm {d}s\right), &\; t\ge t^*.
						\end{cases}
					\end{aligned}
				\end{equation}
			}
				We remark that $\widetilde x_\theta(t)\ge x_\theta(t)$ for any $t\ge t_0$.
				The carrying capacity of the new growth curve \eqref{GenBR} is given by
				\begin{equation*}
					\widetilde{ \mathcal K_\theta}=\mathcal K_\theta \exp\left(\int_{t^*}^{+\infty}C(s)\frac{k^s|\log k|}{\eta+k^s}\mathrm{d}s\right)>\mathcal K_\theta,
				\end{equation*}
				where $\mathcal K_\theta$ is given in Eq.\ \eqref{carcap}. 	
				\begin{example}
					Figure \ref{fig:Figure1}(a) provides some plots of $C(t)$, where
					\begin{equation}\label{Cta}
						C(t)=\left(\frac{1}{\eta+k^t}\right)^m-\left(\frac{1}{\eta+k^{t^*}}\right)^m, \qquad t\ge t^*,
					\end{equation}
					with $m>0$. In this case the function $C(t)$ has a downward concativity, for any $t\ge t^*$.
					\begin{figure}[h]
						\centering
						\subfigure[]{\includegraphics[scale=0.55]{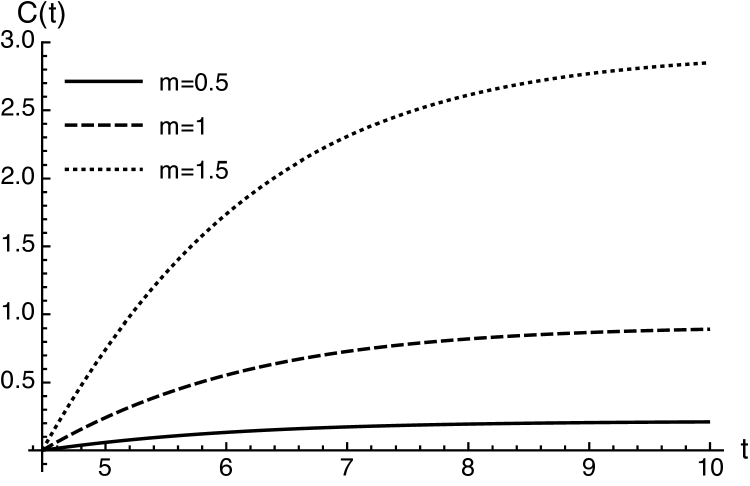}}\qquad
						\subfigure[]{\includegraphics[scale=0.55]{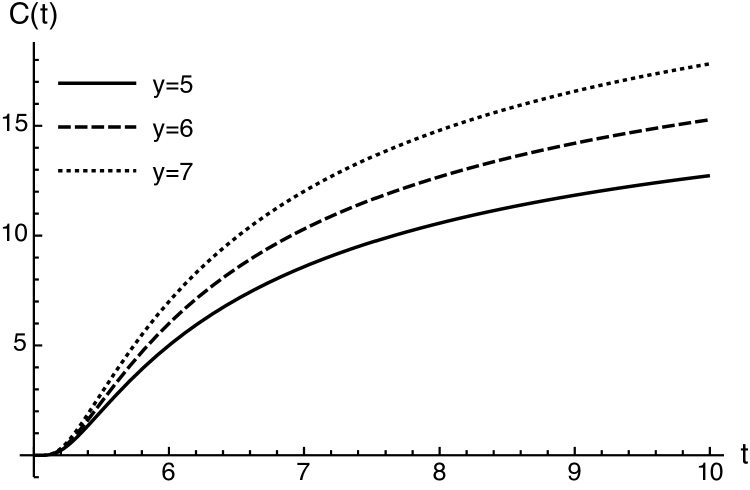}}
						\caption{(a) The function $C(t)$  given in Eq.\ \eqref{Cta} with  $m=0.5,1,1.5$, $p=0.3$, $t_0=0$, $x_0=2$. (b) The function $C(t)$ given in Eq.\ \eqref{Ctb} with $y=5,6,7$, $\alpha=1$, $\beta=0.75$, $p=0.8$. In both cases, we have  $k=0.5$, $\eta=0.2$, $q=2$. }
						\label{fig:Figure1}
					\end{figure}
					In Figure \ref{fig:Figure2}(a) the behavior of the corresponding modified model $\widetilde x_\theta(t)$ is shown.\\
					\begin{figure}[h]
						\centering
						\subfigure[]{\includegraphics[scale=0.55]{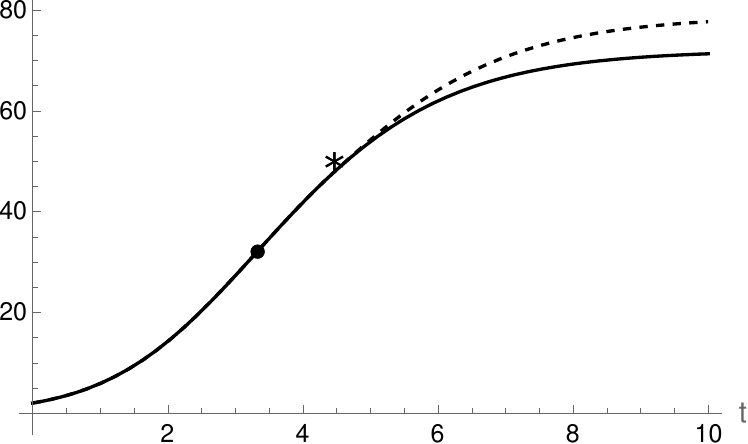}}\quad
						\subfigure[]{\includegraphics[scale=0.55]{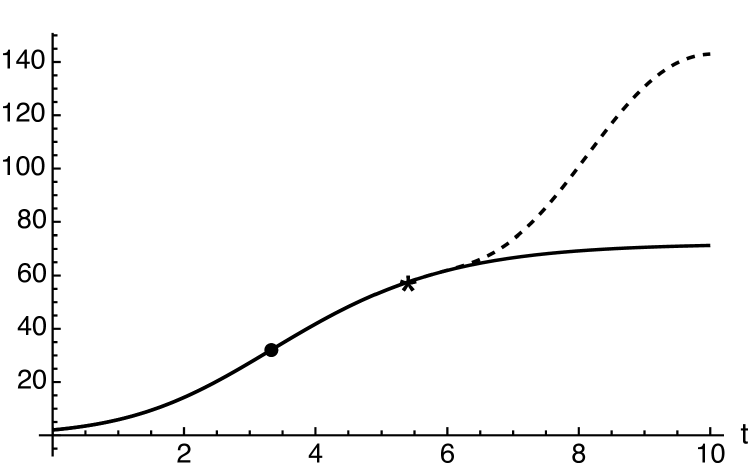}}\\
						\caption{The Bertalanffy-Richards curve (solid line) and the modified curve (dashed line) (a) for $C(t)$ given in Eq.\ \eqref{Cta} with $p=0.3$, $m=1$ and (b) for $C(t)$ given in Eq.\ \eqref{Ctb} with $p=0.8$, $y=2$, $\alpha=4$, $\beta=0.5$. In both cases $\eta=0.2$, $t_0=0$, $x_0=2$, $k=0.5$, $q=2$. The dots and the star points represent the inflection point at $t_I$ and the point at $t^*$, respectively.}
						\label{fig:Figure2}
					\end{figure}
					A different case is shown in Figure \ref{fig:Figure1}(b) where
					\begin{equation}\label{Ctb}
						C(t)=y\exp\left(\frac{\alpha}{\beta}\left[1-(t-t^*)^{-\beta}\right]\right),\qquad t>t^*,
					\end{equation}
					with $y,\alpha,\beta>0$. In this case, the function $C(t)$ has a sigmoidal behavior.
					Figure \ref{fig:Figure2}(b) illustrates the behavior of the corresponding modified model $\widetilde x_\theta(t)$, which is multi-sigmoidal with multiple inflections.
				\end{example}
				In general, since $C(t)$ is a bounded function, one has $\widetilde{\mathcal K_\theta}<+\infty$. The difference between the original curve $x_\theta(t)$ and the modified one $\widetilde x_\theta(t)$ becomes more and more relevant after the time instant $t^*$. Indeed, for $t_0<t\le t^*$ it results $\widetilde x_\theta(t)=x_\theta(t)$, whereas for $t>t^*$
				$$
				\frac{\textrm{d}}{\textrm{d}t}[\widetilde x_\theta(t)-x_\theta(t)]=x_\theta(t)\left[\widetilde h_{\theta}(t)\exp\left(\displaystyle \int_{t^*}^t C(s)\frac{k^s|\log k|}{\eta+k^s}\mathrm{d}s\right)- h_\theta (t)\right]>0.
				$$
				\begin{remark}
					Similarly as Remark \ref{rem1.1}, also for the modified model \eqref{GenBR} one can take $t_0=0$ without loss of generality. Indeed, by setting $\widehat C(t'):=C(t)$ with $t':=t-t_0$, $\widehat {t}^*:=t^*-t_0$ and $\widehat \eta:=\eta/k^{t_0}$ we have
					\begin{equation*}
						\begin{aligned}
							\widetilde x_\theta(t)&=
							x_0\left(\frac{\eta +k^{t_0}}{\eta+k^{t}}\right)^q \exp\left(\int_{t^*}^t C(s)\frac{k^s |\log k|}{\eta+k^s}\mathrm{d}s\right)\\
							&=x_0\left(\frac{\widehat\eta +1}{\widehat\eta+k^{t'}}\right)^q \exp\left(\int_{\widehat{t}^*}^{t'} \widehat C(s')\frac{k^{s'} |\log k|}{\widehat \eta+k^{s'}}\mathrm{d}s'\right)=:\widehat{y}_\theta (t'),\qquad\qquad t'\ge 0.\\
						\end{aligned}
					\end{equation*}
					It is easy to note that the curve $\widehat y_\theta(t')$ is of the same type of \eqref{GenBR}.
				\end{remark}	
				
				\subsection{Sensitivity analysis}\label{sensan}
				In this section we analyze the consequences of a perturbation on the time instant $t^*$ involved in the definition of the modified growth curve $\widetilde x_\theta(t)$.  Hereafter, we will denote the modified curve by $\widetilde x_\theta^{t^*}$ and the modified function $C_{t^*}(t)$ to point out their dependence on the time instant $t^*$. To perform the sentitivity analysis, we expand $\widetilde x_\theta^{t^*+\varepsilon}$ in a Taylor series centered in $t^*$ with $\varepsilon>0$. Therefore, one has
				\begin{equation*}
					\widetilde x_\theta^{t^*+\varepsilon}-\widetilde x_\theta^{t^*}= \varepsilon \cdot \frac{\partial}{\partial t^*} \widetilde x_\theta^{t^*}+o(\varepsilon)
				\end{equation*}
				with $\displaystyle\lim_{\varepsilon\to 0^+} \frac{o(\varepsilon)}{\varepsilon}=0$. Since
				\begin{equation*}
					\frac{\partial}{\partial t^*} \widetilde x_\theta^{t^*}(t)=\widetilde x_\theta^{t^*}(t)\cdot \int_{t^*}^t\left(\frac{\partial}{\partial t^*}C_{t^*}(s)\right)\frac{k^s \mid \log k \mid }{\eta+k^s}\mathrm d s, \qquad t>t^*,
				\end{equation*}
				we have that
				\begin{equation*}
					\sgn\left(\widetilde x_\theta^{t^*+\varepsilon}-\widetilde x_\theta^{t^*}\right)=\sgn\left(\int_{t^*}^t\left(\frac{\partial}{\partial t^*}C_{t^*}(s)\right)\frac{k^s\mid \log k\mid }{\eta+k^s}\textrm{d}s\right),\qquad  t>t^*.
				\end{equation*}

				Whereas, $\widetilde x_\theta^{t^*+\varepsilon}-\widetilde x_\theta^{t^*}=0$ for $t\le t^*$.
				As an example, in Figure \ref{fig:Figure5} we plot the modified curve $\widetilde x_\theta(t)$ and the effect of the perturbation for $\varepsilon=0.5$ and for the function $C(t)$ given in Eq.\ \eqref{Cta}. The curve $\widetilde x_\theta^{t^*+\varepsilon}$ goes down as the parameter $\varepsilon$ increases since the derivative of $C(t)$ with respect to $t^*$ is negative.
				\begin{figure}[h]
					\centering
					\subfigure[]{\includegraphics[scale=0.55]{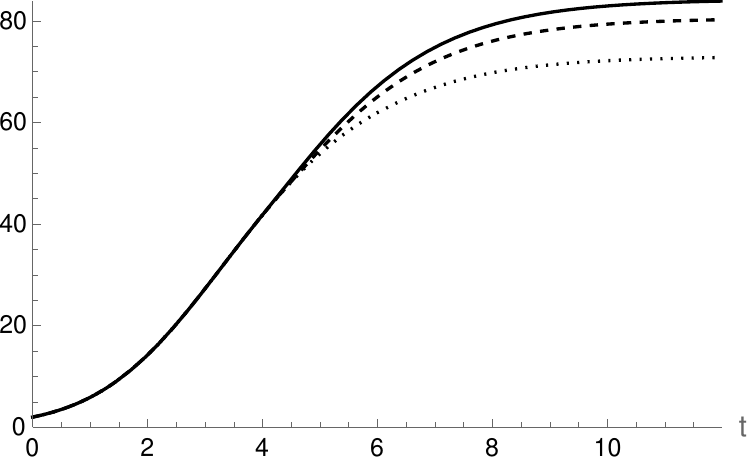}}\quad
					\subfigure[]{\includegraphics[scale=0.55]{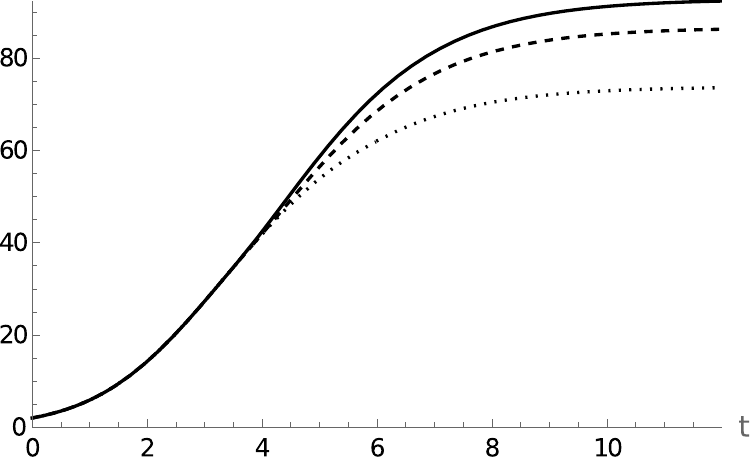}}\\
					\caption{The modified Bertalanffy-Richards curve $\widetilde x_\theta^{t^*}$ (solid line) and the perturbed ones  $\widetilde x_\theta^{t^*+\varepsilon}$ (dashed and dotted lines) for  $x_0=2$, $\eta=0.2$, $k=0.5$, $t_0=0$, $q=2$, $m=1$, (a) $p=0.3$ and (b) $p=0.1$ with $\varepsilon=0.3$ (dashed) and $\varepsilon=2$ (dotted).}
					\label{fig:Figure5}
				\end{figure}
				\section{A special linear birth-death process}\label{BDproc}
				In this section, we introduce an evolutionary model based on the birth-death process introduced in Majee \textit{et al.}\ (2022) \cite{Majeeetal2022}. In more detail, we suppose that the population size can be described by an inhomogeneous linear birth-death process $\{X(t);\,t\ge 0\}$ with state-space $\mathbb N_0=\{0,1,2,\dots\}$ where the state $0$ is an absorbing endpoint. The existence of an absorbing endpoint reflects the situations in which the extinction of the population may occur. Moreover, we consider a positive and integrable function on $(0,t)$ for any $t>0$, denoted by $\lambda(t)$ ($\mu(t)$), which represents the individual birth (death) rate at time $t$. Since in real populations there are some individuals who have no reproduction power, we denote by $\rho(t)\in[0,1]$ this portion evaluated at time $t$. We assume that $\rho(t)$ is time-varying, since the reproduction power of individuals may change over time, due to diseases or other natural reasons. Therefore, the individual transition rates of this special birth-death process are given by
				\begin{equation*}
					\lambda(t):=(1-\rho(t))\lambda,\qquad \mu(t):=\mu,
				\end{equation*}
				where $\lambda,\mu>0$ and $\rho(t)\in[0,1]$ is an integrable function over any interval $(0,\bar t)$ for $\bar t>0$.
				\par
				\noindent Hence, the transition rates are
				\begin{equation}\label{transrates}
					\begin{aligned}
						&\lambda_n(t):=\lim_{h\to 0^+} \frac{1}{h}\mathsf P[X(t+h)=n+1\mid X(t)=n]=n\lambda(t)=n(1-\rho(t))\lambda, \qquad n\in\mathbb N_0,\\
						&\mu_n(t):=\lim_{h\to 0^+} \frac{1}{h}\mathsf P[X(t+h)=n-1\mid X(t)=n]=n\mu(t)=n\mu, \qquad n\in\mathbb N.
					\end{aligned}
				\end{equation}
				We denote by
				\begin{equation*}
					{P}_{yx}(t)=\mathsf P[X(t)=x \mid X(0)=y], \qquad t\ge 0,
				\end{equation*}
				the probability that the birth-death process $X(t)$ is in the state $x$ at the time $t$, conditional on the initial state $X(0)=y\in \mathbb N$.
				As shown in Tan (1986) \cite{Tan1986}, the probability generating function of $X(t)$ has the following expression
				\begin{equation*}
					G(z,t)=\left\{1-(z-1)[(z-1)\phi(t)-\psi(t)]^{-1}\right\}^y, \qquad 0<z<1,\;t\ge 0,
				\end{equation*}
				where
				\begin{equation*}\begin{aligned}
						&\psi(t)=\exp\left(t(\mu-\lambda)+\lambda\int_0^t\rho(\tau)\textrm{d}\tau\right),\\
						&\phi(t)=\lambda\int_0^t (1-\rho(\tau))\exp\left(\tau(\mu-\lambda)+\lambda\int_0^\tau\rho(s)\textrm{d}s\right)\textrm{d}\tau,\quad t\ge 0.
					\end{aligned}
				\end{equation*}
				In the following proposition, we provide a sufficient and necessary condition so that the birth-death process $X(t)$ has a modified Bertalanffy-Richards conditional mean.
				\begin{proposition}\label{mean-BDproc}
					The linear birth-death process with birth and death rates given by
					$$
					\lambda_n(t)=n\lambda(t),\qquad \mu_n=n\mu(t),\qquad t\ge 0,
					$$
					with $\lambda(t)$ and $\mu(t)$ defined in Eqs.\ \eqref{transrates},
					has conditional mean
					\begin{equation*}
						E_y(t):=\mathsf E [X(t)\mid X(0)=y]=y\left(\frac{\eta+1}{\eta+k^t}\right)^q\exp\left(\int_{t^*}^{\max\{t,t^*\}} C(s)\frac{k^s|\log{k}|}{\eta+k^s}\textrm{d}s\right),\qquad t\ge 0,
					\end{equation*}
					if, and only if,
					\begin{equation}\label{rho}
						1-\rho(t)=\frac{\mu+\widetilde h_\theta(t)}{\lambda},\qquad t\ge 0,
					\end{equation}
					with $\lambda-\mu=-q \log k>0$ and $\widetilde h_\theta(t)$ given in Eq.\ \eqref{hgen}.
				\end{proposition}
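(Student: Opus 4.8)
The idea is to reduce the statement to the first-order linear ODE satisfied by the conditional mean of a linear time-inhomogeneous birth--death process, and then to match that ODE with the Malthusian-type equation \eqref{MalEq} governing $\widetilde x_\theta(t)$. Concretely, I would first read off the conditional mean from the probability generating function $G(z,t)$ recalled above: writing $u=z-1$ and expanding $G(1+u,t)=\{1-u[u\phi(t)-\psi(t)]^{-1}\}^y=1+y\,u/\psi(t)+o(u)$ as $u\to0$, one gets $E_y(t)=\partial_z G(z,t)\big|_{z=1}=y/\psi(t)$. Substituting the expression of $\psi(t)$ then yields
\begin{equation*}
E_y(t)=y\exp\left(\int_0^t\big[(1-\rho(\tau))\lambda-\mu\big]\,\mathrm d\tau\right)=y\exp\left(\int_0^t\big[\lambda(\tau)-\mu(\tau)\big]\,\mathrm d\tau\right),
\end{equation*}
so that $E_y(t)$ is the solution of $E_y'(t)=(\lambda(t)-\mu(t))\,E_y(t)$ with $E_y(0)=y$ (the same ODE can be obtained directly from the forward Kolmogorov system, as in Majee \textit{et al.}\ (2022) \cite{Majeeetal2022}).

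Next I would identify the target function. By construction $\widetilde x_\theta(t)$ solves \eqref{MalEq} with $h_\theta$ replaced by $\widetilde h_\theta$, hence, taking $t_0=0$ without loss of generality, $\widetilde x_\theta(t)=x_0\exp\big(\int_0^t\widetilde h_\theta(s)\,\mathrm ds\big)$; by \eqref{GenBR} this equals $x_0\,(\eta+1)^q(\eta+k^t)^{-q}\exp\big(\int_{t^*}^{\max\{t,t^*\}}C(s)\,k^s|\log k|(\eta+k^s)^{-1}\,\mathrm ds\big)$, so the formula claimed for $E_y(t)$ is precisely $y\,x_0^{-1}\widetilde x_\theta(t)=y\exp\big(\int_0^t\widetilde h_\theta(s)\,\mathrm ds\big)$. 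Comparing with the expression for $E_y(t)$ found above, the two coincide for every $t\ge0$ if and only if $\int_0^t[\lambda(\tau)-\mu(\tau)]\,\mathrm d\tau=\int_0^t\widetilde h_\theta(\tau)\,\mathrm d\tau$ for all $t\ge0$; since $\widetilde h_\theta$ is continuous (by the assumptions on $C$) and $\lambda(\cdot)-\mu(\cdot)$ is locally integrable, differentiation in $t$ shows this is equivalent to $\lambda(t)-\mu(t)=\widetilde h_\theta(t)$ for every $t\ge0$, i.e.\ to $(1-\rho(t))\lambda-\mu=\widetilde h_\theta(t)$, which is exactly \eqref{rho}. This gives both implications at once.

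It then remains to check that \eqref{rho} actually defines an admissible reproduction-power function, namely $\rho(t)\in[0,1]$, which is where the condition $\lambda-\mu=-q\log k>0$ enters. From \eqref{rho} one has $\rho(t)=1-(\mu+\widetilde h_\theta(t))/\lambda$; since $\mu>0$ and $\widetilde h_\theta(t)>0$, automatically $\rho(t)<1$, while $\rho(t)\ge0$ is equivalent to $\widetilde h_\theta(t)\le\lambda-\mu$. Recalling $\widetilde h_\theta(t)=\widetilde q(t)\,k^t|\log k|/(\eta+k^t)$ with $k^t/(\eta+k^t)<1$ and $\widetilde q(t)=q$ on $t_0\le t\le t^*$, the choice $\lambda-\mu=q|\log k|$ makes $\widetilde h_\theta(t)<\lambda-\mu$ on that interval (and for $t>t^*$ as well, since $\widetilde h_\theta$ is bounded and $k^t\to0$, under the standing hypothesis that $C$ is bounded, continuous and vanishes at $t^*$). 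Hence $q|\log k|=-q\log k$ is the natural constant making \eqref{rho} consistent.

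I expect the only genuinely delicate point to be the ``only if'' direction together with this admissibility check: one must argue carefully that equality of the two means for \emph{all} $t$, rather than at isolated instants, is what forces the pointwise identity \eqref{rho}, and that the constant $\lambda-\mu=-q\log k$ is precisely what keeps $\rho$ valued in $[0,1]$. The generating-function computation of $E_y(t)=y/\psi(t)$ and the passage to the conditional-mean ODE are routine.
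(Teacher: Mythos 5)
Your proposal is correct and follows essentially the same route as the paper: the paper's (two-line) proof likewise observes that the conditional mean solves $\frac{\mathrm{d}}{\mathrm{d}t}E_y(t)=(\lambda(t)-\mu(t))E_y(t)$ while $\widetilde x_\theta$ solves the Malthusian equation with rate $\widetilde h_\theta$, and concludes that $\rho$ must satisfy \eqref{rho}. Your additional steps (deriving $E_y(t)=y/\psi(t)$ from the generating function and checking that $\rho(t)\in[0,1]$ under $\lambda-\mu=-q\log k$) merely flesh out what the paper asserts without detail, so no further comparison is needed.
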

				\begin{proof}
					The conditional mean $E_y(t)$ satisfies the following differential equation
					\begin{equation*}
						\frac{\textrm{d}}{\textrm{d}t}E_y(t)=(\lambda(t)-\mu(t))E_y(t),\qquad t\ge 0.
					\end{equation*}
					On the other hand, the modified Bertalanffy-Richards function $\widetilde x_\theta(t)$ verifies the differential equation (5) with time-dependent growth rate $\widetilde h_\theta(t)$. So, the function $\rho(t)$ must be chosen as in  Eq.\ \eqref{rho}.
				\end{proof}
				From now on, we will assume that the condition \eqref{rho} holds. Consequently,
				\begin{equation}\label{psiphipart}\begin{aligned}
						&\psi(t)=\left(\frac{\eta+k^t}{\eta +1}\right)^q\exp\left(\int_{t^*}^{\max\{t,t^*\}}C(s)\frac{k^s\log{k}}{\eta+k^s}\textrm{d}s\right),\\
						&\phi(t)=-\exp\left(-\int_0^t \widetilde h_{\theta}(u)du\right)+\mu\int_0^t\exp\left(-\int_0^s\widetilde h_{\theta}(u)\textrm{d}u\right)\textrm{d}s,\quad t\ge 0.
					\end{aligned}
				\end{equation}
				As shown in Tan (1986) \cite{Tan1986}, the transition probabilities and the conditional variance of the process can be expressed as follows, for any $t\ge 0$,
				\begin{equation*}
					\begin{aligned}
						&P_{y0}(t)=\left(1-\frac{1}{\psi(t)+\phi(t)}\right)^y,\\
						&P_{yx}(t)=\left(\frac{\phi(t)}{\psi(t)+\phi(t)}\right)^x \sum_{i=0}^{\min(x,y)}\binom{y}{i}\binom{y+x-i-1}{y-1}\left(\phi(t)^{-1}-1\right)^i\left(1-\frac{1}{\psi(t)+\phi(t)}\right)^{y-i}
					\end{aligned}
				\end{equation*}
				and
				\begin{equation*}
					Var_y(t):=\mathsf{Var}[X(t)\mid X(0)=y]=y\frac{\psi(t)+2\phi(t)-1}{\psi^2(t)}.
				\end{equation*}
				Some plots of the transition probabilities are provided in Figure \ref{fig:FiguraProb} for various choices of the parameters. In these cases, the probability $P_{y0}(t)$ is increasing with respect to $t$, whereas the probability $P_{yn}(t)$ is decreasing with respect  both to $t$ and $n$.
				\begin{figure}[h]
					\centering
					\subfigure[]{\includegraphics[scale=0.56]{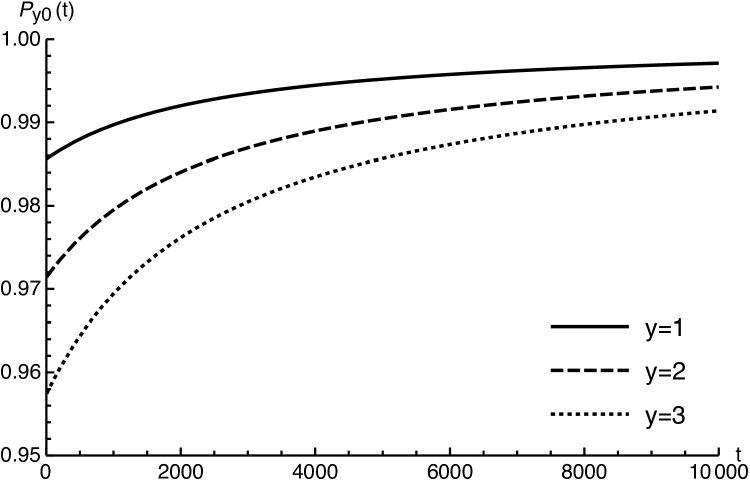}}\qquad
					\subfigure[]{\includegraphics[scale=0.56]{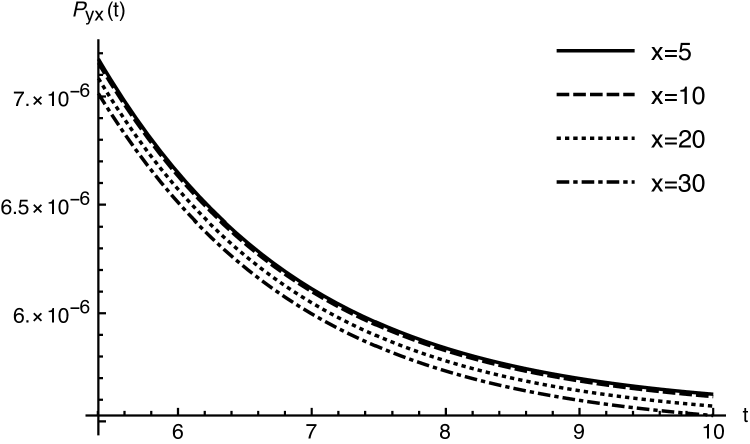}}\qquad
					\subfigure[]{\includegraphics[scale=0.56]{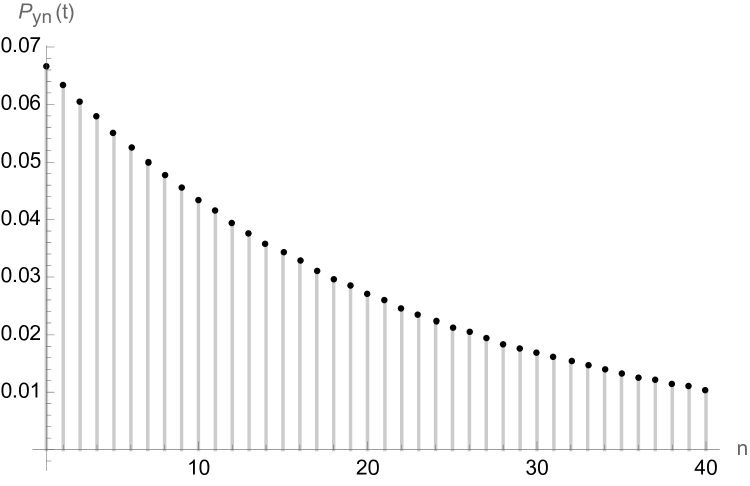}}
					\caption{The transition probabilities (a) $P_{y0}(t)$, (b) $P_{yx}(t)$ as a function of $t$  and (c) $P_{yn}(t)$ as a function of $n$ for $C(t)$ defined in Eq.\ \eqref{Cta}, with $\eta=0.2$, $m=1$, $k=0.5$, $q=2$, $p=0.8$. In (a), one has $\mu=1$, $y=1,2,3$. In (b), one has $\mu=1$, $y=1$ and $x=5,10,20,30$. In (c), one has $\mu=0.01$, $t=6$ and $y=1$.}
					\label{fig:FiguraProb}
				\end{figure}
				It is worth to notice that, since $\widetilde\psi:=\lim_{t\to+\infty}\psi(t)<+\infty$ and $\widetilde\phi:=\lim_{t\to +\infty}\phi(t)=+\infty$, the probability of an ultimate extinction is equal to $1$, indeed
				\begin{equation*}
					\pi_{y0}:=\lim_{t\to+\infty}P_{y0}(t)=\left(1-\frac{1}{\widetilde\psi+\widetilde\phi }\right)^y=1.
				\end{equation*}
				See Figure  \ref{fig:FiguraVar} for some plots of the conditional variance. Note that since the conditional variance is bounded, the conditional mean is a significant statistic for the process.
				\begin{figure}[h]
					\centering
					\subfigure[]{\includegraphics[scale=0.6]{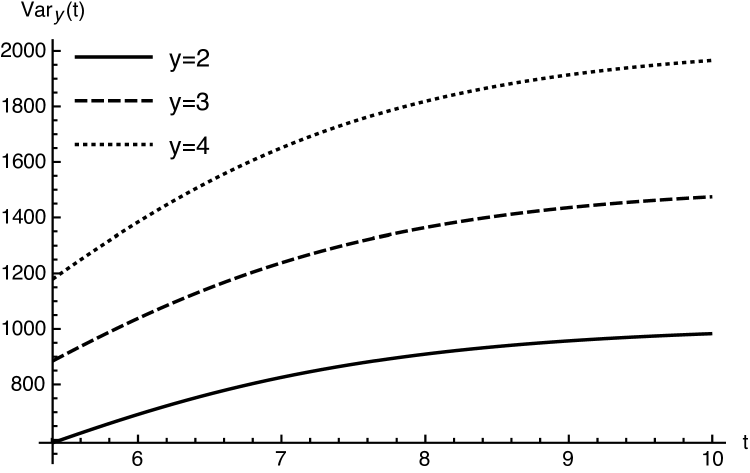}}
					\caption{The conditional variance $Var_y(t)$ as a function of $t$ with $\eta=0.2$, $m=1$, $k=0.5$, $q=2$, $\mu=0.01$, $p=0.8$ and $y=2,3,4$.}
					\label{fig:FiguraVar}
				\end{figure}
				\subsection{A special time-inhomogeneous linear birth process}\label{BDprocesses}
				In Section \ref{BDproc} we considered a birth-death process with time-dependent transition rates and we investigated the conditions under which its mean is of modified Bertalanffy-Richards type. Since the growth curve $\widetilde x_\theta(t) $ may be significantly different from the sample paths of the birth-death process (because of the presence of an absorbing endpoint), we now propose a stochastic process $X(t)$ more suitable to describe a growth phenomenon by removing the possibility of having downward jumps (deaths). Under this assumption, the individual transition rates of $X(t)$ are given by
				\begin{equation}\label{transrates1}
					\lambda(t):=(1-\rho(t))\lambda, \qquad \mu(t):=0,
				\end{equation}
				where $\lambda>0$ and $0\le\rho(t)\le 1$ is an integrable function over the interval $(0,t)$. Hence, the state-space of the process is given by $\mathcal S:=\{y,y+1,\dots\}$ being $\mathsf P[X(0)=y]=1$.
				The result of Proposition \ref{mean-BDproc} can be updated to this special case as shown in the following
				\begin{proposition}
					The linear birth process with birth rate
					\begin{equation}\label{lambdan}
						\lambda_n=n\lambda(t),\qquad t\ge 0
					\end{equation}
					with $\lambda(t)$ given in Eq.\ \eqref{transrates1},
					has conditional mean
					\begin{equation}\label{media-Bproc}
						E_y(t)=\mathsf{E}[X(t)\mid X(0)=y]=y\left(\frac{\eta+1}{\eta+k^t}\right)^q\exp\left(\int_{t^*}^{\max\{t,t^*\}} C(s)\frac{k^s|\log{k}|}{\eta+k^s}ds\right),\qquad t\ge 0,
					\end{equation}
					if, and only if,
					\begin{equation*}
						1-\rho(t)=\frac{\widetilde h_\theta(t)}{\lambda},\qquad t\ge 0,
					\end{equation*}
					with $\lambda=-q \log k>0$ and $\widetilde h_\theta(t) $ given in Eq.\ \eqref{hgen}.
				\end{proposition}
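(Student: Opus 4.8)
The plan is to follow the same ODE‑matching argument used in the proof of Proposition~\ref{mean-BDproc}, now specialized to the case $\mu(t)\equiv 0$. First I would recall that for the linear birth process with rate $\lambda_n=n\lambda(t)$ the conditional mean $E_y(t)$ solves the first‑order linear equation
\begin{equation*}
\frac{\mathrm{d}}{\mathrm{d}t}E_y(t)=\lambda(t)\,E_y(t),\qquad t\ge 0,\qquad E_y(0)=y,
\end{equation*}
obtained from the general birth–death mean equation $\frac{\mathrm{d}}{\mathrm{d}t}E_y(t)=(\lambda(t)-\mu(t))E_y(t)$ by dropping the death term (equivalently, by differentiating the probability generating function of $X(t)$). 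Since $\lambda(t)$ is integrable, its unique absolutely continuous solution is $E_y(t)=y\exp\!\bigl(\int_0^t\lambda(s)\,\mathrm{d}s\bigr)$.

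Next I would compare this with the modified Bertalanffy–Richards curve $\widetilde x_\theta(t)$ of \eqref{GenBR}, which by construction is the solution of the Malthusian equation \eqref{MalEq} with $h_\theta$ replaced by $\widetilde h_\theta$, that is $\frac{\mathrm{d}}{\mathrm{d}t}\widetilde x_\theta(t)=\widetilde h_\theta(t)\,\widetilde x_\theta(t)$ with $\widetilde x_\theta(0)=x_0$; taking $x_0=y$ and $t_0=0$ (legitimate for the modified model, as noted earlier) its value is precisely the right‑hand side of \eqref{media-Bproc}. As $E_y(t)$ and $\widetilde x_\theta(t)$ are the unique solutions of linear homogeneous first‑order equations with the same nonzero initial datum $y$, they coincide for every $t\ge 0$ if and only if $\int_0^t\lambda(s)\,\mathrm{d}s=\int_0^t\widetilde h_\theta(s)\,\mathrm{d}s$ for all $t$, i.e.\ $\lambda(t)=\widetilde h_\theta(t)$ for a.e.\ $t\ge 0$. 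Substituting $\lambda(t)=(1-\rho(t))\lambda$ gives $1-\rho(t)=\widetilde h_\theta(t)/\lambda$, which is the asserted condition.

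Finally I would check that the prescribed normalization $\lambda=-q\log k=q|\log k|$, which is positive because $0<k<1$, makes $\rho(t)$ an admissible fraction, namely $\rho(t)\in[0,1]$, equivalently $0\le\widetilde h_\theta(t)\le\lambda$. Positivity of $\widetilde h_\theta$ is immediate from \eqref{hgen}. For the upper bound, on $[0,t^*]$ one has $\widetilde h_\theta(t)=h_\theta(t)=q|\log k|\,k^t/(\eta+k^t)<q|\log k|$ since $k^t/(\eta+k^t)<1$, while for $t>t^*$ the required inequality $(q+C(t))\,k^t/(\eta+k^t)\le q$ is equivalent to $C(t)\,k^t\le q\eta$, which holds for $t$ near $t^*$ (because $C(t^*)=0$ and $C$ is continuous) and for large $t$ (because $k^t\to0$ and $C$ is bounded). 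I do not expect a genuine obstacle: the whole argument is a routine specialization of Proposition~\ref{mean-BDproc}, and the only point needing a line of care is confirming that the perturbed rate $\widetilde h_\theta$ stays below $\lambda$ once $C$ is switched on, which is exactly where the boundedness and vanishing‑at‑$t^*$ hypotheses on $C$ enter.
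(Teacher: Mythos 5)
Your argument is correct and is essentially the paper's own: the paper proves the birth--death version (Proposition \ref{mean-BDproc}) by matching the mean equation $\frac{\mathrm{d}}{\mathrm{d}t}E_y(t)=(\lambda(t)-\mu(t))E_y(t)$ with the Malthusian equation for $\widetilde x_\theta$, and simply states that the pure-birth case follows by setting $\mu(t)=0$, which is exactly your specialization. Your extra check that $\rho(t)\in[0,1]$, i.e.\ $C(t)k^t\le q\eta$ for $t>t^*$, is a sensible addition the paper leaves implicit, though note it is a genuine restriction on $C$ rather than an automatic consequence of boundedness and $C(t^*)=0$.
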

				In this case, considering the birth rate given in Eq.\ \eqref{lambdan}, the transition probabilities can be expressed as
				\begin{equation*}
					P_{yx}(t)=\mathsf{P}\left[X(t)=x\mid X(0)=y\right]=\binom{x-1}{y-1}e^{-y\Lambda(t)}\left(1-e^{-\Lambda(t)}\right)^{x-y},\qquad x\in\mathcal S,
				\end{equation*}
				where
				\begin{equation*}
					\Lambda(t)=\int_0^t\lambda(1-\rho(s))\textrm{d}s=-q\log\left(\frac{\eta+k^t}{\eta+1}\right)-\int_{t^*}^{\max\{t,t^*\}}C(s)\frac{k^s\log{k}}{\eta+k^s}\textrm{d}s,\quad t\ge 0.
				\end{equation*}
				In Figure \ref{fig:FiguraProbBproc} some plots of the transitions probabilities are provided for some choices of the parameters. In these cases, the probability $P_{yx}(t)$ is decreasing with respect to $t$ and increasing with respect to $x$.
				\begin{figure}[h]
					\centering		
					\subfigure[]{\includegraphics[scale=0.56]{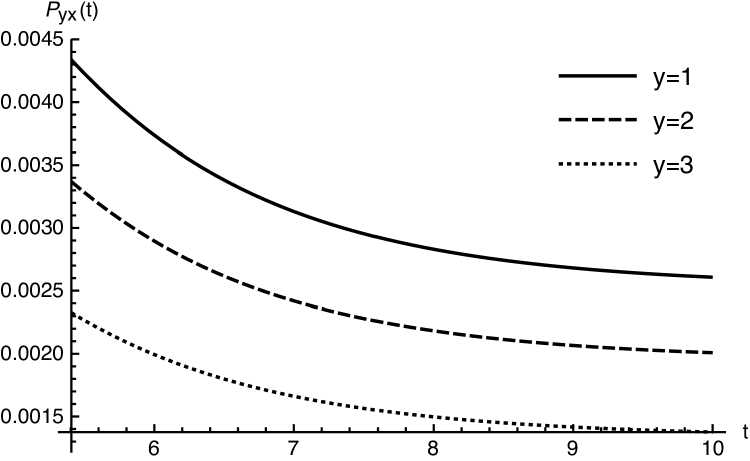}}\qquad
					\subfigure[]{\includegraphics[scale=0.56]{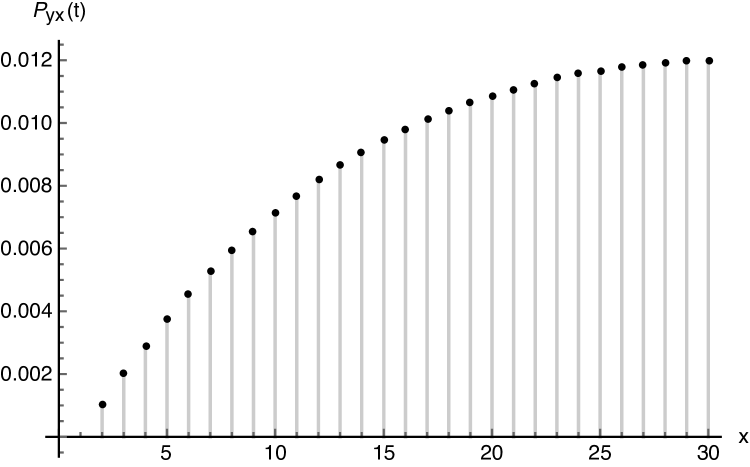}}
					\caption{The transition probabilities (a) $P_{yx}(t)$ as a function of $t$  for $x=3,4,5$ and (b) $P_{yx}(t)$ as a function of $x$ with $t=6$. In both cases, we have $C(t)$ defined in Eq.\ \eqref{Cta}, $\eta=0.2$, $m=1$, $k=0.5$, $q=2$, $p=0.8$, $y=2$. }
					\label{fig:FiguraProbBproc}
				\end{figure}
				The conditional mean of the process is given by Eq.\ \eqref{media-Bproc}, whereas, the conditional variance is
				$$
				Var_y(t)=y\, \frac{1-\psi(t)}{\psi(t)^2},\qquad t\ge0
				$$
				with $\psi(t)$ given in the first of Eqs.\ \eqref{psiphipart}.
				Let us now determine some indexes of dispersion of the process which may be useful in certain applied contexts. In particular, the Fano factor is given by
				\begin{equation}\label{fano-factor}
					D(t):=\frac{Var_y(t)}{E_y(t)}=\frac{1}{\psi(t)}-1,\qquad t\ge 0.
				\end{equation}
				Note that $D(t)$ is an increasing function with respect to $t$. From Eq.\ \eqref{fano-factor}, it is possible to note that
				\begin{itemize}
					\item[-] the birth process $X(t)$ is underdispersed for $t>\widetilde t$ being $\widetilde t$ the solution of the equation $\psi(t)=1/2$ which has a solution when $ \widetilde{\mathcal K_\theta}> 2y$;
					\item[-] the birth process $X(t)$ is overdispersed for $t<\widetilde t$.
				\end{itemize}
				Similarly, one can obtain the explicit expression for the coefficient of variation $\sigma_y(t)$ which is
				$$
				\sigma_y(t):=\frac{\sqrt{Var_y(t)}}{E_y(t)}=\sqrt{\frac{1-\psi(t)}{y}}, \qquad t\ge 0.
				$$
				We remark that $\sigma_y(t)$ is increasing in $t$ and decreasing in $y$ and the following limits hold
				\begin{equation*}
					\begin{aligned}
						\lim_{y\to 0^+}\sigma_y(t)=+\infty, \qquad \lim_{y\to\widetilde{\mathcal K_\theta}} \sigma_y(t)=0.
					\end{aligned}
				\end{equation*}
				Figure \ref{fig:FiguraVarBProcess} provide some plots of the conditional variance $Var_y(t)$, of Fano factor $D_y(t)$ and of the coefficient of variation $\sigma_y(t)$. All quantities represented are increasing and bounded.  Note that all results presented in this section are in agreement with the birth process studied in Section 4 of Di Crescenzo and Paraggio (2019) \cite{DiCrescenzoParaggio2019} related to a logistic growth curve.\\
				\begin{figure}[h]
					\centering
					\subfigure[]{\includegraphics[scale=0.56]{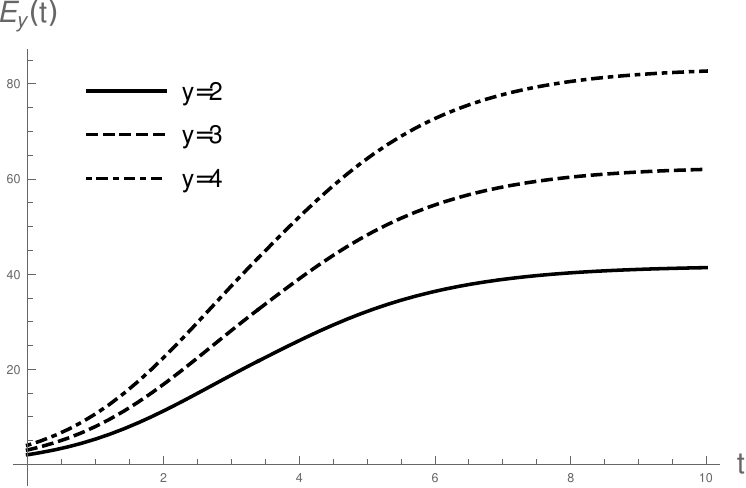}}\qquad
					\subfigure[]{\includegraphics[scale=0.56]{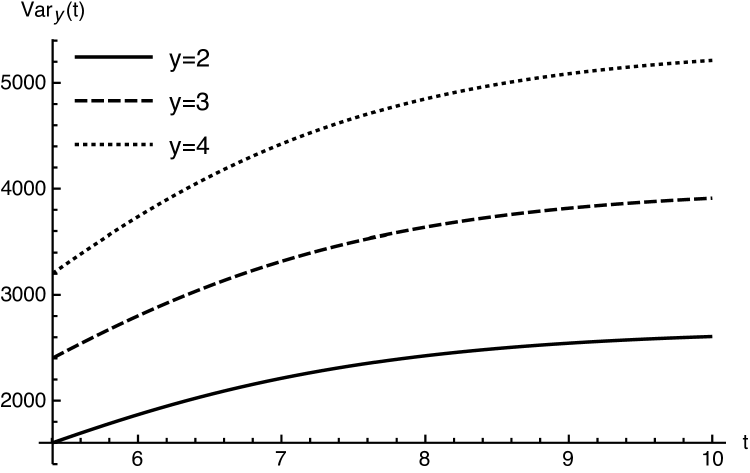}}\qquad
					\subfigure[]{\includegraphics[scale=0.56]{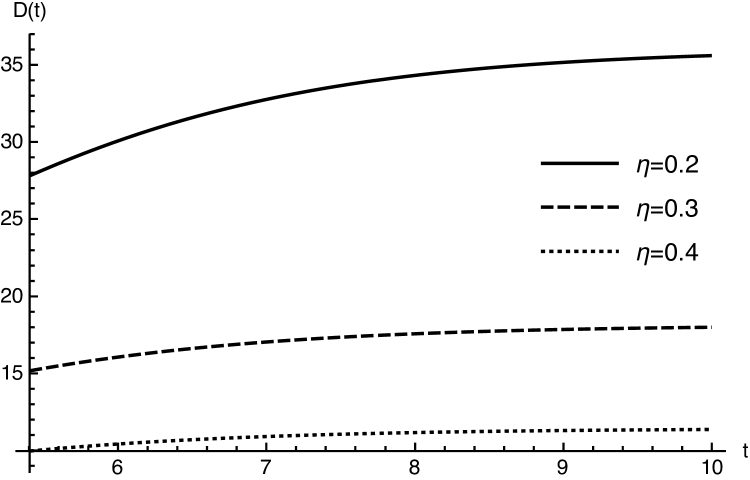}}\qquad
					\subfigure[]{\includegraphics[scale=0.56]{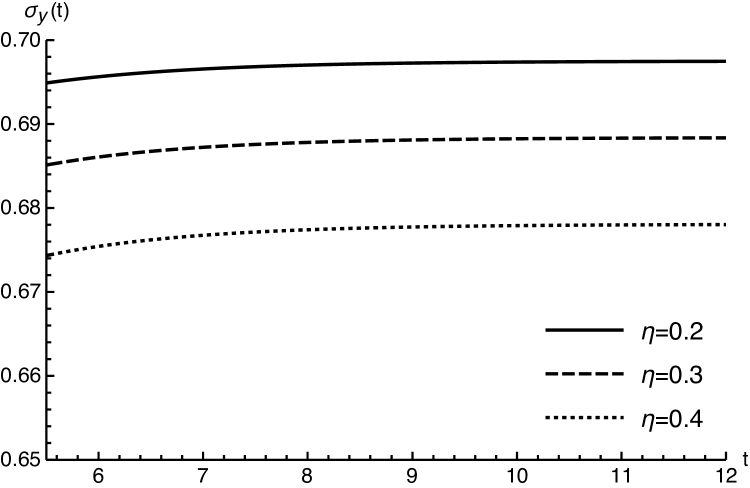}}\qquad
					\caption{(a) We consider the conditional mean $E_y(t)$ and (b) the conditional variance $Var_y(t)$ as a function of $t$ with $\eta=0.2$, $m=1$, $k=0.5$, $q=2$, $p=0.8$, $y=2,3,4$, (b) the Fano factor $D(t)$ and (c) the coefficient of variation $\sigma_y(t)$ as a function of $t$ with  $m=1$, $y=2$, $k=0.5$, $q=2$, $p=0.8$, $\eta=0.2,0.3,0.4$.}
					\label{fig:FiguraVarBProcess}
				\end{figure}
				The following time limits hold
				$$
				\lim_{t\to 0}\Lambda(t)=0,\qquad \lim_{t\to+\infty} \Lambda(t)=\log\frac{\widetilde{\mathcal K_\theta}}{y},
				$$
				$$
				\lim_{t\to 0} D(t)=0,\qquad  \lim_{t\to+\infty}D(t)=\frac{\widetilde{\mathcal K_\theta}}{y}-1,
				$$
				$$
				\lim_{t\to 0}\sigma_y(t)=0,\qquad \lim_{t\to+\infty}\sigma_y(t)=\sqrt{\frac{\widetilde{\mathcal K_\theta}-y}{\widetilde{\mathcal K_\theta} y}}.
				$$
				In order to obtain a more manageable stochastic counterpart, in Section \ref{DiffProc} we consider a particular lognormal diffusion process having the same interesting feature of the BD processes introduced so far:  its mean is of type $\widetilde x_\theta(t)$.
				\section{The corresponding diffusion processes}\label{DiffProc}
				We now consider two lognomal diffusion processes  $\{X(t); t\in I\}$ and $\{\widetilde X(t); t\in I\}$, with $I=[t_0,+\infty)$, having a mean which corresponds the former to the curve \eqref{BR} and the latter to the curve \eqref{GenBR}. The considered non-homogeneous lognormal diffusion processes can be regarded as diffusive approximations of suitable birth-death processes having quadratic rates, as shown in Section 5.2 of Di Crescenzo \textit{et al.}\ (2021) \cite{DiCrescenzoetal2021}.
				The stochastic differential equation (SDE) related to the classical model \eqref{BR} is given by
				\begin{equation}\label{SDE1}
					\textrm{d}X(t)=h_\theta(t)X(t)\textrm{d}t+\sigma X(t)\textrm{d}W(t), \qquad X(t_0)=X_0
				\end{equation}
				where $W(t)$ denotes a Wiener process independent on the initial condition $X_0$, $h_\theta(t)$ is defined in Eq.\ \eqref{hclas} and $\sigma>0$.
				\par
				Similarly, the diffusion process $\{\widetilde X(t); t\in I\}$ modelling the modified curve is the solution of the following SDE
				\begin{equation}\label{SDE2}
					\textrm{d}\widetilde X(t)=\widetilde h_\theta(t)\widetilde X(t)\textrm{d}t+\sigma \widetilde X(t)\textrm{d}W(t), \qquad \widetilde X(t_0)=\widetilde X_0
				\end{equation}
				where $\widetilde h_\theta(t)$ is defined in Eq.\ \eqref{hgen}. It is worth to remark that both Eq.\ \eqref{SDE1} and Eq.\ \eqref{SDE2} are obtained by adding to the corresponding Malthusian equations (cf. Eq.\ \eqref{MalEq}) a multiplicative noise term. \textcolor{blue}{The solutions} of the SDEs \eqref{SDE1} and \eqref{SDE2} can be easily determined by means of It\^o's formula with the variable transformation $f(\widetilde Y):=\log(\widetilde Y)$, with $\widetilde Y\in\{X(t),\widetilde X(t)\}$. Indeed, they are given by
				\begin{equation}\label{difproc1}
					X(t)=X_0 \exp\left(H_\xi(t_0,t)+\sigma \left(W(t)-W(t_0)\right)\right), \qquad t\ge t_0,
				\end{equation}
				\begin{equation}\label{difproc}
					\widetilde X(t)=\widetilde X_0 \exp\left(\widetilde H_\xi(t_0,t)+\sigma \left(W(t)-W(t_0)\right)\right), \qquad t\ge t_0,
				\end{equation}
				where, for $t>s$
				\begin{equation}\label{Htilde1}
					H_\xi(s,t)=q \log \frac{k^s+\eta}{k^t+\eta}-\frac{\sigma^2}{2}(t-s),
				\end{equation}
				\begin{equation}\label{Htilde}
					\widetilde H_\xi(s,t)=H_\xi(s,t)+\int_{\max\{s,t^*\}}^{\max\{t,t^*\}} C(u)\frac{k^u \mid \log{k}\mid}{\eta+k^u}  \mathrm d u,
				\end{equation}
				with $\xi:=(\theta^T,\sigma)^T=(q,k,\eta,\sigma)^T$.
				The  processes \eqref{difproc1} and \eqref{difproc} are lognormal diffusion processes with state-space $(0,+\infty)$
				and  infinitesimal moments
				\begin{equation*}
					\begin{aligned}
						&A_1(x,t)= h_\theta(t)x, \qquad \qquad A_2(x)=\sigma^2x^2\\
						&\widetilde A_1(x, t)=\widetilde h_\theta(t)x, \qquad \qquad \widetilde A_2(x)=\sigma^2x^2,
					\end{aligned}
				\end{equation*}
				respectively.
				In Figure \ref{fig:Figure6}, some simulated sample paths of the process $\widetilde X(t)$ are provided by considering the function $C(t)$ defined in Eq.\ \eqref{Cta}. Clearly, the sample paths are more variable around the sample mean when $\sigma$ is larger.
				\begin{figure}[h]
					\centering
					\subfigure[]{\includegraphics[scale=0.36]{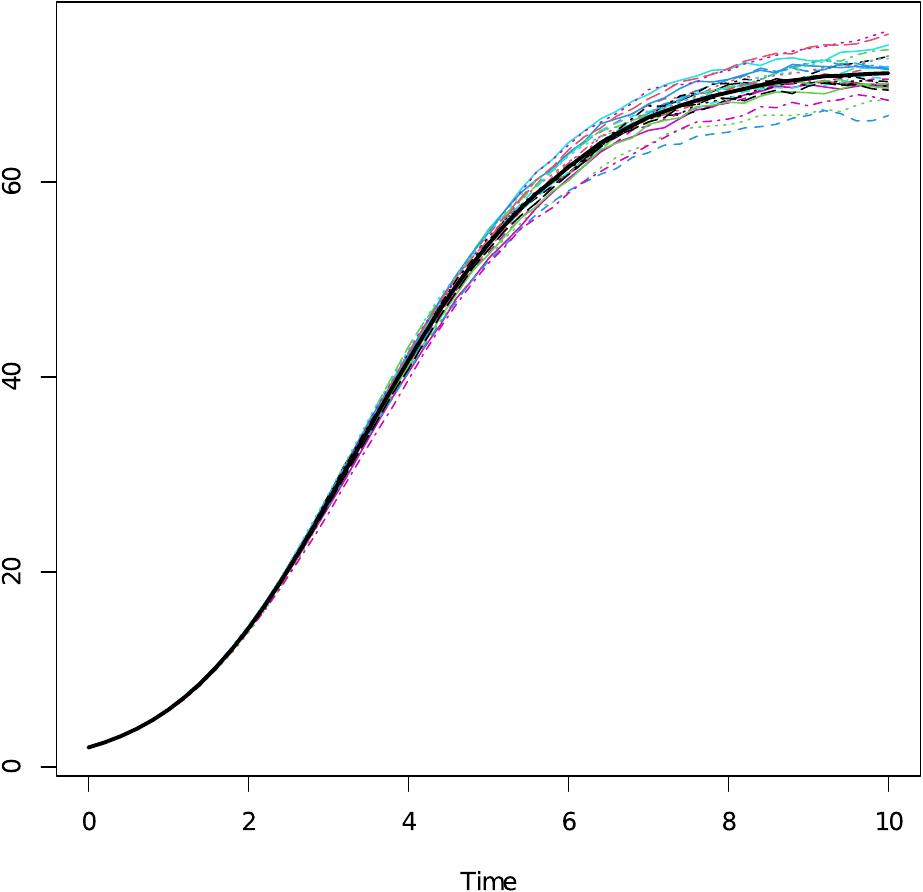}}
					\subfigure[]{\includegraphics[scale=0.36]{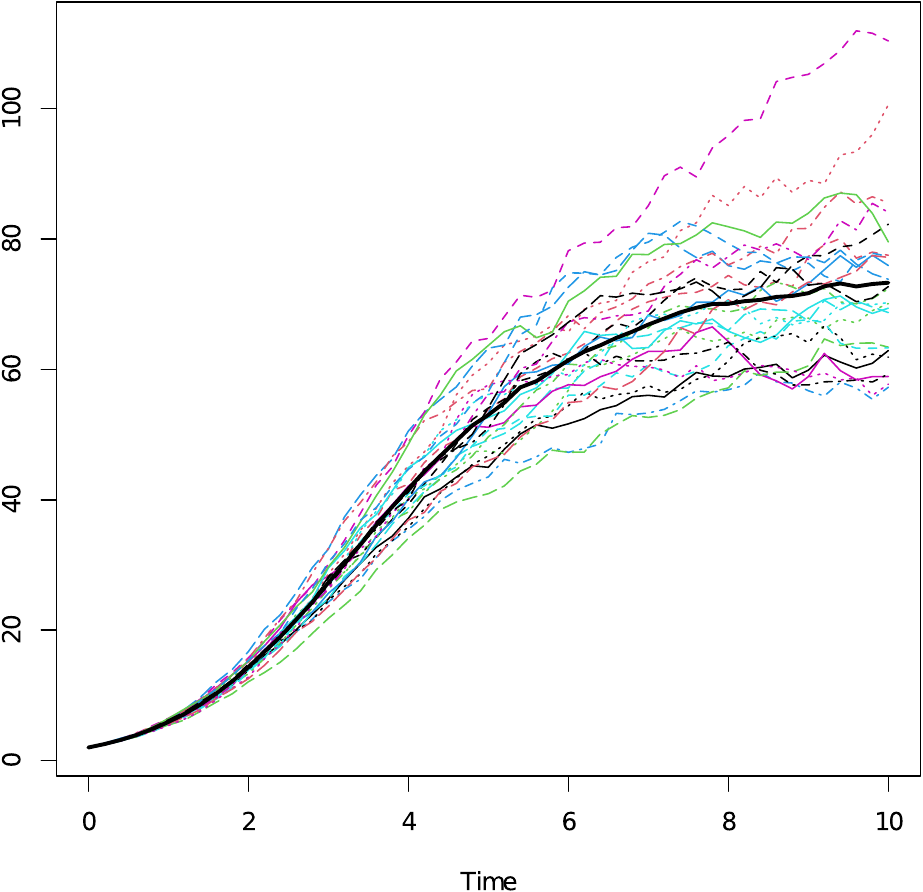}}\\
					\subfigure[]{\includegraphics[scale=0.36]{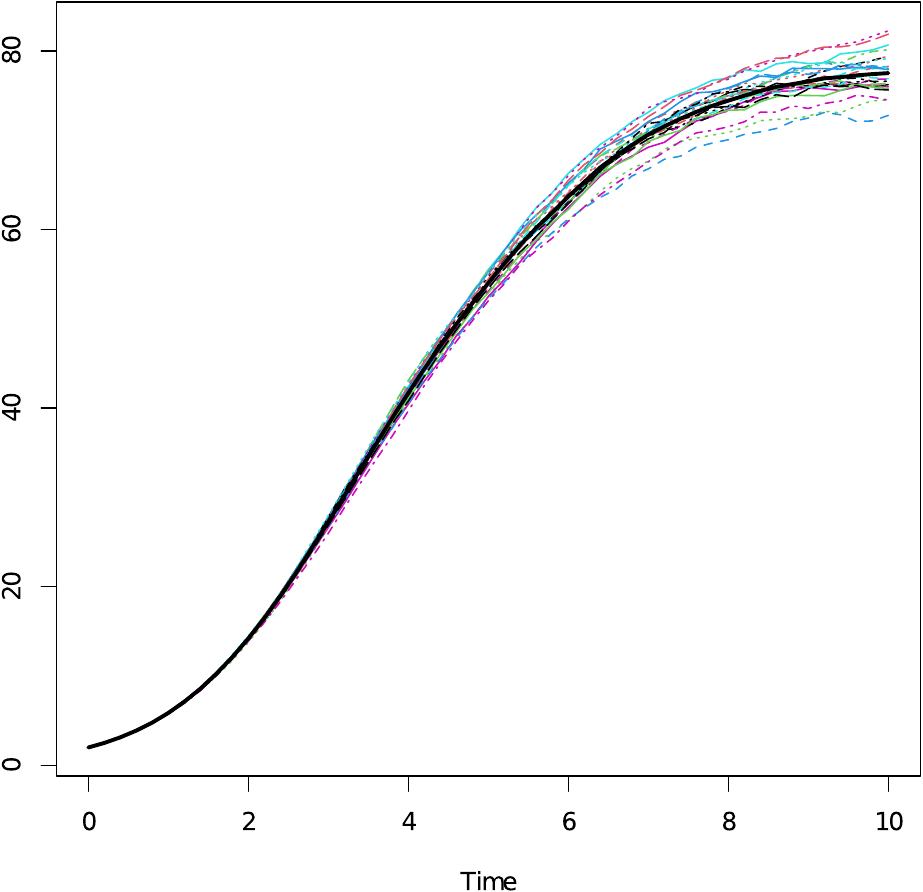}}
					\subfigure[]{\includegraphics[scale=0.36]{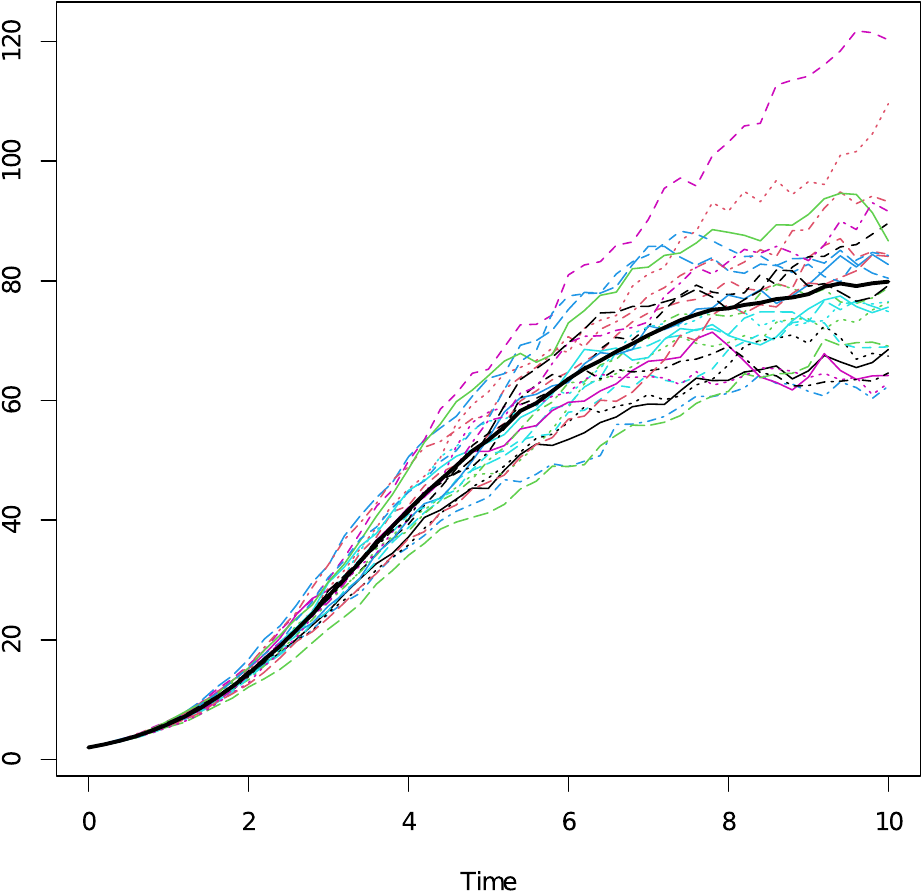}}\\
					\caption{25 simulated sample paths of $X(t)$ with (a) $\sigma=0.01$ and (b) $\sigma=0.05$ and of $\widetilde X(t)$ with $C(t)$ defined in Eq.\ \eqref{Cta}, $p=0.5$, $m=1$, (c) $\sigma=0.01$ and (d) $\sigma=0.05$. In all cases $t_0=0$, $x_0=2$, $\eta=0.2$, $k=0.5$, $q=2$. The black line is the sample mean.}
					\label{fig:Figure6}
				\end{figure}
				
				By developing a well-known strategy (cf.\ Rom\'an-Rom\'an  \textit{et al.}\ (2018) \cite{RomanTorres2018}), it is possible to get the probability distribution of $\widetilde X(t)$. More in detail, if $\widetilde X_0$ follows a lognormal distribution $\Lambda_1(\mu_0,\sigma_0^2)$ or if $\widetilde X_0$ is a degenerate random variable (i.e.\ $\mathsf{P}[\widetilde X_0=x_0]=1$, with $x_0>0$), then the finite-dimensional distributions of the process are lognormal. Indeed, fixing $n$ time instants $t_1<\dots<t_n$, the vector $\left(X(t_1),\dots, X(t_n)\right)^T$ is distributed according to a $n$-dimensional lognormal distribution $\Lambda_n(\epsilon, \Sigma)$, where $\epsilon=(\epsilon_1,\dots,\epsilon_n)^T$ with
				\begin{equation*}
					\epsilon_i=\mu_0+\widetilde H_\xi(t_0,t_i)=\mu_0+q\log \frac{k^{t_0}+\eta}{k^{t_i}+\eta}+\int_{t^*}^{\max\{t_i,t^*\}}C(u)\frac{k^u\mid \log{k}\mid}{\eta+k^u}\mathrm d u-\frac{\sigma^2}{2}(t_i-t_0),\quad i=1,\dots,n
				\end{equation*}
				and $\Sigma=(\sigma_{ij})$ with
				\begin{equation*}
					\sigma_{ij}=\sigma_0^2+\sigma^2\left(\min(t_i,t_j)-t_0\right),\qquad i,j=1,\dots,n.
				\end{equation*}
				From the case $n=2$, it is possible to obtain the conditional probability \textcolor{blue}{distribution} of the process $\widetilde X(t)$, i.e.\
				\begin{equation}\label{transproc}
					\begin{aligned}
						&\left[\widetilde X(t)\mid \widetilde X(s)=x\right]\\
						&\sim\Lambda_1\left(\log x+q\log \frac{k^s+\eta}{k^t+\eta}+\int_{\max\{s,t^*\}}^{\max\{t,t^*\}} C(u)\frac{k^u \mid \log{k}\mid}{\eta+k^u}\textrm{d}u-\frac{\sigma^2}{2}(t-s),\sigma^2(t-s) \right),
					\end{aligned}
				\end{equation}
				for $t>s\ge t_0$.
				By employing Eq.\ \eqref{transproc}, in Table \ref{tab:Table1} we provide  some of the most relevant characteristics of the process making use of the following auxiliary function
				\begin{equation}\label{Gtilde}
					\widetilde G^{\lambda}(t\mid y, \tau):= \exp\left((y+\widetilde H_\xi(\tau,t))\lambda_1+\lambda_2(\lambda_3\sigma_0^2+\sigma^2(t-\tau))^{\lambda_4}\right),
				\end{equation}
				with $\lambda:=(\lambda_1,\lambda_2,\lambda_3,\lambda_4)^T$ and $\widetilde H_\xi(t)$ defined in Eq.\ \eqref{Htilde}.
				\begin{table}[h]
					\caption{The conditional and unconditional mean, mode and $\alpha$-percentile of $\widetilde X(t)$ where $\widetilde G^{\lambda}$ is defined in Eq.\ \eqref{Gtilde} and $z_{\alpha}$ is the $\alpha$-percentile of the standard normal random variable.}
					\label{tab:Table1}
					\centering
					\small
					\begin{tabular}{lll}
						\textbf{Characteristic}  &\textbf{Expression}  & $\lambda$  \\ \hline
						$\mathsf{E}\left[ \widetilde X(t)^n \mid  \widetilde X(s)=y\right]$  &$\widetilde G^{\lambda} (t\mid \log y,s)$  &$(n,n^2/2,0,1)^T$  \medskip \\
						$\mathrm{Mode} \left[ \widetilde X(t) \mid  \widetilde X(s)=y\right]$ &$\widetilde G^{\lambda} (t\mid \log y,s)$  & $(1,-1,0,1)^T$    \medskip \\
						$C_{\alpha}\left[\widetilde X(t)^n \mid  \widetilde X(s)=y\right]$ &$\widetilde G^{\lambda} (t\mid \log y,s)$  &$(1,z_{\alpha},0,1/2)^T$    \medskip \\
						$\mathsf{E} \left[ \widetilde X(t)^n\right]$  &$\widetilde G^{\lambda} (t\mid \mu_0,t_0)$  &$(n,n^2/2,1,1)^T$  \medskip \\
						$\mathrm{Mode} \left[\widetilde X(t) \right]$ &$\widetilde G^{\lambda} (t\mid \mu_0,t_0)$  & $(1,-1,1,1)^T$    \medskip \\
						$C_{\alpha}\left[ \widetilde X(t)^n \right]$ &$\widetilde G^{\lambda} (t\mid \mu_0,t_0)$  &$(1,z_{\alpha},1,1/2)^T$    \medskip \\
					\end{tabular}
				\end{table}
				
				Taking into account the expression of the quantities given in Table \ref{tab:Table1}, we now get some relations between the diffusion processes $\widetilde X(t)$ and $X(t)$. Let
				\begin{equation*}
					G^{\lambda}(t\mid y, \tau):=\exp\left((y+H_\xi(\tau,t))\lambda_1+\lambda_2(\lambda_3\sigma_0^2+\sigma^2(t-\tau))^{\lambda_4}\right),
				\end{equation*}
				with $H_\xi(t)$ defined in Eq.\ \eqref{Htilde1}, be the corresponding auxiliary function for the process $X(t)$. After some algebra, we have that
				\begin{equation}\label{ratio}
					\frac{\mathsf{E}[\widetilde X(t)]}{\mathsf{E}[X(t)]}=\exp\left(\int_{t^*}^{\max\{t,t^*\}}C(s) \frac{k^s\mid \log{k}\mid}{\eta+k^s}\mathrm d s\right), \qquad t\ge t_0,
				\end{equation}
				and
				\begin{equation*}
					\frac{\mathsf{Var}[\widetilde X(t)]}{\mathsf{Var}[X(t)]}=\exp\left(2\int_{t^*}^{\max\{t,t^*\}}C(s) \frac{k^s\mid \log{k}\mid}{\eta+k^s}\mathrm d s\right), \qquad t\ge t_0.
				\end{equation*}
				The above mentioned relations are useful also for estimation purposes. Indeed, in Section \ref{parest} we employ Eq.\ \eqref{ratio} for estimating  the unknown function $C(t)$.
				\section{Parameters estimation}\label{parest}
				The model proposed in Section \ref{DiffProc} is useful in real applications, in particular to model a Bertalanffy-Richards type growth with external modifications. To this aim, an estimation of the unknown parameters of the model is necessary. Since the distribution of $\widetilde X(t)$ is available in explicit form, we propose to obtain the maximum likelihood estimates (MLEs) of the parameters, following the same strategy of Di Crescenzo \textit{et al.}\ (2022) \cite{DiCrescenzoetal2022}.
				In the following reasoning, the time instant $t^*$ is supposed to be known, otherwise $t^*$ needs to be estimated following the strategies described in Section \ref{step2}.
				Let us consider a discrete sampling of the diffusion process $\widetilde X(t)$ consisting of $d$ independent sample paths, with $n_i$ observations for the $i$-th sample path. Hence, the observation times are denoted by  $t_{ij}$, for $i=1,\dots, d$ and $j=1,\dots, n_i$. We also assume, for simplicity, that the first observation time is the same for any trajectory, i.e. $t_{i1}=t_0$, $i=1,\dots,d$. Moreover, let $\widetilde{\mathbb X}=( \widetilde{\bold{X}}^T_1 \mid \dots \mid \widetilde{\bold{X} }^T_d)^T$ be the \textcolor{blue}{vector} which contains all the observed states, with $ \widetilde {\bold{X}}_i=(\widetilde X_{i1},\dots, \widetilde X_{in_i})^T=(\widetilde X(t_{i1}),\dots, \widetilde X(t_{in_i}))^T$. In agreement with the assumptions about $\widetilde X_0$ given in Section \ref{DiffProc}, we suppose that $\widetilde X(t_0)$ follows a lognormal distribution $\Lambda_1(\mu_1,\sigma_1^2)$ with $\mu_1\in\mathbb R$ and $\sigma_1^2\in\mathbb R_+$, so that the density of $\widetilde{\mathbb X}$ is given by
				\begin{equation}\label{dens}
					f_{\widetilde{\mathbb X}}(x)=\prod_{i=1}^{d} \frac{\exp\left(-\frac{(\log x_{i1}-\mu_1)^2}{2\sigma_1^2}\right)}{x_{i1}\sigma_1\sqrt{2\pi}}\prod_{j=1}^{n_i-1}\frac{\exp\left(-\frac{(\log(x_{i,j+1}/x_{ij})-\widetilde m_{\xi}^{i,j+1,j})^2}{2\sigma^2\Delta_i^{j+1,j}}\right)}{x_{ij}\sigma\sqrt{2\pi \Delta_i^{j+1,j}}},
				\end{equation}
				where $x=(x_{1,1},\dots, x_{1,n_1}\mid \dots \mid x_{d1},\dots, x_{d,n_d})^T\in \mathbb R_+^{n+d}$, with  $n=\displaystyle \sum_{i=1}^{d}(n_i-1)$, for (cf.\ \eqref{Htilde})
				\begin{equation*}
					\widetilde m_{\xi}^{i,j+1,j}:=\widetilde H_\xi(t_{ij},t_{i,j+1})=q \log \frac{k^{t_{ij}}+\eta}{k^{t_{i,j+1}}+\eta}+\int_{\max\{t_{ij},t^*\}}^{\max\{t_{i,j+1},t^*\}}C(u)\frac{k^u\mid\log{k}\mid}{\eta+k^u}\mathrm d u-\frac{\sigma^2}{2}(t_{i,j+1}-t_{ij}),
				\end{equation*}	
				and $\Delta_{i}^{j+1,j}:=t_{i,j+1}-t_{ij}$ for $i=1, \dots, d$ and $j=1,\dots, n_i$.
				In order to simplify the cumbersome expression of the density given in Eq.\ \eqref{dens}, we perform the following change of variables:
				\begin{equation*}
					\begin{aligned}
						&V_{0i}:=\widetilde X_{i1},\qquad\qquad i=1,\dots,d \\
						&V_{ij}:=\left(\Delta_i^{j+1,j}\right)^{-1/2}\log\frac{\widetilde X_{i,j+1}}{\widetilde X_{ij}}, \qquad \qquad j=1,\dots,n_i-1,\quad i=1,\dots,d.
					\end{aligned}
				\end{equation*}	
				Therefore, by setting $\mathbb V=(\bold V_0^T\mid \bold V_1^T\mid \dots\mid \bold V_d^T)^T$ with $\bold V_i^T=(V_{i1},\dots, V_{in_i})$, the corresponding density is
				\begin{equation*}
					f_{\mathbb V}(v)=\frac{\exp\left(-\frac{1}{2\sigma_1^2}(lv_0-\mu_1\mathbb I_d)^T(lv_0-\mu_1\mathbb I_d)\right)}{\prod_{i=1}^d v_{0i}(2\pi\sigma_1^2)^{d/2}}\cdot \frac{\exp\left(-\frac{1}{2\sigma^2}(v_{(1)}-\gamma^{\xi})^T(v_{(1)}-\gamma^{\xi})\right)}{(2\pi\sigma^2)^{n/2}},
				\end{equation*}
			\textcolor{blue}{
				where
				$$
				\begin{aligned}
				&v=(v_0^T|v_{(1)}^T)\in\mathbb R^{n+d},\quad v_0=(v_{01},\dots,v_{0d})^T\in\mathbb R^{d},\quad lv_0=(\log v_{01},\dots,\log v_{0d})^T,\\
				&v_{(1)}=(v_{11},\dots,v_{1,n_1-1}|\dots|v_{d1},\dots,v_{d,n_d-1})^T\in\mathbb R^{n},\quad \mathbb I_d=(1,\dots,1)^T\in \mathbb R^d,
				\end{aligned}
				$$
				 with $\gamma^{\xi}=(\gamma_{11}^\xi,\dots, \gamma_{1,n_1-1}^\xi, \dots, \gamma_{d1}^\xi, \dots,\gamma_{d,n_d-1}^\xi)^T\in\mathbb R^n$ and $\gamma_{ij}^\xi=\left(\Delta_i^{j+1,j}\right)^{-1/2}\widetilde m_{\xi}^{i,j+1,j}$, for $j=1,\dots, n_i-1$ and $i=1,\dots,d$. }\\
				By supposing that the parameters of the process, $\xi$, and those of the initial distribution, $(\mu_1,\sigma_1^2)$, are functionally independent, the log-likelihood function can be expressed as follows
				\begin{equation*}
					L_{\mathbb V}(\xi,\mu_1,\sigma_1^2)
					=\widetilde{L}_{\mathbb V}(\xi)-\frac{n+d}{2}\log{2\pi}-\frac{d}{2}\log{\sigma_1^2}
					-\sum_{i=1}^{d}\log{v_{0i}}-\frac{\sum_{i=1}^d (\log v_{0i}-\mu_1)^2}{2\sigma_1^2},
				\end{equation*}
				where
				\begin{equation}\label{objfun}
					\widetilde{L}_{\mathbb V}(\xi):=-\frac{n}{2}\log\sigma^2-\frac{Z_1+\Phi_\xi-2\Gamma_\xi}{2\sigma^2}
				\end{equation}
				and
				\begin{equation*}
					Z_1:= \sum_{i=1}^{d} \sum_{j=1}^{n_i-1}v_{ij}^2,\quad \Phi_\xi:= \sum_{i=1}^{d} \sum_{j=1}^{n_i-1} \frac{(\widetilde m_\xi^{i,j+1,j})^2}{\Delta_i^{j+1,j}},\quad \Gamma_\xi:= \sum_{i=1}^{d} \sum_{j=1}^{n_i-1}\frac{v_{ij}\widetilde m_\xi^{i,j+1,j}}{(\Delta_i^{j+1,j})^{1/2}}.
				\end{equation*}
				By performing the partial derivatives of the log-likelihood $L_{\mathbb V}$ with respect to $\mu_1$ and $\sigma_1^2$, we can easily get the following MLEs:
				\begin{equation}\label{MLEini}
					\widehat\mu_1=\frac{1}{d}\sum_{i=1}^d\log v_{0i},\qquad \widehat\sigma_1^2=\frac{1}{d}\sum_{i=1}^d(\log v_{0i}-\widehat \mu_1)^2.
				\end{equation}
				In order to obtain the MLEs of $\xi$, we decide to adopt suitable metaheuristic optimization methods to avoid numerical problems in the resolution of the system of maximum likelihood.
				In addition, we need to estimate also the unknown function $C(t)$ occurring in the definition of the modified model $\widetilde X(t)$. To this aim, we assume to dispose of a data set concerning the observations of a modified Bertalanffy-Richards model $\widetilde X(t)$ in the time interval $[t_0,T]$. Consider first Eq.\ \eqref{ratio}, which allows us to implement the following procedure.
				\begin{framed}\label{procedure}
					\textbf{Procedure 1 - Estimation of the parameters}
					\begin{itemize}
						\item[\small\textbf{Step 1}] Compute the MLEs $\hat \xi=(\hat q, \hat k, \hat \eta, \hat \sigma)^T$.
						\item[\small\textbf{Step 2}] Determine an estimation of $t^*$.
						\item[\small\textbf{Step 3}]  Obtain an estimation of the function $C(t)$.
					\end{itemize}
				\end{framed}
				A description of the steps of Procedure 1 is provided in the hereafter.\\
				
				\textbf{Step 1} In order to determine the MLEs $\hat \xi$, we use the data over an interval $I_t=[t_0, \bar t]\supset[t_0,t_I]$, being $\bar t$ a time instant quite close to $t^*$. For example, we can take $\bar t$ as the time instant in which the spline $S(t)$ interpolating the mean of the paths intercepts the threshold $S$, defined in Eq.\ \eqref{Sdef}. \\
				
				\textbf{Step 2} To determine an estimation of $t^*$ two different strategies are available:
				(i) we can use Eq.\ \eqref{tstar} and consider $t^*$ as a parametric function of the MLEs determined at Step 1 or (ii) we can consider the mean of the first-passage-time of the estimated process (i.e.\ the process given in Eq.\ \eqref{difproc1} obtained by considering the MLEs of the parameters) through the fixed boundary $S$. \\

				\textbf{Step 3} Considering the sample mean $\mathsf{E}[\widetilde X(t)]$ of the modified process \eqref{difproc} and the estimated mean $\widehat{\mathsf{E}} \left[X(t)\right]$ of the classical process over the full time interval $[t_0, T]$, we then obtain the estimation of the function $C(t)$ from Eq.\ \eqref{ratio} as follows
					\begin{equation}\label{Cstim}
						\widehat C(t)=\frac{\eta+k^t}{k^t\mid \log{k}\mid}\; \frac{\textrm{d}}{\textrm{d}t}m(t),\qquad t>t^*,
					\end{equation}
					with $m(t):=\log\left(\frac{{\mathsf{E}}[\widetilde X(t)]}{\widehat{\mathsf{E}}[X(t)]}\right)$, for $t>t^*$.
				Note that since the mean of the process $X(t)$ (cf.\ Eq.\  \eqref{difproc1}) corresponds to the classical deterministic curve $x_\theta(t)$ (cf.\ Eq.\ \eqref{BR}),
				in Eq.\ \eqref{Cstim} we consider
				\begin{equation*}
					\widehat{\mathsf{E}} \left[X(t)\right]=\mathsf E [X_0]\left(\frac{\widehat \eta + \widehat k^{t_0}}{\widehat \eta + \widehat k^{t}}\right)^{\widehat q}.
				\end{equation*}
		\section{Determination of MLEs through heuristic optimization methods}\label{SA}
				To determine the MLEs mentioned in the \textbf{Step 1} of \textbf{Procedure 1}, we develop the following reasoning.
				\par
				Since the numerical methods employed to solve system  of maximum likelihood may fail to converge even in the case of a quite accurate initial solution, we now propose to employ two different meta-heuristic optimization methods, namely Simulated Annealing (SA) and Ant Lion Optimizer (ALO). These two methods belong to the family of \textcolor{blue}{gradient-free} algorithms and they are suggested when the use of the derivative of the function to be maximized is intricate. The estimation of the parameters of the initial distribution can be obtained through the explicit expressions given in Eq.\ \eqref{MLEini}. Hence, we need to maximize Eq.\ \eqref{objfun} on the parametric space $\Theta=\{(q,k,\eta,\sigma): q>0,\, 0<k<1,\eta>0,\, \sigma>0\}$. Since this space is continuous and unbounded, a restriction of $\Theta$ is needed. A similar problem has been considered in Section 3.1.1 of Rom\'an-Rom\'an and Torres-Ruiz (2015) \cite{Romanetal2015}. Their main idea is to find, first of all, a bounded interval $I_q:=[q_1,q_2]$ for $q$ and from that determine sequentially two bounded intervals, the former for $k$ and the latter for $\eta$. Further on, as usual, we use the interval  $I_\sigma:=(0,0.1)$ for the parameter $\sigma$ so that the paths are compatible with a Bertalanffy-Richards type growth.
				\begin{itemize}
					\item[\textbf{1.1}]
					Let us start from the determination of $I_q$. By interpolating the mean of the sample paths with a natural cubic spline $S(t)$, we find the time instant $t_I^*$ which is the  ``observed" inflection instant. Then, we select $t_1$ as the first instant in which $S(t)/\mathcal{K}^*>e^{-1}$, where $\mathcal{K}^*$ is an approximation of the boundary $S=(1+p)x_i^*$ being $x_i^*$ the value of the curve at the observed inflection time. In this way, $\mathcal{K}^*$ may represent an approximation of the carrying capacity of the classical Bertalanffy-Richards model. Thus, we consider the interval $[t_1,t_2]$ where $t_2$ is the observation instant immediately after $t_I^*$. Consequently, we determine the \textcolor{blue}{endpoints} of the interval $I_q=[q_1,q_2]$ being $q_j$ the solutions of the following equations
					\begin{equation}\label{eq-q}
						S(t_j)=\mathcal{K}^* \left(\frac{q_j}{1+q_j}\right)^{q_j},\qquad \qquad j=1,2.	
					\end{equation}
					\item[\textbf{1.2}] We now proceed on the determination of $I_k$ and $I_\eta$.
					From the characteristics of the classic Bertalanffy-Richards curve, we have
					\begin{equation*}
						\eta=qk^{t_I},\qquad \qquad k=\left(q\left(\left(\frac{\mathcal{K^*}}{x_0}\right)^{1/q}-1\right)\right)^{1/(t_0-t_I)}.
					\end{equation*}
					This allows us to consider the following functions
					\begin{equation}\label{eq-k}
						g(t,q)=\left(q\left(\left(\frac{\mathcal{K^*}}{x_0}\right)^{1/q}-1\right)\right)^{1/(t_0-t)},\quad h(q,k,t)=qk^t,
					\end{equation}
					for $q\in I_q,\;t\in[t_1,t_2],\; k\in I_k\subset(0,1)$.
					From Eq.\ \eqref{eq-k}, it follows that the minimun value of $g(t,q)$ is reached at $(t_1,q_1)$ and its maximum at $(t_2,q_2)$. Hence, we consider for the parameter $k$ the interval  $I_k:=[k_1,k_2]=[g(t_1,q_1),g(t_2,q_2)]$. Similarly, the function $h(q,k,t)$ assumes its minimum value at $(q_1,k_1,t_2)$ and its maximum at $(q_2,k_2,t_1)$, so that for the parameter $\eta$ we can use the interval $I_{\eta}:=[\eta_1,\eta_2]=[h(q_1,k_1,t_2),h(q_2,k_2,t_1)]$, since the function $h(q,k,t)$ is decreasing with respect to $t$.
				\end{itemize}
				It is worth to notice that the width of the interval $I_q$ may be too large, since small variations of the ratio $\frac{S(t_j)}{\mathcal K^*}$ correspond to large variations of the parameter $q$ (cf.\ Eq.\ \eqref{eq-q}), as can be deducted from Figure \ref{fig:Figure10}.
				\begin{figure}[h]
					\centering
					\includegraphics[scale=0.4]{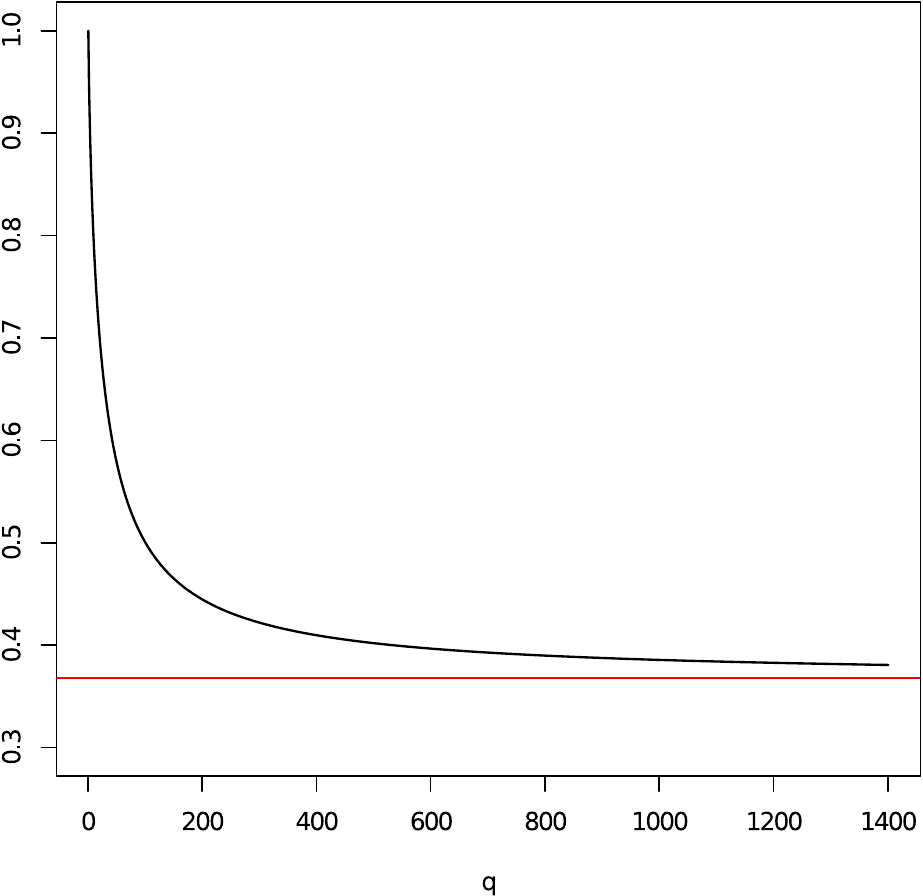}
					\caption{The ratio $\frac{S(t_j)}{\mathcal K^*}$ as a function of $q$ (black) and the line $y=e^{-1}$ (red).}
					\label{fig:Figure10}
				\end{figure}
					Nevertheless, the resulting MLEs obtained via SA and ALO are quite accurate as can be noticed from the results shown in Section \ref{Simulation}.
				\section{Simulation}\label{Simulation}
				In this section, we devote our analysis to a simulation study in order to validate the procedures described in Sections \ref{parest} and \ref{SA}. The pattern of the simulations is based on $25$ sample paths of $\widetilde X(t)$, defined in Eq.\ \eqref{difproc}, the time interval $[0,10]$ with $q=2$, $k=0.5$, $\eta=0.2$, $\sigma\in\{0.01,0.02\}$.
 Moreover, $C(t)$ is taken as in Eq.\ \eqref{Cta} for $m=1$ and where the value of $t^*$ is obtained by setting
				 $p=0.5$ in Eq.\ (\ref{tstar}).
				The sample paths share the same length since the common observation time instants are taken as $t_j=j\cdot 0.1$ with $j=0,\dots,100$. For the initial condition we consider a degenerate distribution centered in $x_0=2$. In the study, we suppose that the threshold $S=(1+p)x(t_I)$ is known, in the sense that the value of $p$ is given. Otherwise, a set of possible values for $p$ can be explored to select the best estimation (in terms of minimization of the absolute relative error between the sample mean and the estimated mean).
Once the sample paths are simulated, we consider a sample size $n=51$, being the data equally spaced in the interval under consideration.
				With the aim of obtaining better estimates, the steps of \textbf{Procedure 1} are replicated  $300$ times. For any parameter, the mean of the resulting estimates is taken as the final value. As can be seen in Figure \ref{fig:Figure17marzo}, the estimates stabilize around a specific value as the number of replications increases.
				
				\textcolor{blue}{
				For the realization of the simulation study and the application to real data, the \texttt{R} software has been used. Specifically, the \texttt{metaheuristicOpt} package, which implements the metaheuristic Ant Lion Optimizer algorithm, and the \texttt{GenSA} package, which contains the Simulated-Annealing algorithm. For both methods, the intervals, determined in Section 7, were given in input. Then, the initial solution of each parameter was  determined by choosing a random value with a uniform distribution in any interval. The machine used for the computations was an Apple M1, with a total of 8 cores. The time taken to obtain the results reported below with 300 replications is 1800.325s, and with 100 replications is 582.049s.
			}
				
				\begin{figure}[h]
					\centering
					\subfigure[]{\includegraphics[scale=0.35]{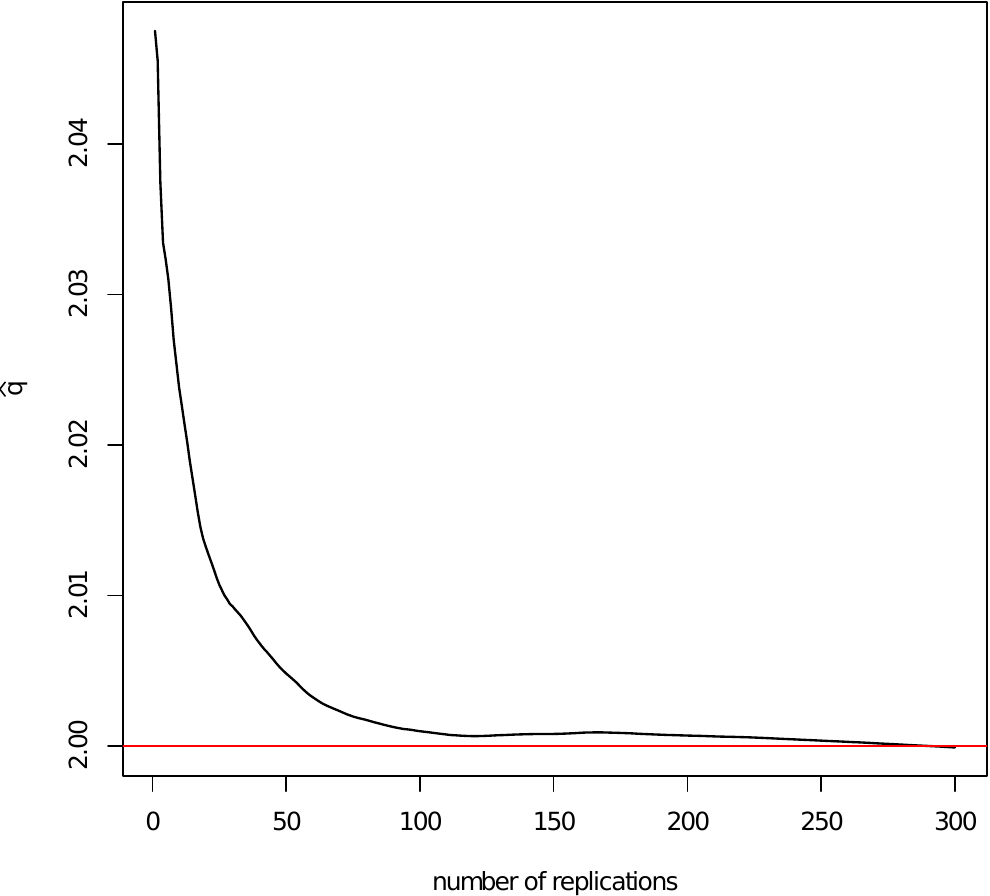}}\qquad
						\subfigure[]{\includegraphics[scale=0.35]{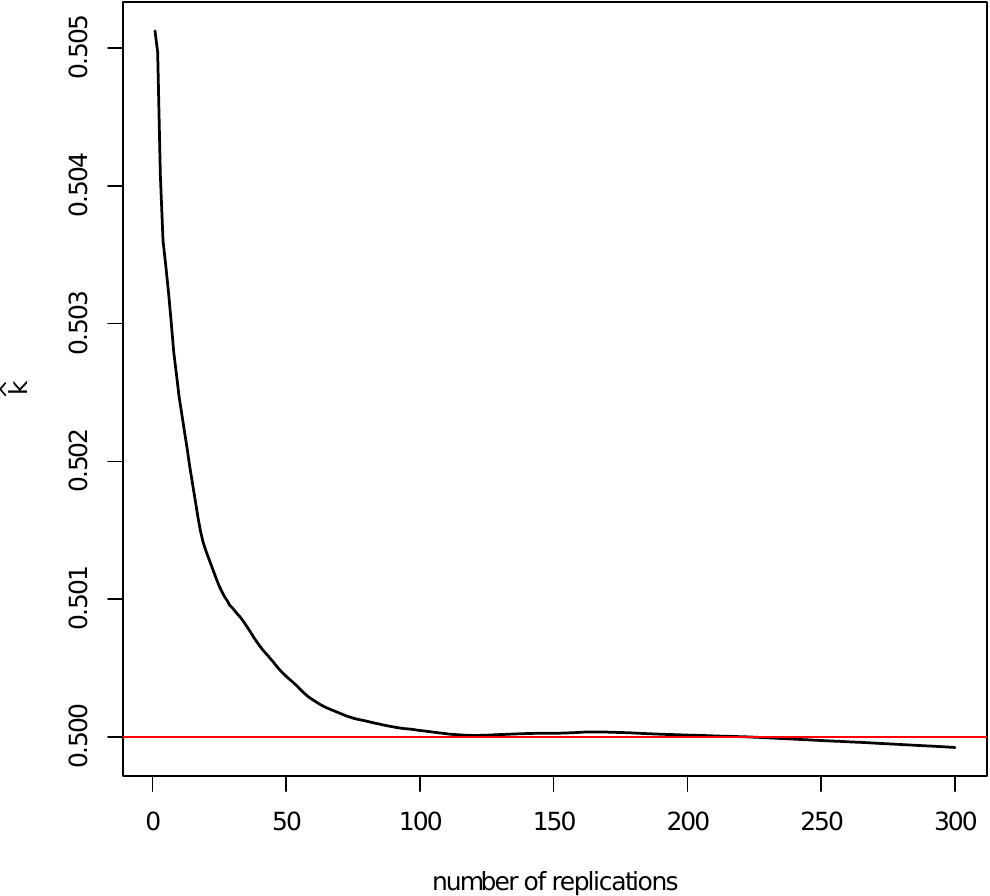}}\qquad
							\subfigure[]{\includegraphics[scale=0.35]{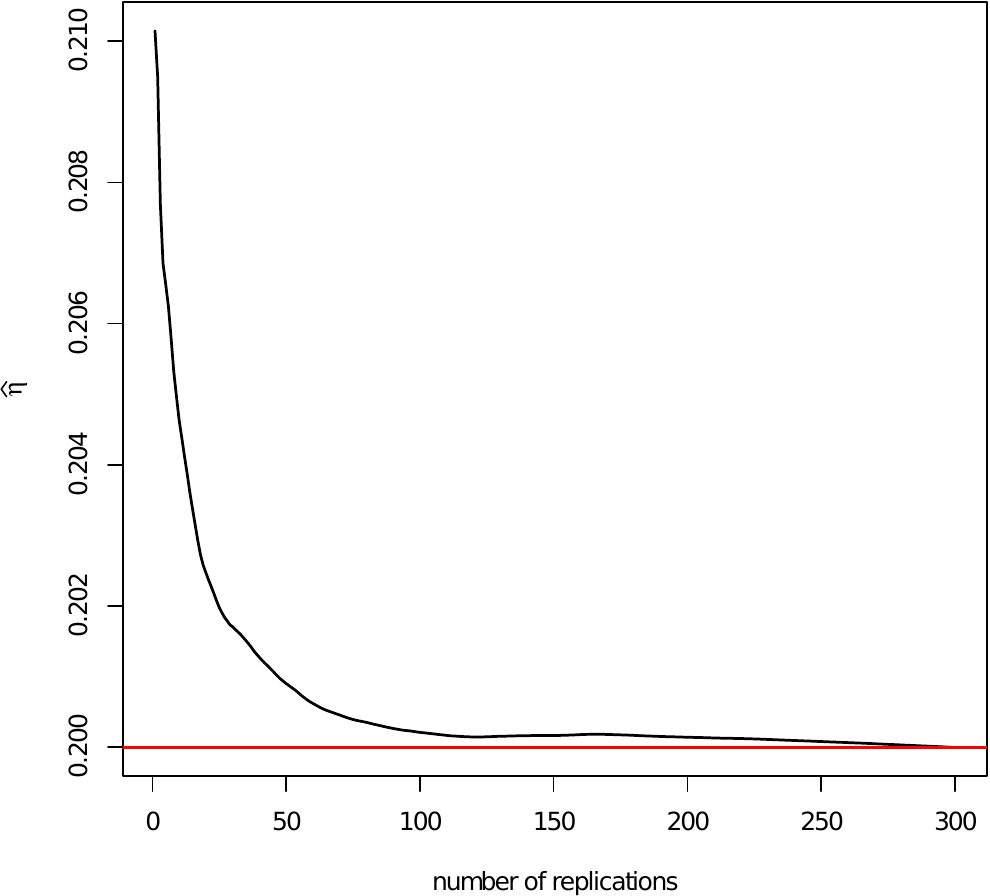}}\qquad
								\subfigure[]{\includegraphics[scale=0.35]{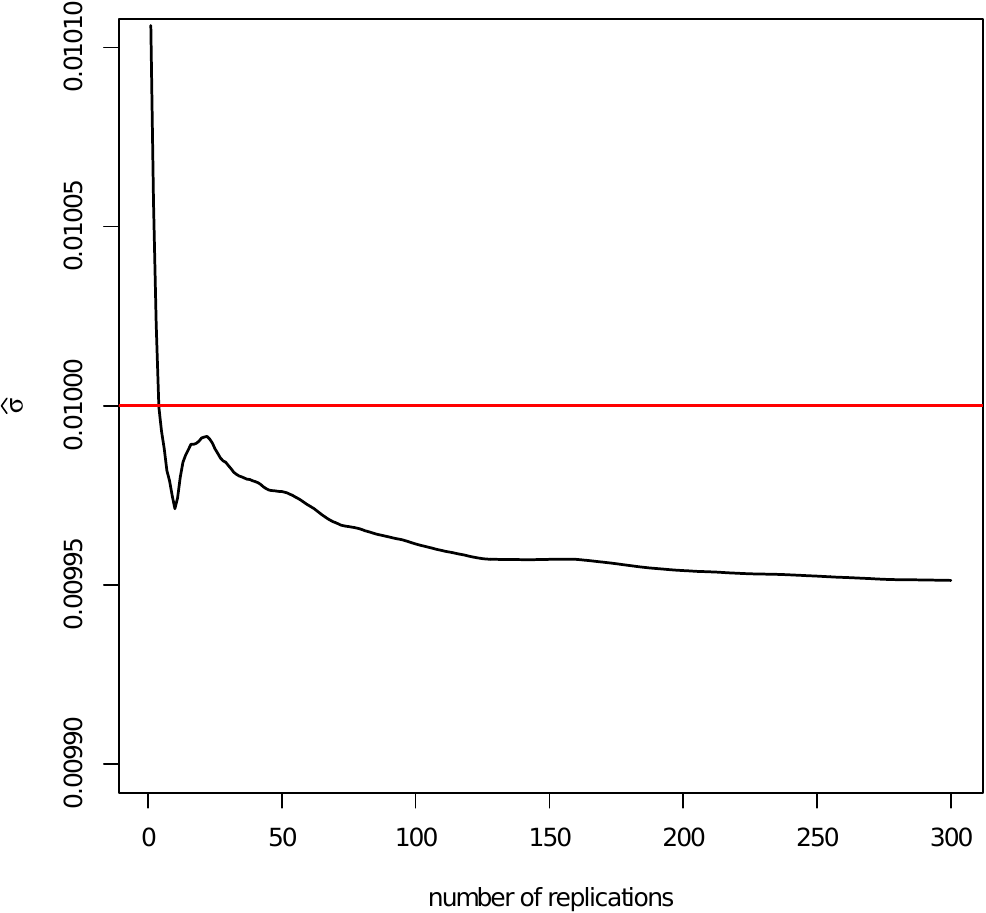}}
					\caption{Convergence of the estimates of the parameter (a) $\widehat q$, (b) $\widehat k$, (c) $\widehat \eta$, (d) $\widehat \sigma$ in the case $\sigma=0.01$. The red lines represent the real parameters.}
					\label{fig:Figure17marzo}
				\end{figure}

				Let us now focus on the estimation of the  parameters $q,k,\eta$, $\sigma$ and of the function $C(t)$.
				\subsection{Step 1: Estimate of $\xi$}\label{Step1}
				For any $j=0,\dots,50$ we approximate the sample mean of $\widetilde x_i (t_j)$, $i=1,\dots,25$, with a natural cubic spline $S(t)$. Then, through its derivatives, we also determine an approximation of the observed inflection time instant.  See Figure \ref{fig:Figure7} for the plot of $\frac{\textrm{d}}{\textrm{d}t} S(t)$ and $\frac{\textrm{d}^2}{\textrm{d}t^2} S(t)$, and Table \ref{tab:Tabella1} for a comparison between the theoretical inflection point $t_I$ and observed inflection instant.
				\begin{figure}[h]
					\centering
					\subfigure[]{\includegraphics[scale=0.4]{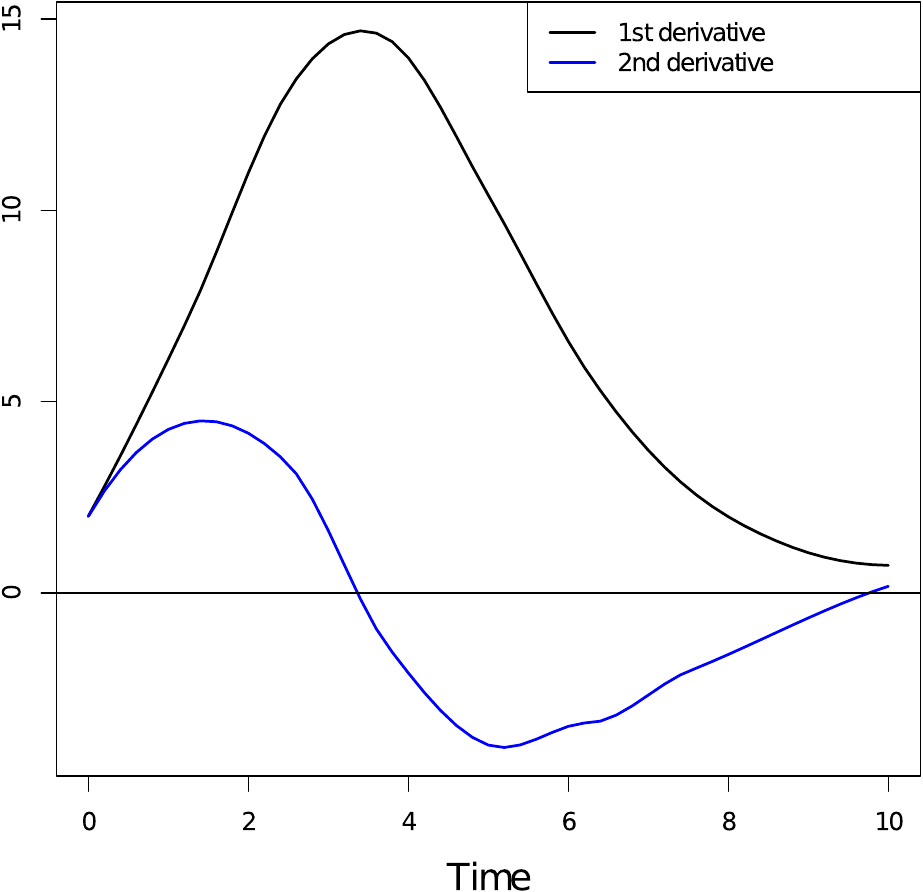}}\qquad
					\subfigure[]{\includegraphics[scale=0.4]{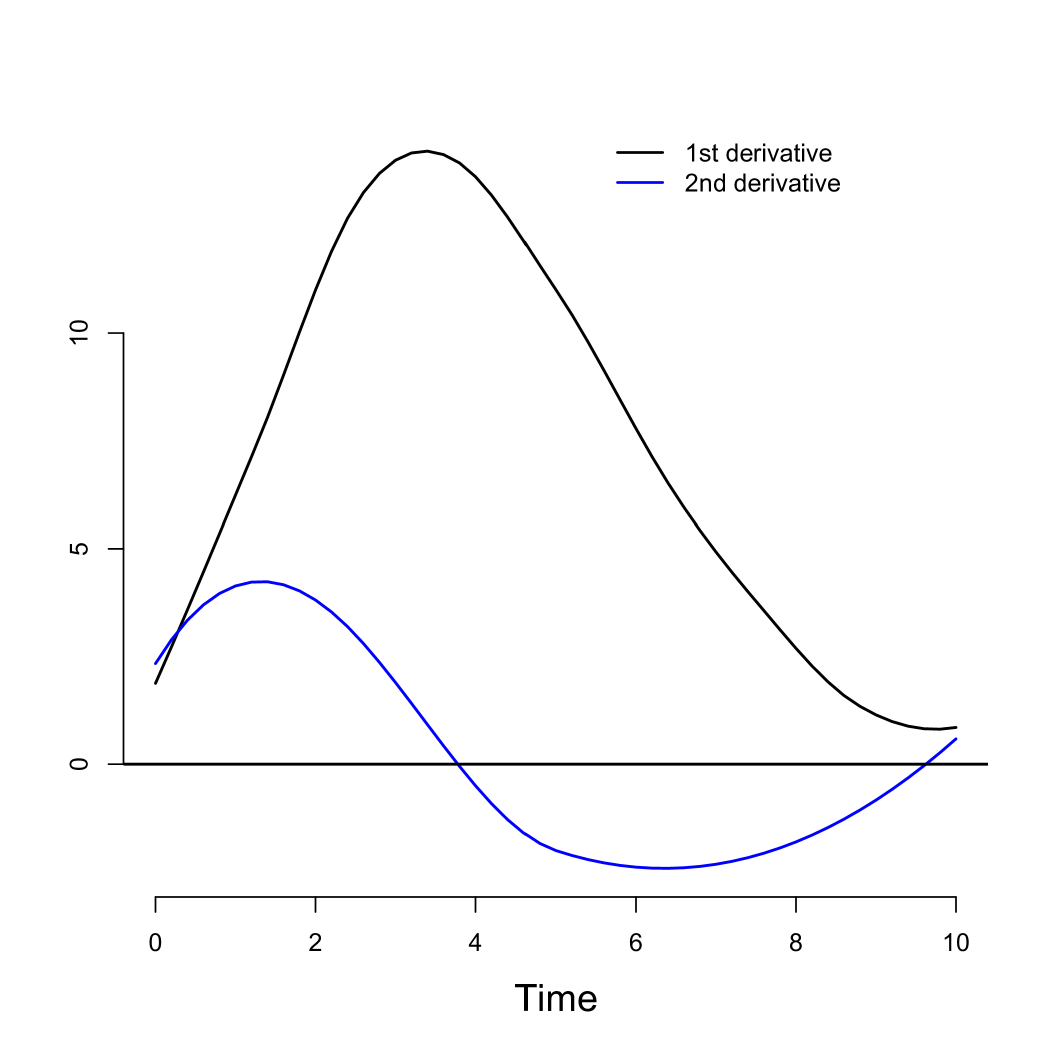}}
					\caption{The $1$st and $2$nd derivative of the spline $S(t)$ for (a) $\sigma=0.01$ and (b) $\sigma=0.02$.}
					\label{fig:Figure7}
				\end{figure}

				\begin{table}[h]
					\caption{The theoretical and the observed inflection instant,  and the corresponding absolute relative error (RAE).}
					\label{tab:Tabella1}
					\centering
					\begin{tabular}{c|c|c|c|c}
						parameter              & theoretical inflection instant & $\sigma$ & observed inflection instant & RAE          \\ \hline
						\multirow{2}{*}{$t_I$} & \multirow{2}{*}{$3.32193$}  & $0.01$   & $3.36223$                & $0.01213$ \\ \cline{3-5}
						&                                & $0.02$   & $3.49254$                & $0.05136$
					\end{tabular}
				\end{table}
				To apply SA or ALO algorithm, we first compute the intervals $I_\nu$ with $\nu\in\{q,k,\eta,\sigma\}$ using the strategy described in Section \ref{SA}. We remark that in this first part of the procedure the knowledge of the exact value of $t^*$ is irrelevant since we use the data over the time interval $[t_0,\bar t]$ where $\bar t$ represents the time instant corresponding to the boundary $S=(1+p)x^*_I$. The obtained intervals are given in Table \ref{tab:Tabella2}. After that, we apply SA  and ALO algorithms. 
				The results, given in Table \ref{tab:Tabella2}, show that even if the width of the intervals $I_\nu$, $\nu\in\{q,k,\eta,\sigma\}$, is large, the RAE of the resulting MLEs is small.
				In Table \ref{tab:Tabella15feb} we provide the MLEs obtained by considering different number of replications. We remark that the estimates are still reasonable even when the number of replications is smaller.
				By comparing the RAE of the MLEs obtained via SA and via ALO, we note that the results obtained via SA are better than those obtained via ALO. For this reason, from now on, we adopt the MLEs obtained via SA.
				%
				%
				%
				\begin{table}[]
					\caption{The real values, the minimum and maximum width of the bounded intervals,  the MLEs (determined with SA and ALO algorithm) and the corresponding absolute relative error (RAE) of the parameters.}
					\label{tab:Tabella2}
					\centering
					\small
					\begin{tabular}{l|l|l|l|l|l|l}
						parameter                   & real value                & min width                                 & max width                      & method & MLE       & RAE       \\ \hline
						\multirow{2}{*}{$q$}        & \multirow{2}{*}{$2$}      & \multirow{2}{*}{$4.15554$}   & \multirow{2}{*}{$5.11776$} & SA   & $1.99990$ & $0.00005$ \\ \cline{5-7}
						&                           &                                             &                            & ALO & $2.03645 $ & $0.01822 $ \\ \hline
						\multirow{2}{*}{$k$}        & \multirow{2}{*}{$0.5$}    & \multirow{2}{*}{$0.63553$}    & \multirow{2}{*}{$0.67700$} & SA   & $0.49992$ & $0.00015$ \\ \cline{5-7}
						&                           &                                             &                            & ALO & $0.50340$ & $0.00680 $ \\ \hline
						\multirow{2}{*}{$\eta$}     & \multirow{2}{*}{$0.2$}    & \multirow{2}{*}{$1.47345$} & \multirow{2}{*}{$1.97406$} & SA   & $0.19999$ & $0.00001$ \\ \cline{5-7}
						&                           &                                             &                            & ALO & $0.20703$ & $0.03514 $ \\ \hline
						\multirow{2}{*}{$\sigma$} & \multirow{2}{*}{$0.01$} & \multirow{2}{*}{$0.1$}    & \multirow{2}{*}{$0.1$}    & SA   & $0.00995$ & $0.00487$ \\ \cline{5-7}
						&                           &                                             &                            & ALO & $0.00988$ & $0.01200 $
					\end{tabular}
					\\ \vspace{0.3cm}
						
					\begin{tabular}{l|l|l|l|l|l|l}
						parameter                   & real value                & min width                                    & max width                      & method & MLE       & RAE       \\ \hline
						\multirow{2}{*}{$q$}        & \multirow{2}{*}{$2$}      & \multirow{2}{*}{$3.78010$}   & \multirow{2}{*}{$5.68256$} & SA   & $1.99645$ & $0.00177$ \\ \cline{5-7}
						&                           &                                             &                            & ALO & $2.08720$ & $0.04360$ \\ \hline
						\multirow{2}{*}{$k$}        & \multirow{2}{*}{$0.5$}    & \multirow{2}{*}{$0.60332$}    & \multirow{2}{*}{$0.69750$} & SA   & $0.49942$ & $0.00115$ \\ \cline{5-7}
						&                           &                                             &                            & ALO & $0.50685$ & $0.01370$ \\ \hline
						\multirow{2}{*}{$\eta$}     & \multirow{2}{*}{$0.2$}    & \multirow{2}{*}{$1.36523$} & \multirow{2}{*}{$2.39929$} & SA   & $0.19941$ & $0.00295$ \\ \cline{5-7}
						&                           &                                             &                            & ALO & $0.21742$ & $0.08710$ \\ \hline
						\multirow{2}{*}{$\sigma$} & \multirow{2}{*}{$0.02$} & \multirow{2}{*}{$0.1$}    & \multirow{2}{*}{$0.1$}    & SA   & $0.01993$ & $0.00337$ \\ \cline{5-7}
						&                           &                                             &                            & ALO & $0.01994$ & $0.00293$
					\end{tabular}
					
				\end{table}

				\begin{table}[]
					\caption{The MLEs (determined with SA and ALO algorithm) and the corresponding absolute relative error of the parameters with different numbers of replications. The real values of $q$, $k$, $\eta$ and $\sigma$ are given in the first row of both the tables.}
					\label{tab:Tabella15feb}
					\centering
					\begin{tabular}{llllll}
						Method & No. of replications & $q=2$           & $k=0.5$           & $\eta=0.2$        & $\sigma=0.01$     \\ \hline
						SA     & 30               & $\hat q=2.00928$    & $\hat k=0.50094$    & $\hat\eta=0.20171$   & $\hat\sigma=0.00998$ \\
						& (relative error:)   & ($0.00464$)    & ($0.00187$)    & ($0.00854$)  & ($0.00167$)    \\ \hline
						ALO    & 30               & $\hat q=2.00880$   & $\hat k=0.50098$   & $\hat\eta=0.20166$   & $\hat\sigma=0.00984$ \\
						& (relative error:)   &$(0.00440)$  & $(0.00195)$   &$(0.00831)$  & $(0.01616)$   \\ \hline
						SA     & 300               & $\hat q=1.99990$    & $\hat k=0.49992$   & $\hat\eta=0.19999$   & $\hat\sigma=0.00995$ \\
						& (relative error:)   & ($0.00005$)  & ($0.00015$)  & ($0.00001$)  & ($0.00487$)    \\ \hline
						ALO    & 300               & $\hat q=2.03645$    & $\hat k=0.50340$  & $\hat\eta=0.20703$ & $\hat\sigma=0.00988$ \\
						& (relative error:)   & $(0.00680)$ & $(0.00681)$ & $(0.03514)$ & $(0.01200)$
					\end{tabular}
					\\ \vspace{0.3cm}
					
					\begin{tabular}{llllll}
						Method & No. of replications & $q=2$           & $k=0.5$           & $\eta=0.2$        & $\sigma=0.02$     \\ \hline
						SA     & 30               & $\hat q=2.00871 $    & $\hat k=0.50074$    & $\hat\eta=0.20187$   & $\hat\sigma=0.02001$ \\
						& (relative error:)   & ($0.00435$)    & ($0.00148$)    & ($0.00934$)  & ($0.00031$)    \\ \hline
						ALO    & 30               & $\hat q=2.02218$    & $\hat k=0.50107$   & $\hat\eta=0.20448$   & $\hat\sigma=0.01999$ \\
						& (relative error:)   & $(0.01109)$  & $(0.00215)$    & $(0.02238)$  & $(0.00060)$   \\ \hline
						SA     & 300               & $\hat q=1.99645 $    & $\hat k=0.49942$    & $\hat\eta=0.19941$   & $\hat\sigma=0.01993$ \\
						& (relative error:)   & ($0.00177$)    & ($0.00115$)    & ($0.00295$)  & ($0.00337$)    \\ \hline
						ALO    & 300               & $\hat q=2.08720$    & $\hat k=0.50685$   & $\hat\eta=0.21742$   & $\hat\sigma=0.01994$ \\
						& (relative error:)   & $(0.04360)$ & $(0.01370)$ & $(0.08710)$ & $(0.00293)$
					\end{tabular}
				\end{table}
					
					
					\subsection{Step 2: Estimate of $t^*$}\label{step2}
					In order to compute the estimated time instant $\hat t^*$, two options are available: (i) we can use the MLEs obtained so far and compute $\hat t^*$ by means of the deterministic formula given in Eq.\ \eqref{tstar}, or (ii) we can compute  $\hat t^*$ as the mean of first-passage-time (FPT) of the estimated process (i.e. the process whose parameters are estimated by the MLEs) through the boundary $S=(1+p)x_i$ being $x_i$ the observed inflection point, in agreement with Eq.\ \eqref{Sdef} and $p=0.5$. The results obtained by means of procedure (i) are shown in Table \ref{tab:Tabella3bis}: note that the estimated time instant has an absolute relative error (RAE) equal to $1.09\%$  for $\sigma=0.01$ and $1.00\%$ for $\sigma=0.02$. For the procedure (ii), as done in other similar works (cf.\ for instance, Di Crescenzo \textit{et al.}\ (2022) \cite{DiCrescenzoetal2022}), we use the \texttt{R} package \texttt{fptdApprox} (see Rom\'an-Rom\'an \textit{et al.}\ (2008) \cite{Romanetal2008} and \cite{fptdApprox})  to approximate the FPT density of the process through the boundary $S$.
					Figure \ref{fig:Figura9} shows the approximated FPT density and the FPT location function.
					\begin{figure}[h]
						\centering
						\subfigure[]{\includegraphics[scale=0.35]{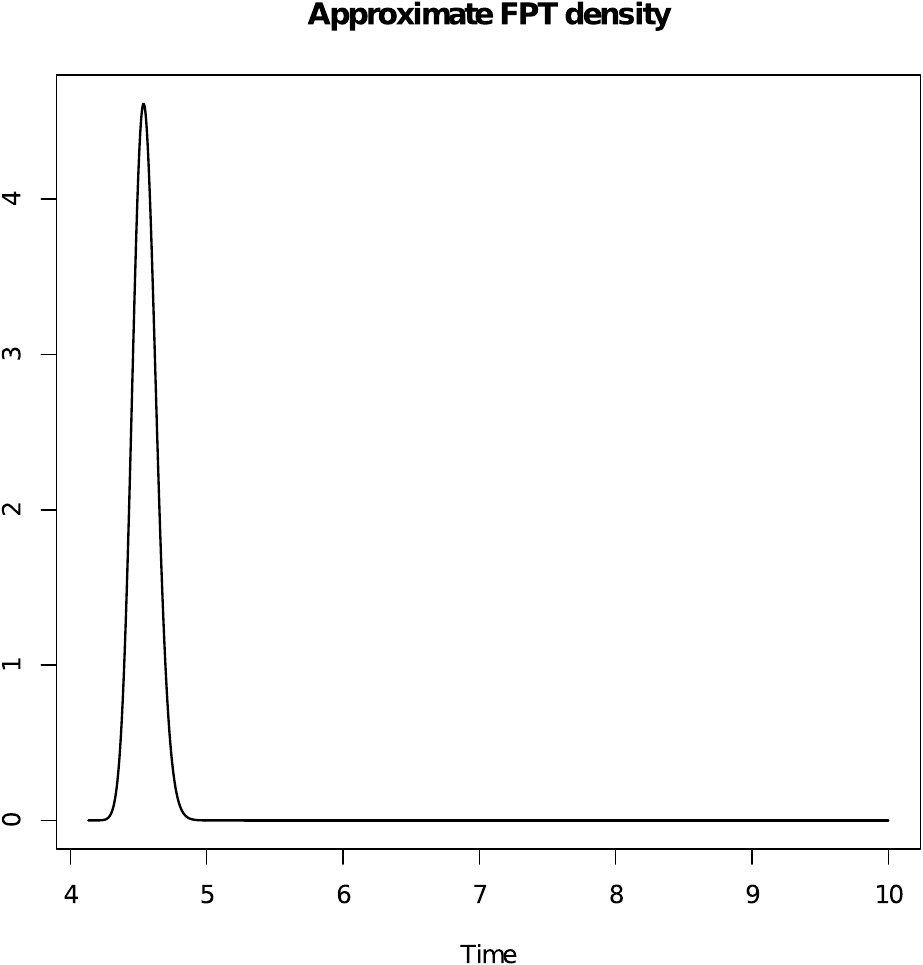}}\qquad
						\subfigure[]{\includegraphics[scale=0.35]{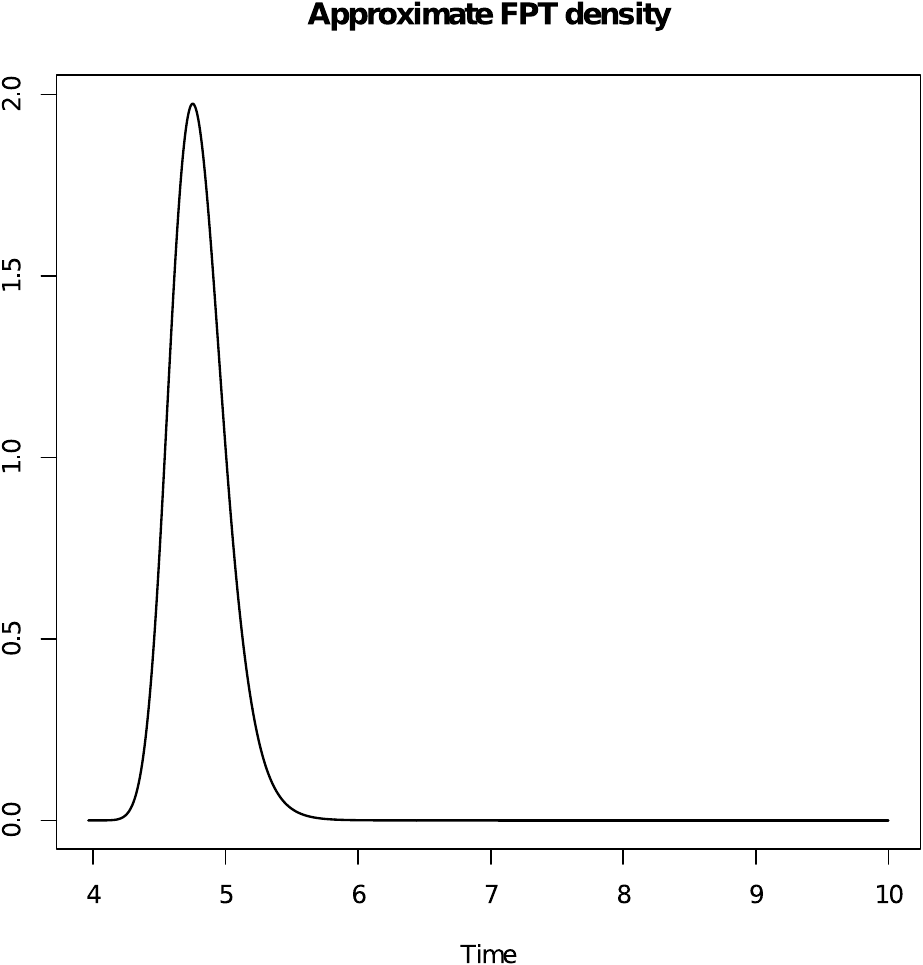}}\\
						\subfigure[]{\includegraphics[scale=0.35]{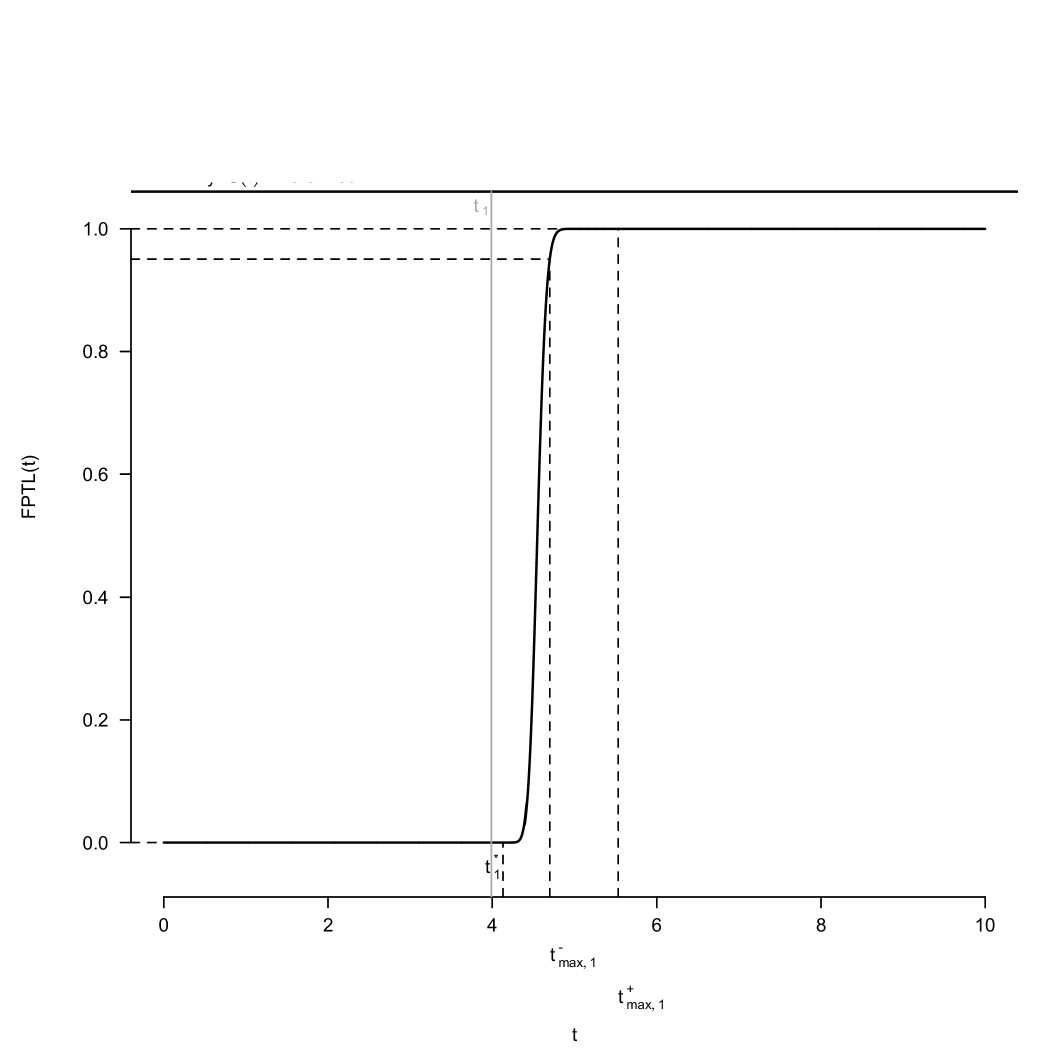}}\qquad
						\subfigure[]{\includegraphics[scale=0.35]{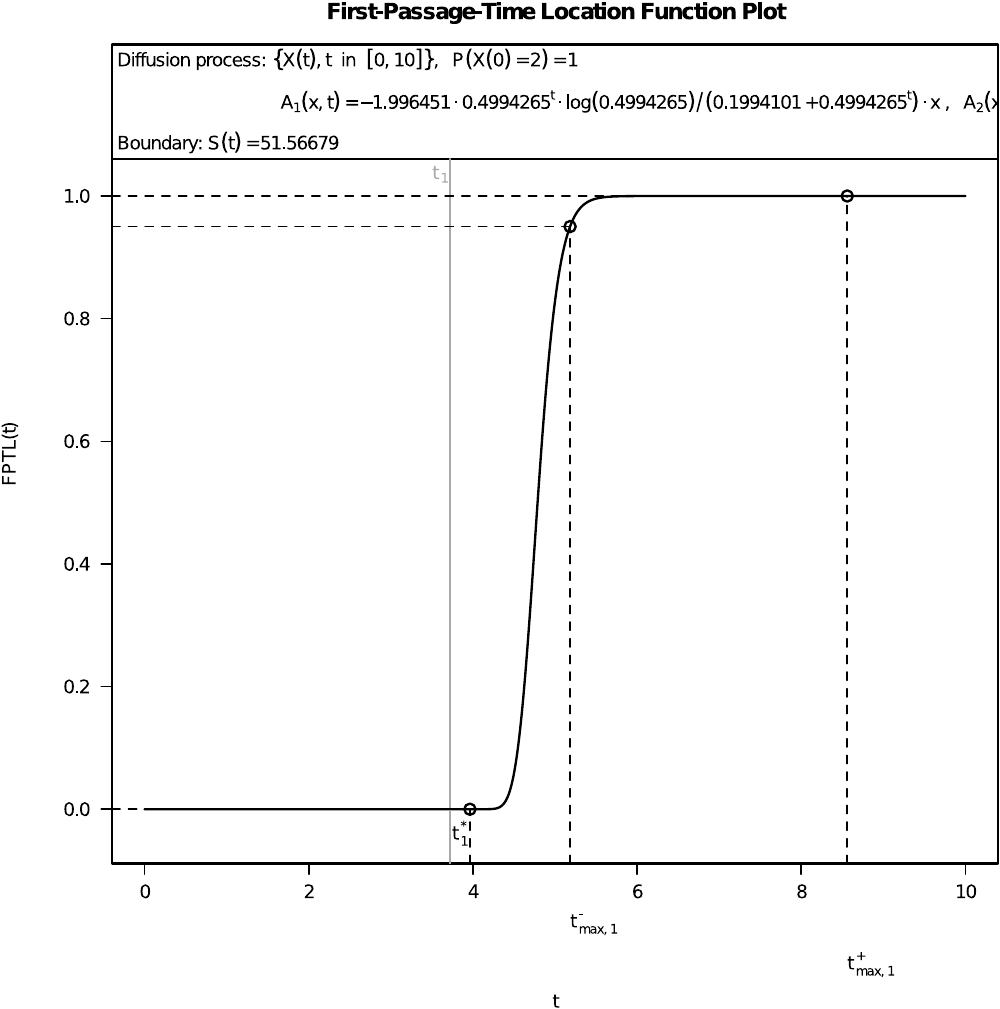}}
						\caption{The approximated FPT density function with the MLEs obtained by SA algorithm for (a) $\sigma=0.01$ and (c) $\sigma=0.02$. The FPT location function with the MLEs obtained by SA algorithm for (b) $\sigma=0.01$ and (d) $\sigma=0.02$.  }
						\label{fig:Figura9}
					\end{figure}
					In Table \ref{tab:Tabella3bis} we provide the mean, the standard deviation, the mode, the $1$st, the $5$th and the $9$th deciles of the FPT. In this case, the RAE between the estimated and the theoretical inflection time is $2.71\%$ for $\sigma=0.01$ and $8.45\%$ for $\sigma=0.02$.

					\begin{table}[]
						\caption{The theoretical value and the estimate obtained via Procedure (i) of the time instant $t^*$. The mean, the mode, the $1$st, the $5$th and the $9$th decile and the standard deviation (st.\ dev.) of the FPT of the approximated diffusion process through the boundary $S=(1+p)x_i$,  for $p=0.5$. The absolute relative error of the estimates is also provided. The quantities are obtained by using SA.
						}
						\label{tab:Tabella3bis}
						\tiny
						\centering
							\begin{tabular}{l|l|l|l|l|l|l|l|l|l}
								instant &
								th.\ val. &
								$\sigma$ &
								result of Proc.\ (i) &
								mean &
								mode &
								1st dec. &
								5th dec. &
								9th dec. &
								st.\ dev. \\ \hline
								\multirow{4}{*}{$t^*$} &
								\multirow{4}{*}{$4.42611$} &
								$0.01$ &
								$4.47456$ &
								$4.54597$ &
								$4.55862$ &
								$4.53713$ &
								$4.43811$ &
								$4.54388$ &
								$0.09363$ \\
								&  & (relative error:) & ($0.01095$) & ($0.02708$) & ($0.02994$) & ($0.02508$) & ($0.00271$) & ($0.02661$) & -         \\ \cline{3-10}
								&  & $0.05$            & $4.47060$   & $4.80032$   & $4.75344$   & $4.54871$   & $4.78756$   & $5.08003$   & $0.78458$ \\
								&  & (relative error:) & ($0.01005$) & ($0.08455$) & ($0.07395$) & ($0.02770$) & ($0.08166$) & ($0.14774$) & -
							\end{tabular}		
					\end{table}

					\subsection{Step 3: Estimate of $C(t)$}
					With the aim of estimating the function $C(t)$, we consider the mean of the stochastic process $X(t)$ (cf.\ Eq.\ \eqref{difproc1}) which corresponds to the classical deterministic curve $x_\theta(t)$ (cf.\ Eq.\ \eqref{BR}). The estimated function $\widehat C (t)$ has been obtained by means of Eq.\ \eqref{Cstim}, and it is plotted in Figure \ref{fig:Figure8}(a)-(b).  In particular, the plot shows a worsening of the estimation of $C(t)$ for large times, as expected.
					
					\begin{figure}[h]
						\centering
						\subfigure[]{\includegraphics[scale=0.35]{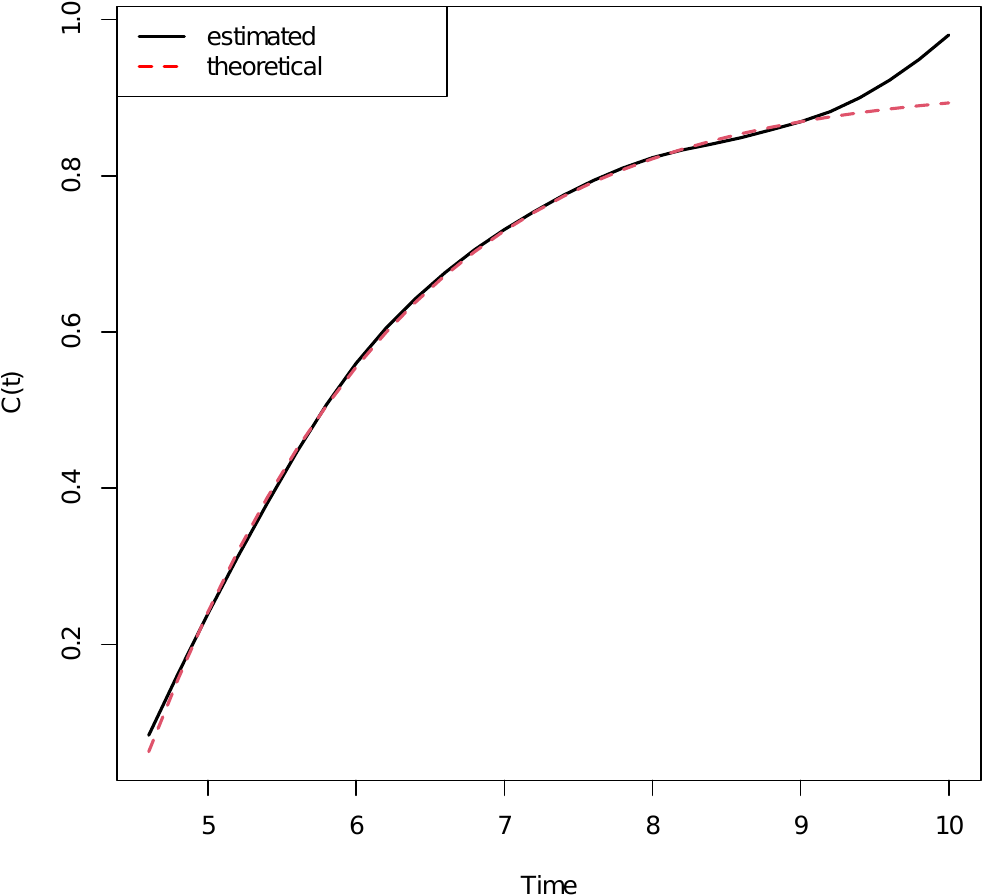}}\qquad
						\subfigure[]{\includegraphics[scale=0.35]{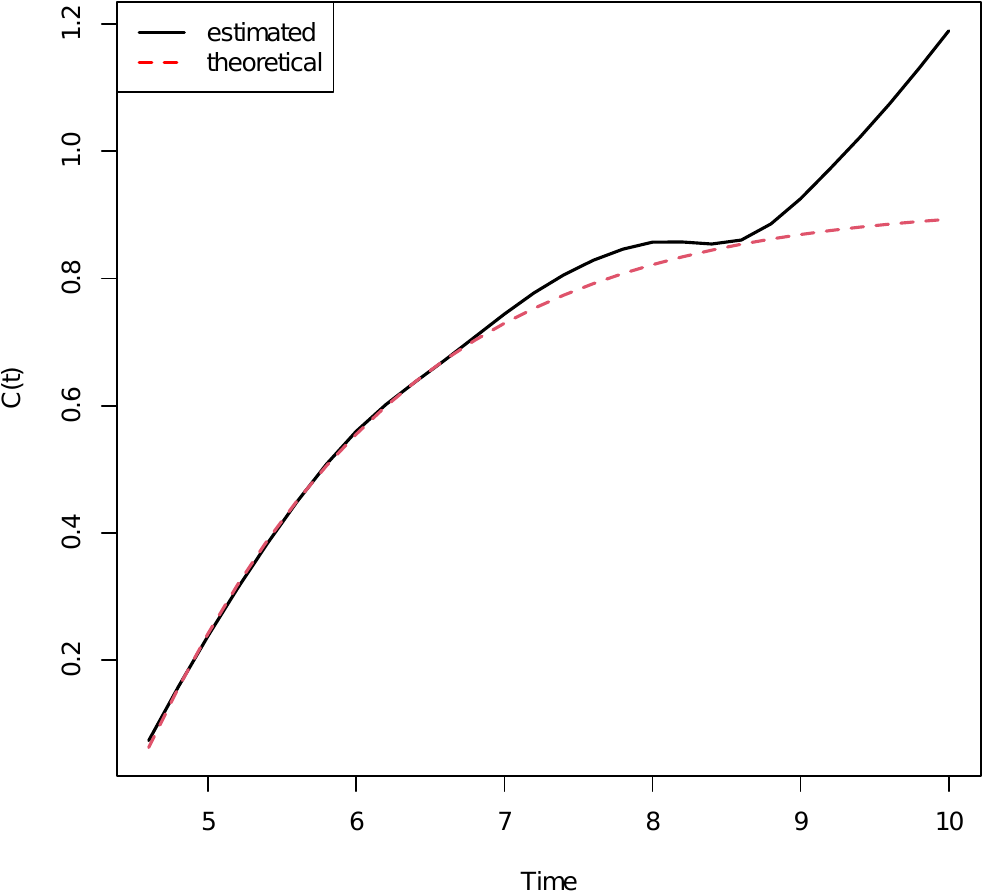}}\\
						\subfigure[]{\includegraphics[scale=0.35]{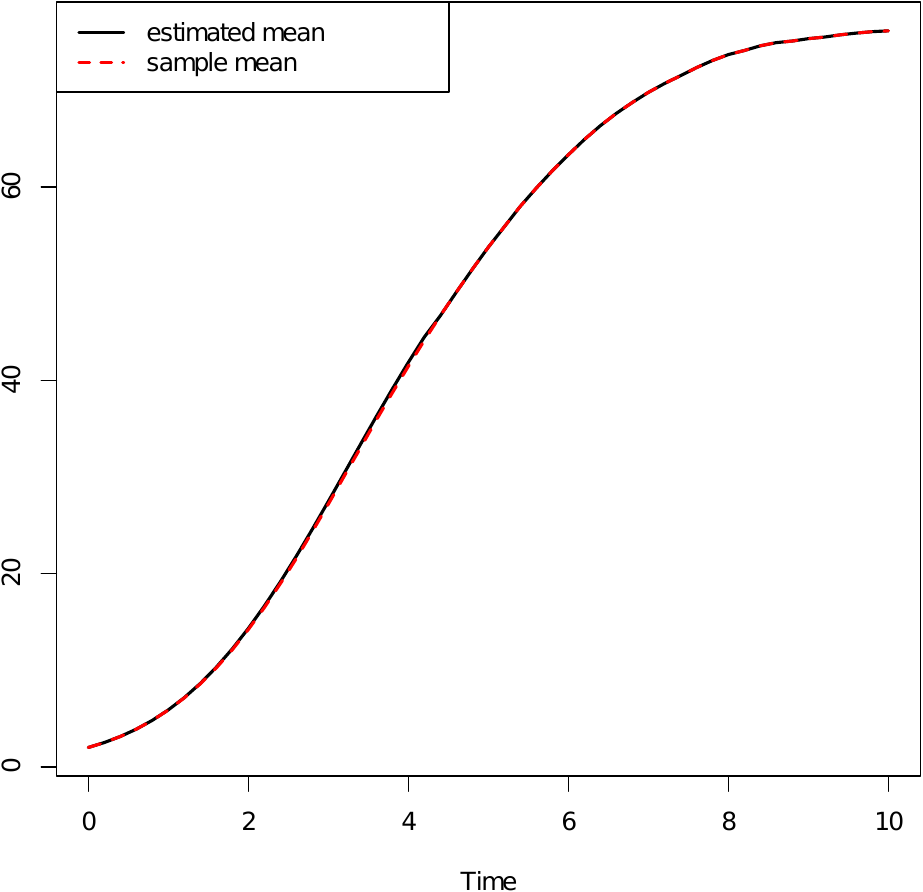}}\qquad
						\subfigure[]{\includegraphics[scale=0.35]{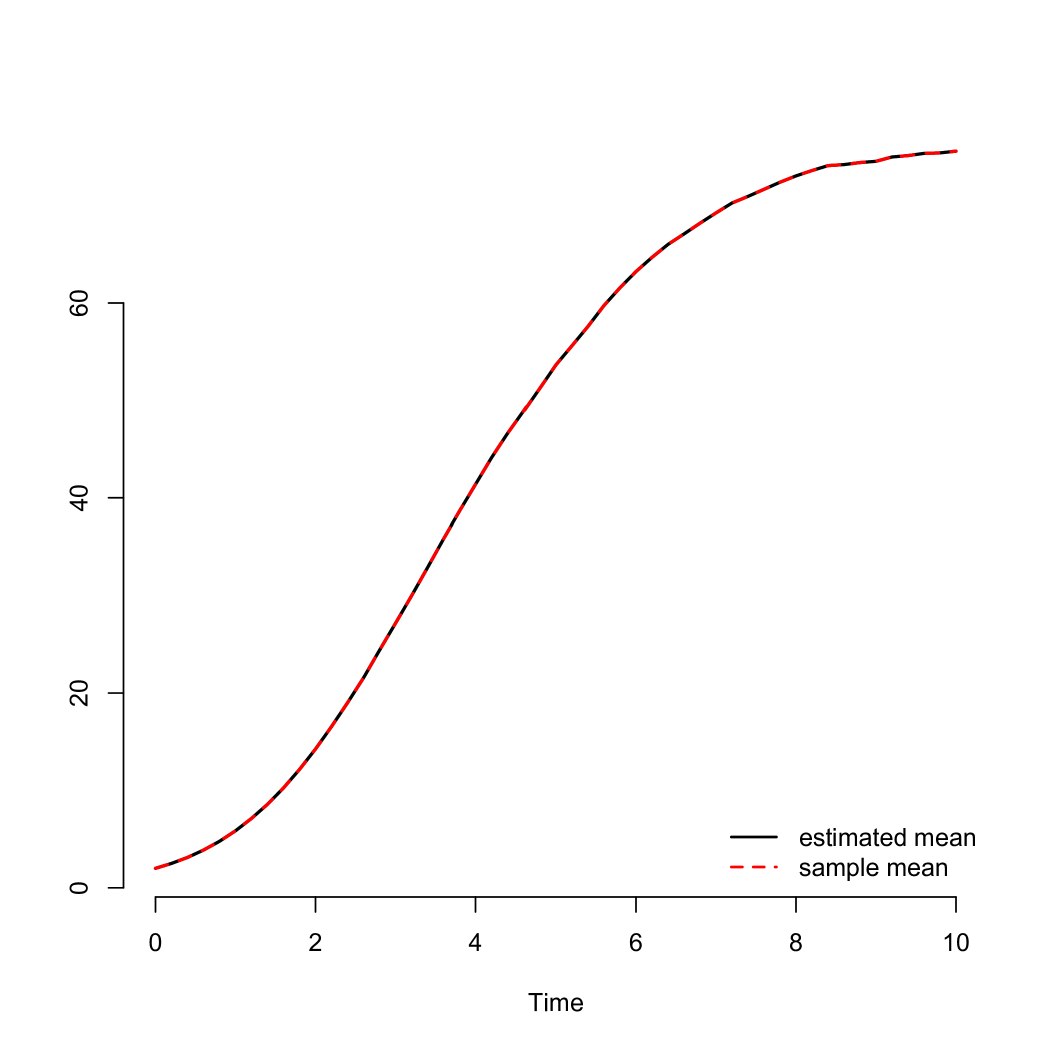}}\\
						\caption{The theoretical function $C(t)$ and the estimated function $\widehat C(t)$ for (a) $\sigma=0.01$ and (c) $\sigma=0.02$. The sample mean and the estimated mean for (b) $\sigma=0.01$ and (d) $\sigma=0.02$. }
						\label{fig:Figure8}
					\end{figure}
					In order to have a quantitative measure of the goodnees of its estimation, we have also computed the RAE between the theoretical function $C(t)$ and the estimated function $\widehat C(t)$, i.e.
					\begin{equation*}
						RAE=\frac{1}{N^*}\sum_{j=1}^{N^*}\frac{\left|C(t_j)-\widehat{C}(t_j)\right|}{C(t_j)},
					\end{equation*}
					where $N^*$ denotes the number of observation times between $\widehat t^*$ and $T$.
					When $\sigma=0.01$, we have $RAE\simeq 0.02101$, whereas if  $\sigma=0.02$, we find  $RAE\simeq0.05775$.
					
					Using the estimated function $\widehat C(t)$, we obtain the estimated mean  $\widehat{ \mathsf{E}} [\widetilde X (t)]$ of the modified process given in Eq.\ \eqref{difproc}. Both the estimated mean  and the sample mean are plotted in Figure  \ref{fig:Figure8}(c)-(d) and they almost coincide. Clearly, when $t\to T$ the difference between $C(t)$ and $\widehat C(t)$ is more relevant.
					\section{Application to real data of  oil production}\label{Sect9}
					The considered model (cf.\ Eq.\ \eqref{difproc}) is reasonable to describe phenomena in which the growth rate may be modified starting \textcolor{blue}{from} a specific instant by known external factors. A representative example will be considered in this section. It is concerning  oil production since external causes may be effective at a certain time in order to increase the amount of extracted oil.
					Hence, let us consider an application of the stochastic model introduced in Section \ref{DiffProc} to real data concerning oil production in France, taken from \cite{linkdata} and  reported in Figure \ref{fig:Figure16feb}(a).
					\begin{figure}[h]
						\centering
						\subfigure[]{\includegraphics[scale=0.4]{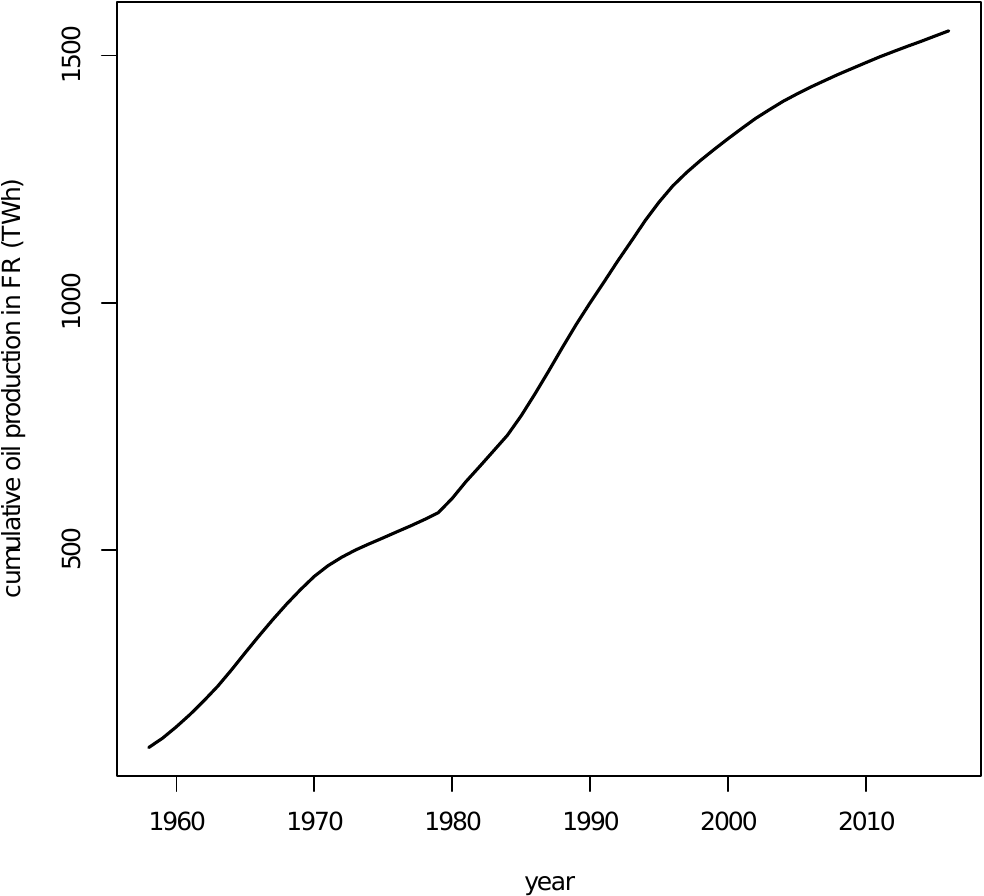}}\qquad
						\subfigure[]{\includegraphics[scale=0.4]{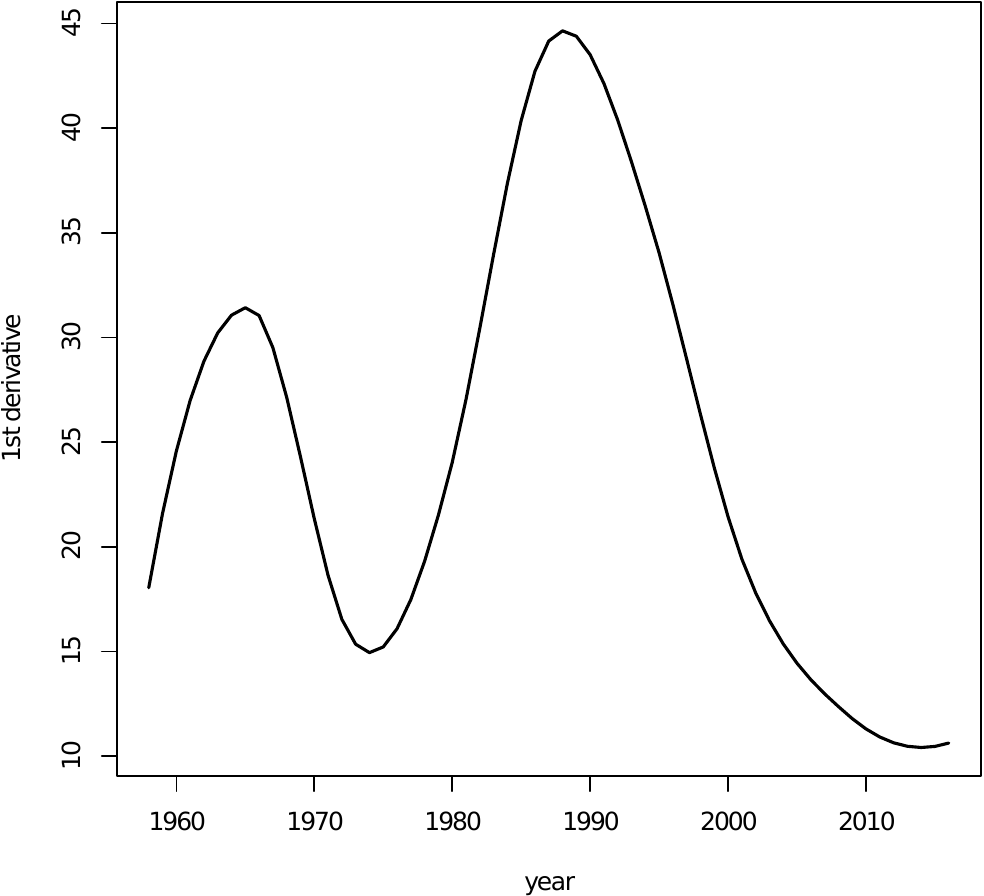}}\\
						\caption{(a) Cumulative oil production of France  and (b) 1st derivative of the cubic spline interpolating the data between $1958$ and $2016$. }
						\label{fig:Figure16feb}
					\end{figure}
					The amount of produced oil  is measured in TWh (Terawatt per hour). We consider the time interval from $1958$ (the first year in which the cumulative oil production exceeds $100$ TWh) to $2016$. In this case, the curve representing  the yearly oil production exhibits more than one peak, as in Laherrère (2000) \cite{Laherrere}. Our main goal is to establish that the considered model \eqref{difproc} is  in agreement with the observed real data by  following the steps of \textbf{Procedure 1}.\\
					
					\textbf{Step 1} As illustrated in Section \ref{Step1}, we approximate the data with a natural cubic spline  $S(t)$ and we determine an approximation of the inflection time instant $t_I$ by means of the derivatives of $S(t)$.  The derivative of $S(t)$  is \textcolor{blue}{plotted} in Figure \ref{fig:Figure16feb}(b).  The approximated inflection time instant is given by $t_I\simeq1964$. The intervals $I_\nu$ with $\nu\in\{q,k,\eta,\sigma\}$ and the corresponding MLEs are given in Table \ref{tab:Tabella16feb}.
					The MLEs have been determined by means of the data over the restricted interval $[t_0,\bar t]$, where $\bar t$ is the time instant corresponding to the boundary $S=(1+p)x_I^*$ being $x_I^*$ the observed inflection point and $p>0$. Unlike the examples examined in the simulation study (cf.\ Section \ref{Simulation}) where $p$ was assumed to be known, we now examine the case where $p$ is previously unknown. For estimate purposes, we consider a set of reasonable values for $p$. Among these, we select the value $\widehat p$ which finally minimizes the RAE between the estimated mean and the sample mean (see Step 3 below).\\
					\begin{table}[]
						\caption{The bounded intervals, their width and the MLEs (determined with SA) of the parameters considering different value of $\widehat p$.}
						\label{tab:Tabella16feb}
						\centering
						\begin{tabular}{l|l|l|l|l}
							$\widehat p$                    							& parameter                   	& interval                                                 						& width                     						&
							 MLE              \\ \hline
							\multirow{4}{*}{$0.3$}  		 	& {$q$}        						&{$I_q=[3.16349\cdot 10^{-2},2.28375]$}                  		&{$2.25211$} 			     			& $0.84526$         \\ \cline{2-5}
							&{$k$}        						&{$I_k=[4.71289\cdot10^{-8},0.94024]$}  		&{$0.94024$} 		    		 	& $0.78784$         \\ \cline{2-5}
							&{$\eta$}     					   &{$I_\eta=[0, 2.01897]$} 										&{$2.01897$} 		     			& $0.13325$         \\ \cline{2-5}
							&{$\sigma$} 					&{$I_{\sigma}=(0,0.1)$} 											&{$0.1$} 					 				& $0.00370$ \\ \hline
							\multirow{4}{*}{$0.5$}  		 	& {$q$}        						&{$I_q=[7.88791\cdot 10^{-2},6.04903]$}                  		&{$6.04114$} 			     			& $0.72390$         \\ \cline{2-5}
							&{$k$}        						&{$I_k=[1.11653\cdot10^{-3},0.95147]$}  		&{$0.95035$} 		    		 	& $0.76390$         \\ \cline{2-5}
							&{$\eta$}     					   &{$I_\eta=[0, 5.47622]$} 										&{$5.57622$} 		     			& $0.10121$         \\ \cline{2-5}
							&{$\sigma$} 					&{$I_{\sigma}=(0,0.1)$} 											&{$0.1$} 					 				& $0.00359$ \\ \hline
							\multirow{4}{*}{$0.7$}  		 &{$q$}        						  &{$I_q=[2.0322\cdot 10^{-1},5.43986]$}                 			&{$5.23664$} 				     			& $0.71189$         \\ \cline{2-5}
							&{$k$}       	 					&{$I_k=[1.63578\cdot 10^{-1},0.93236]$}                 					&{$0.76878$} 					    			& $0.76115$         \\ \cline{2-5}
							&{$\eta$}    	 					&{$I_\eta=[6.36869\cdot 10^{-3},4.40902]$}          &{$4.40265$} 					     			& $0.09807$         \\ \cline{2-5}
							&{$\sigma$} 					&{$I_{\sigma}=(0,0.1)$}  										& {$0.1$}   						     				& $0.00325$  \\
						\end{tabular}
					\end{table}

					\textbf{Step 2} We compute the estimated value of $t^*$ by means of the two available  procedures: (i) using the MLEs and compute $\hat t^*$ by means of the deterministic formula given in Eq.\ \eqref{tstar}, or (ii) computing  $\hat t^*$ as the mean of the FPT of the estimated process through $S=(1+\widehat p)x_I^*$ (see Section \ref{step2} for further details). The obtained results are given in Table \ref{tab:Tabella16feb2}. As can be noticed, for any value of $\widehat p$, the deterministic value of $t^*$ is different whereas the mean of the FPT is almost the same.
					\begin{table}[]
						\caption{The deterministic value of the time $t^*$, the mean, the mode, the $1$st, the $5$th and the $9$th decile and the standard deviation (st.\ dev.) of the FPT of the approximated diffusion process through the boundary $S$ for $t_0=1958$ and different values of $\widehat p$.}
						\label{tab:Tabella16feb2}
						\small
						\begin{tabular}{l|l|l|l|l|l|l|l|l}
							instant 								& $\widehat p$                       & det.\ val. 			& mean      			& mode      		& 1st dec. 			& 5th dec. 			& 9th dec. 			& st. dev. \\ \hline
							{$t^*-t_0$} 						&{$0.3$} 				& $10.59037$ 			& $8.24092$ 		& $8.23754$ 	& $8.11223$ 	& $8.24039$ 	& $8.37425$ 	& $0.10849$ \\  \cline{2-9}
							&{$0.5$} 				& $12.29745$ 			& $9.84548$ 		& $9.83844$ 	& $9.66999$ 	& $9.84610$ 	& $10.02986$ 	& $0.15087$ \\  \cline{2-9}
							&{$0.7$}    	     			& $15.68183$  		& $11.69182$ 		& $11.67821$ 	& $11.44741$  	& $11.68847$  	& $11.94491$  	& $0.20993$ \\
						\end{tabular}
					\end{table}
					We select the deterministic FPT to estimate $t^*$ for the results given in the simulation study  (cf.\ Section \ref{Simulation}). We point out that, for $\widehat p=0.5$, the specific value of the estimate $\hat t^*$ is $\hat t^*\simeq1970$ (by considering the deterministic estimate of $t^*$). This  estimate is in agreement with new oil explorations started in 70s as a consequence of the purpose of French government to invest in energy independence after the severe crisis emerged during those years (as reported in Lieber (1979) \cite{motivazioni}).\\
					
					\textbf{Step 3} Using the estimated function $\widehat C(t)$ obtained by means of Eq.\ \eqref{Cstim}, we compute the mean $\widehat {\textsf E} (\widetilde X(t))$ of the modified process.
					The estimate $\widehat p$ is selected by minimizing the RAE between the estimated mean and the sample mean. As can be deducted from Table \ref{tab:Tabella21feb}, in this case, we have $\widehat p=0.5$ which corresponds to the boundary $S=1.5 x_I^*$.
					\begin{table}[h]
						\centering
						\caption{The RAE between the estimated mean and the sample mean by considering different values of $\widehat p$.}
						\label{tab:Tabella21feb}
						\begin{tabular}{l|l|l|l}
							$\widehat p$         & $0.3$     & $0.5$    & $0.7$    	  \\ \hline
							$RAE$   & $0.00783\%$ & $0.00193\%$ & $0.04652\%$  \\
						\end{tabular}
					\end{table}
					In  Figure \ref{fig:FIgure16feba}, we show the estimated function
					\begin{equation}\label{integraledatireali}
					\int_{\widehat t^*}^{t} \widehat C(s) \frac{\widehat k^s |\log \widehat k|}{\widehat \eta+\widehat k^s}\mathrm {d} s, \qquad t\ge \widehat t^*
					\end{equation}
					the estimated mean and the sample mean of the data for $\widehat p=0.5$.
					\begin{figure}[h]
						\centering
						\subfigure[]{\includegraphics[scale=0.4]{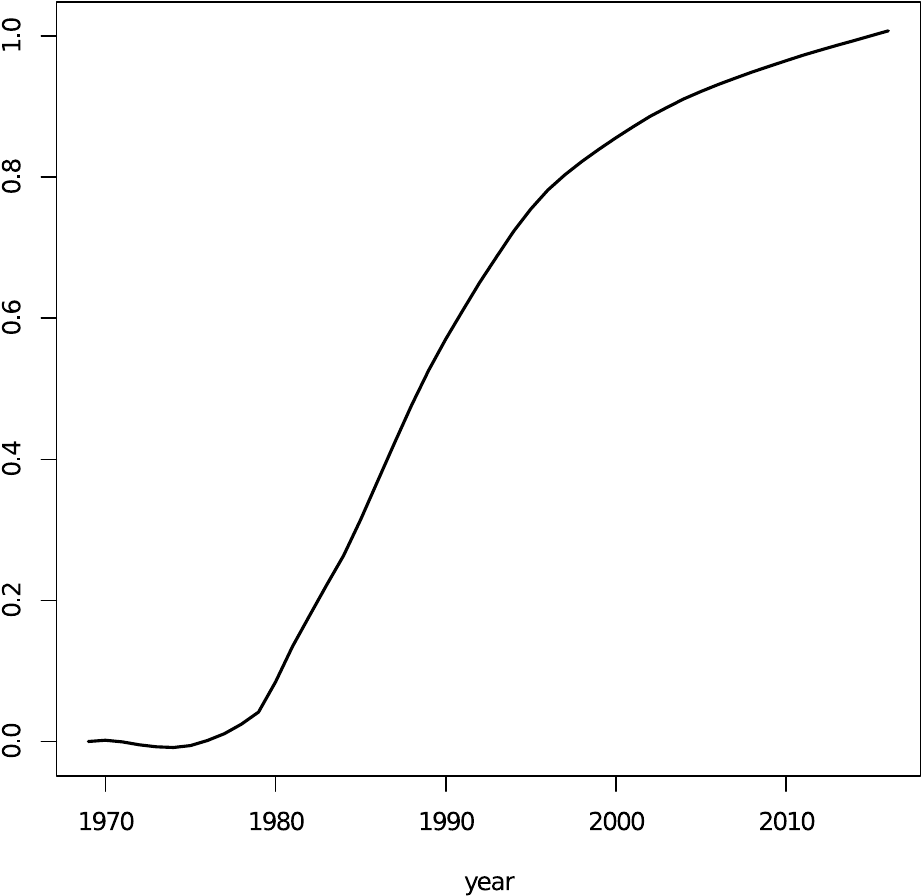}}\qquad
						\subfigure[]{\includegraphics[scale=0.4]{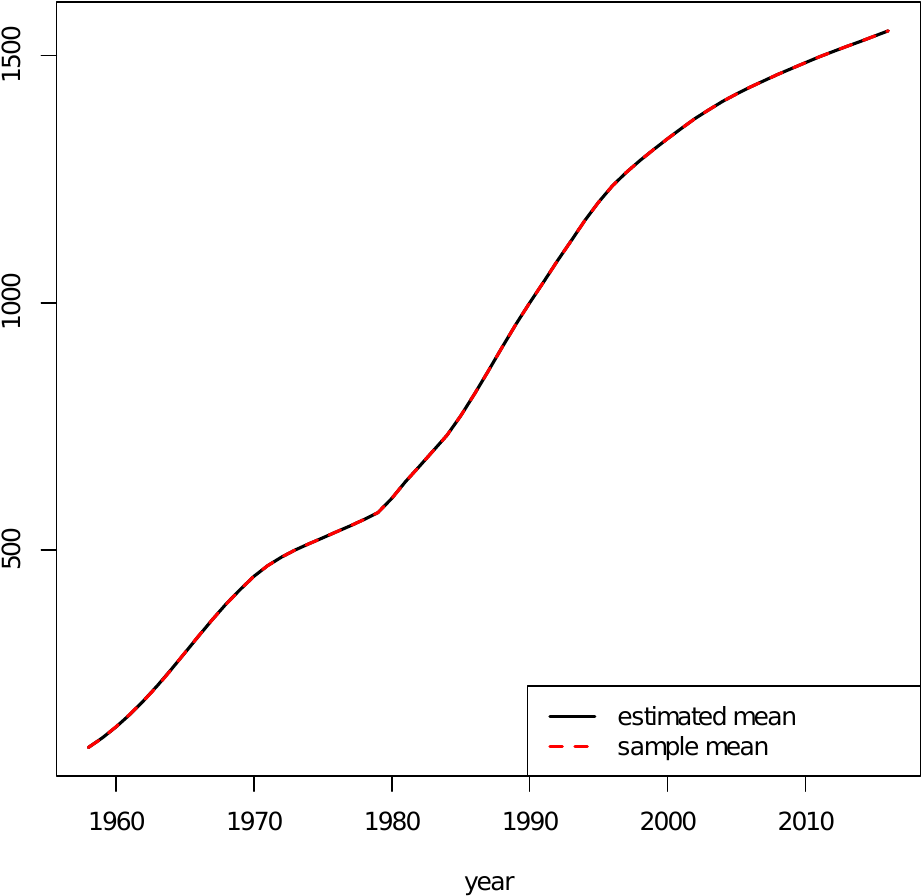}}\\
						\caption{(a) The estimated function given in Eq.\ \eqref{integraledatireali}, (b) the sample mean and the estimated mean for $\widehat p=0.5$ and parameters given in Table \ref{tab:Tabella16feb}.}
						\label{fig:FIgure16feba}
					\end{figure}
				The provided example of application confirms that the model \textcolor{blue}{introduced} in Section \ref{DiffProc} is appropriate for describing oil production
					  or other phenomena with a perturbed growth rate. Moreover, as seen in this section, for particular choices of the parameters the model may describe phenomena characterized by multiple inflection points.
					\section{Conclusions}
					During recent years many growth models have been introduced to describe real phenomena. One of them is the Richards curve, which is a generalization of the logistic function. In detail, the main difference between the Richards model and the logistic model is related to the ratio between the carrying capacity and the value of the curve at the inflection point. In the logistic case, this ratio is equal to $1/2$ whereas in the Richards case it is equal to $(1+1/q)^q$. Hence, the Richards curve seems to be more reasonable to describe phenomena in which the carrying capacity is not necessarily twice the value at the inflection point.
					In this work, we studied a special modification of the classical Richards model. Specifically, we substitute a parameter of the classical model \textcolor{blue}{with} a time-varying one aiming to describe situations in which the growth rate may be modified by external factors. The modification starts at a specific time instant which is identified as the first-crossing-time of the curve through a fixed boundary dependent on the value of the curve at the inflection point. The model has been described both from a deterministic and stochastic point of view. Two different kinds of stochastic processes have been introduced: birth-death processes and diffusion processes. The problems of parameters estimation and of the FPT have been also addressed. The determination of the MLEs has been conducted by means of suitable optimization methods. We considered gradient-free optimization algorithms since the expression of the derivative of the likelihood function is intricate, even if it is available in closed form. Furthermore, numerical methods have been used to determine approximations of the probability density function of the FPT of the modified process through constant boundaries.\\
					
					Regarding future developments, it may be interesting to study in more detail the function $C(t)$ and provide reasonable interpretation regarding its effects in real contexts. The model may be also enhanced to include the possibility that external factors cause a decrease of the growth rate.
					
					\section*{Acknowledgements}
					\textcolor{blue}{
					A.D.C.\ and P.P.\  are members of the research group GNCS of INdAM (Istituto Nazionale di Alta Matematica). This work was supported in part by the ``Ministerio de Ciencia e Innovaci\'on, Spain, under Grant PID2020-1187879GB-100, and ``Mar\'ia de Maeztu'' Excellence Unit IMAG, reference CEX2020-001105-M, funded by MCIN/AEI/ 10.13039/501100011033/, and by  the `European Union -- Next Generation EU' through MUR-PRIN 2022, project 2022XZSAFN ``Anomalous Phenomena on Regular and Irregular Domains: Approximating Complexity for the Applied Sciences'', and MUR-PRIN 2022 PNRR, project P2022XSF5H ``Stochastic Models in Biomathematics and Applications''.
				}


\begin{thebibliography}{99}
						%
						\bibitem{Albanoetal2015}
						Albano G, Giorno V, Rom\'an-Rom\'an P, Rom\'an-Rom\'an S, Torres-Ruiz F (2015)
						Estimating and determining the effect of a therapy on tumor dynamics by means
						of a modified Gompertz diffusion process.
						J Theor Biol, 364:206--219.
						https://doi.org/10.1016/j.jtbi.2014.09.014
						%
						\bibitem{Albanoetal2022}
						Albano G, Giorno V, Rom\'an-Rom\'an P, Torres-Ruiz F (2022)
						Study of a general growth model.
						Commun Nonlinear Sci Numer Simul, 107:106110.
						https://doi.org/10.1016/j.cnsns.2021.106100
						%
						\bibitem{Asadietal2020}	
						Asadi M, Di Crescenzo A, Sajadi FA, Spina S
						A generalized Gompertz growth model with applications and related birth-death processes.
						Ricerche Mat.  https://doi.org/10.1007/s11587-020-00548-y
						%
						\bibitem{Dey2019}
						Dey R, Cadigan N, Zheng N (2019)
						Estimation of the Von Bertalanffy growth model when ages are measured with error.
						J R  Stat Soc Ser C Appl  Stat 68(4):1131--1147.
						https://doi.org/10.1111/rssc.12340
						%
						\bibitem{DiCrescenzoParaggio2019}
						Di Crescenzo A, Paraggio P (2019)
						Logistic growth described by birth-death and diffusion processes.
						Mathematics 7(6):489.
						https://doi.org/10.3390/math7060489
						%
						\bibitem{DiCrescenzoetal2021}
						Di Crescenzo A, Paraggio P, Rom\'an-Rom\'an P, Torres-Ruiz F (2021)
						Applications of the multi-sigmoidal deterministic and stochastic logistic models for plant dynamics.
						Appl Math Model, 92:884--904.
						doi: https://doi.org/10.1016/j.apm.2020.11.046
						%
						\bibitem{DiCrescenzoetal2022}
						Di Crescenzo A, Paraggio P, Rom\'an-Rom\'an P, Torres-Ruiz F (2022)
						Statistical analysis and first-passage-time applications of a lognormal diffusion process
						with multi-sigmoidal logistic mean.
						Stat Papers.  https://doi.org/10.1007/s00362-022-01349-1
						%
						\bibitem{Gerhard2022}
						Gerhard D,  Moltchanova E  (2022)
						A Richards growth model to predict fruit weight.
						Aust NZ J Stat, 64:413--421.
						https://doi.org/10.1111/anzs.12380
						%
						\bibitem{Hiroshima2007}
						Hiroshima T (2007)
						Estimation of the Gentan probability using the price of logs in private plantation forest.
						J For Res 12(6):417--424.
						https://doi.org/10.1007/s10310-007-0038-4
						%
						\bibitem{Holeetal2017}
						Hole AR, Yoo HI (2017)
						The use of heuristic optimization algorithms to facilitate maximum
						simulated likelihood estimation of random parameter logit models.
						J R Stat Soc Ser C Appl Stat 66(5):997--1013.
						https://doi.org/10.1111/rssc.12209
						%
						\bibitem{Kohnetal2007}	
						K\"ohn F, Sharifi AR, Simianer H (2007)
						Modeling the growth of the Goettingen minipig.
						J Animal Sci   85(1):84--92.
						https://doi.org/10.2527/jas.2006-271
						%
						\bibitem{Laherrere}
						Laherr\`ere JH (2000)
						Learn strengths, weaknesses to understand Hubbert curve.
						Oil and Gas J 98(16):63--76.
						%
						\bibitem{motivazioni}
						Lieber RJ (1979)
						Europe and America in the world energy crisis.
						Intern Affairs  55(4):531--545.
						https://doi.org/10.2307/2617063
						%
						%
						\bibitem{Lvetal2007}
						Lv Q,  Pitchford JW (2007)
						Stochastic Von Bertalanffy models with applications to fish recruitment.
						J Theor Biol 244(4):640--655.
						https://doi.org/10.1016/j.jtbi.2006.09.009
						%
						\bibitem{Macedoetal2021}
						Mac\^edo AMS, Brum AA, Duarte-Filho GC, Almeida FAG, Ospina R, Vasconcelos GL (2021)
						A comparative analysis between a SIRD compartmental model and the Richards growth model.
						Trends Comput Appl Math 22(4), 545--557.
						https://doi.org/10.5540/tcam.2021.022.04.00545
						%
						\bibitem{Majeeetal2022}
						Majee S, Jana S, Khatua A, Kar TK (2022)
						Growth of single species population: a novel approach.
						In: Banerjee S, Saha A (eds) Nonlinear Dynamics and Applications.
						Springer Proceedings in Complexity. Springer, Cham.
						https://doi.org/10.1007/978-3-030-99792-2\underline{ }76
						%
						\bibitem{Matisetal2011}
						Matis JH, Mohammed KT, Al-Muhammed MJ (2011)
						Mitigating autocorrelation in Richards growth model analysis using
						incremental growth data with application to turkey growth.
						J Indian Soc Agricultural Stat 65(1):69--76.
						%
						%
						\bibitem{Nahashonetal2006}
						Nahashon SN, Aggrey SE, Adefope NA, Amenyenu A, Wright D (2006)
						Growth characteristics of pearl gray guinea fowl as predicted by the Richards, Gompertz, and logistic models.
						Poult Sci 85(2):359--363.
						https://doi.org/10.1093/ps/85.2.359
						%
						\bibitem{Richards1959}
						Richards FJ (1959)
						A flexible growth function for empirical use.
						J Exp Bot 10(2):290--301.
						https://doi.org/10.1093/jxb/10.2.290
						%
						\bibitem{Romanetal2008}
						Rom\'an-Rom\'an P,  Serrano-P\'erez JJ,  Torres-Ruiz F (2008)
						First-passage-time location function: application to determine first-passage-time densities in diffusion processes.
						Comput Stat Data Anal, 52:4132--4146.
						https://doi.org/10.1016/j.csda.2008.01.017
						%
						\bibitem{RomanTorres2018}
						Rom\'an-Rom\'an P,  Serrano-P\'erez JJ,  Torres-Ruiz F (2018)
						Some notes about inference for the lognormal diffusion process with exogenous factors.
						Mathematics, 6:85.
						https://doi.org/10.3390/math6050085
						%
						\bibitem{fptdApprox}
						Rom\'an-Rom\'an P, Serrano-P\'erez JJ, Torres-Ruiz F,
						fptdApprox: Approximation of First-Passage-Time Densities for Diffusion Processes, version 2.4, August, 2022.
						https://cran.r-project.org/web/packages/fptdApprox/
						%
						\bibitem{Romanetal2015}
						Rom\'an-Rom\'an P, Torres-Ruiz F (2015)
						A stochastic model related to the Richards-type growth curve.
						Estimation by means of simulated annealing and variable neighborhood search.
						Appl Math Comput, 266, 579--598.
						https://doi.org/10.1016/j.amc.2015.05.096
						%
						\bibitem{Russoetal2009}
						Russo T,  Baldi P,  Parisi A,  Magnifico G,  Mariani S,  Cataudella S (2009)
						L\'evy processes and stochastic von Bertalanffy models of growth, with application to fish population analysis.
						J Theor Biol 258(4):521--529.
						https://doi.org/10.1016/j.jtbi.2009.01.033
						%
						\bibitem{Smirnova}
						Smirnova A, Pidgeon B, Chowell G, Zhao Y (2022)
						The doubling time analysis for modified infectious disease Richards model with applications to COVID-19 pandemic.
						Math Biosci Eng 19(3): 3242--3268.
						https://doi.org/10.3934/mbe.2022150
						%
						\bibitem{Tan1986}
						Tan WY (1986)
						A stochastic Gompertz birth-death process.
						Stat Prob Lett 4(1):25--28.
						https://doi.org/10.1016/0167-7152(86)90034-9
						%
						\bibitem{Vera2008}
						Vera JF, D\'{\i}­az-Garc\'{\i}­a JA (2008)
						A global simulated annealing heuristic for the three-parameter lognormal maximum likelihood estimation.
						Comput Statist Data Anal 52(12):5055--5065.
						https://doi.org/10.1016/j.csda.2008.04.033
						%
						\bibitem{Wangetal2012}
						Wang XS, Wu J,  Yang Y (2012)
						Richards model revisited: validation by and application to infection dynamics.
						J Theor Biol 313:12--9,
						https://doi.org/10.1016/j.jtbi.2012.07.024
						%
						\bibitem{linkdata}
						\url{https://ourworldindata.org/grapher/oil-production-by-country?country=~FRA} (accessed 16 Feb 2023)
						%
					\end{thebibliography}
				\end{document}